\newcommand{\Ohtilda}{\ensuremath{\widetilde{\mathcal{O}}}\xspace}
\newcommand{\RR}{\mathbb{R}\xspace}
\newcommand{\NN}{\mathbb{N}\xspace}
\renewcommand{\epsilon}{\varepsilon}
\newcommand{\eps}{\varepsilon}
\newcommand{\tOh}{\Ohtilda}
\theoremstyle{definition}
\newtheorem{problem}[theorem]{Problem}
\newtheorem{hypo}[theorem]{Hypothesis}
\title{\texorpdfstring{Combinatorial Designs Meet Hypercliques: \\Higher Lower Bounds for Klee's Measure Problem and Related Problems in Dimensions $d\ge 4$}{Combinatorial Designs Meet Hypercliques: Higher Lower Bounds for Klee's Measure Problem and Related Problems in Dimensions d >= 4}}
\titlerunning{Higher Lower Bounds for Klee's Measure Problem and Related Problems in $\RR^d, d\ge 4$}
\author{Egor Gorbachev}{Saarbrücken Graduate School of Computer Science, Saarland Informatics Campus, Saarbrücken, Germany}{peltorator@gmail.com}{}{}
\author{Marvin Künnemann}{RPTU Kaiserslautern-Landau, Germany}{kuennemann@cs.uni-kl.de}{}{}
\authorrunning{E. Gorbachev and M. Künnemann} 
\keywords{Fine-grained complexity theory, non-combinatorial lower bounds, computational geometry, clique detection} 
\newcommand{\ind}{\mathrm{ind}}
\newcommand{\ellmin}{\ell_{\mathrm{min}}}
\newcommand{\ellmax}{\ell_{\mathrm{max}}}
\begin{document}

\maketitle

 \begin{abstract}
	 Klee's measure problem (computing the volume of the union of $n$ axis-parallel boxes in $\mathbb{R}^d$) is well known to have $n^{\frac{d}{2}\pm o(1)}$-time algorithms (Overmars, Yap, SICOMP'91; Chan FOCS'13). Only recently, a conditional lower bound (without any restriction to ``combinatorial'' algorithms) could be shown for $d=3$ (Künnemann, FOCS'22). Can this result be extended to a tight lower bound for dimensions $d\ge 4$?

	 In this paper, we formalize the technique of the tight lower bound for $d=3$ using a combinatorial object we call \emph{prefix covering design}. We show that these designs, which are related in spirit to combinatorial designs, directly translate to conditional lower bounds for Klee's measure problem and various related problems. By devising good prefix covering designs, we give the following lower bounds for Klee's measure problem in $\mathbb{R}^d$, the depth problem for axis-parallel boxes in $\mathbb{R}^d$, the largest-volume/max-perimeter empty (anchored) box problem in $\mathbb{R}^{2d}$, and related problems:
	 \begin{itemize}
		 \item $\Omega(n^{1.90476})$ for $d=4$,
		 \item $\Omega(n^{2.22222})$ for $d=5$,
		 \item $\Omega(n^{d/3 + 2\sqrt{d}/9-o(\sqrt{d})})$ for general $d$,
	\end{itemize}
	 assuming the 3-uniform hyperclique hypothesis.
	 For Klee's measure problem and the depth problem, these bounds improve previous lower bounds of $\Omega(n^{1.777...}), \Omega(n^{2.0833...})$ and $\Omega(n^{d/3 + 1/3 + \Theta(1/d)})$ respectively.

	 Our improved prefix covering designs were obtained by (1) exploiting a computer-aided search using problem-specific insights as well as SAT solvers, and (2) showing how to transform combinatorial \emph{covering designs} known in the literature to strong prefix covering designs. In contrast, we show that our lower bounds are close to best possible using this proof technique. 
 \end{abstract}

\section{Introduction} \label{sec:introduction}

For various problems in computational geometry, the best known algorithms display a running time of the form $n^{\Theta(d)}$ where $d$ denotes the number of dimensions: Klee's measure problem and the depth problem for axis-parallel boxes in $\mathbb{R}^d$ can be solved in time $n^{d/2\pm o(1)}$~\cite{OvermarsY91, Chan10, Chan13}, a recent algorithm~\cite{Chan21} computes the largest-volume empty axis-parallel box among a given set of points in time $\tOh(n^{(5d+2)/6})$, the star discrepancy can be computed in time $O(n^{d/2+1})$~\cite{DobkinEM96}, the maximum-weight rectangle problem can be solved in time $O(n^d)$~\cite{BarbayCNP14}, to name few examples.
Indeed, for all listed problems, it can be shown~\cite{Chan10, GiannopoulosKWW12, BarbayCNP14} that an $n^{o(d)}$-time algorithm would refute the Exponential Time Hypothesis (ETH). 
Thus, the subsequent challenge is to determine running times $n^{f(d)}$ with $f(d)= \Theta(d)$ that are optimal under fine-grained complexity assumptions. By the nature of these running times (which quickly increase with $d$), it is particularly interesting to determine optimal time bounds for small dimensions such as $d\in \{2,3,4,5\}$.

For some of these problems, strong conditional lower bounds are known: For Klee's measure problem and the depth problem, Chan~\cite{Chan10} gives a tight conditional lower bound of $n^{d/2-o(1)}$ for \emph{combinatorial} algorithms -- roughly speaking, algorithms that avoid the algebraic techniques underlying fast matrix multiplication algorithms. When considering general algorithms (not only combinatorial ones), tight lower bounds are only known for \emph{weighted} problems or small dimensions: For the weighted depth problem and the maximum-weight rectangle problem, tight lower bounds of $n^{d/2-o(1)}$ and $n^{d-o(1)}$, respectively, can be shown under the Weighted $k$-Clique Hypothesis~\cite{BackursDT16}. Showing strong lower bounds for the simpler, \emph{unweighted} problems appears to be more difficult, however. For Klee's measure problem and the unweighted depth problem, a recent result shows an $n^{d/(3-3/d)-o(1)}$ conditional lower bound under the 3-uniform hyperclique hypothesis~\cite{Kunnemann22}, which yields a tight bound for $d=3$, but not for $d\ge 4$.

Thus, the motivating question of this paper is the following:
\begin{center}
\emph{Can we prove conditional optimality of known algorithms for Klee's measure problem, the depth problem and related problems for small dimensions $d\ge 4$, such as $d\in \{4,5,6\}$?}
\end{center}

\subsection{Our Results}

As a starting point of this work, we formalize the approach used in~\cite{Kunnemann22} to obtain tight hardness for $d=3$. To this end, we define the following combinatorial object, which we term \emph{prefix covering designs} (due to its conceptual similarity to certain combinatorial designs\footnote{In fact, we will later establish a formal connection between these concepts.}).

In the following definition, let $\binom{S}{t}$ denote the set of $t$-element subsets of $S$.

\begin{definition}
    Let $d, K, \alpha \in \mathbb{N}$ with $d \ge 3$ and $K \ge 4$. A \emph{$(d,K,\alpha)$-prefix covering design} consists of $d$ sequences $s_1,\dots, s_d$ over $[K]$ with the following properties.
	\begin{itemize}
    \item \textbf{Triplet condition:} For every $\{a,b,c\} \in \binom{[K]}{3}$, there are $i, i', i''\in [d]$ and $\ell,\ell', \ell''\in \mathbb{N}_0$ such that
		\begin{itemize}
			\item each element of $\{a,b,c\}$ is contained in $s_{i}[..\ell]$, $s_{i'}[..\ell']$, or $s_{i''}[.. \ell'']$. (Here, $s[.. \ell]$ denotes the prefix of the first $\ell$ elements of $s$.)
			\item $\ell + \ell' + \ell'' \le \alpha$.
		\end{itemize}
	\item \textbf{Singleton condition:} For every $x\in [K]$ occurring more than once in $s_1,\dots,s_d$, define $\ell_{\min}(x)$ ($\ell_{\max}(x)$) as the minimal (maximal) $\ell$ such that there is some $i$ with $s_i[\ell] = x$. Then we have
		\begin{itemize}
			\item $\ell_{\min}(x) + \ell_{\max}(x) \le \alpha + 1$.
		\end{itemize}
	\end{itemize}
\end{definition}

As an example, it is straightforward to see that for any $d$, the sequences $s_1 = (1,d+1), s_2 = (2, d+1), \dots, s_d = (d,d+1)$ constitute a $(d,d+1, 3)$ prefix covering design.\footnote{For the triplet condition, note that the triplet $\{a,b,c\}\in \binom{[d]}{3}$ is contained in the prefixes $s_a[.. 1], s_b[.. 1], s_c[.. 1]$ of total length $\alpha=3$ and that any triplet $\{a,b,d+1\}$ with $\{a,b\}\in \binom{[d]}{2}$ is contained in the prefixes $s_a[.. 2], s_b[.. 1]$ of total length $\alpha=3$. The singleton condition only needs to be checked for $x=d+1$, for which we note that $\ell_{\min}(d+1)=\ell_{\max}(d+1)=2$ and thus $\ell_{\min}(d+1)+\ell_{\max}(d+1)=4 \le \alpha + 1$ for $\alpha=3$.} 

Prefix covering designs constitute the core of the proof technique used in~\cite{Kunnemann22}. Specifically, we show that the existence of good prefix covering designs directly leads to strong lower bounds for several problems (these reductions are implicit in~\cite{Kunnemann22} or adapted to prefix covering designs from \cite{GiannopoulosKWW12}).
\begin{proposition}\label{prop:basic_reductions}
	Let $d, K, \alpha \in \NN$ such that there exists a $(d,K,\alpha)$ prefix covering design. Unless the 3-uniform Hyperclique Hypothesis fails, there is no $\epsilon > 0$ such that there exists an $O(n^{\frac{K}{\alpha} - \epsilon})$-time algorithm for any of the following problems:
	\begin{itemize}
		\item Klee's Measure problem in $\mathbb{R}^d$,
		\item Depth problem in $\mathbb{R}^d$,
		\item Largest-Volume Empty Anchored Box problem in $\mathbb{R}^{2d}$,
		\item Maximum-Perimeter Empty Anchored Box problem in $\mathbb{R}^{2d}$.
	\end{itemize}
\end{proposition}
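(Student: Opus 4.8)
The plan is to derive all four lower bounds from a single reduction to the depth problem in $\RR^d$ driven by the given $(d,K,\alpha)$ prefix covering design; Klee's measure problem is then obtained from the analogous coverage/volume version of the same construction (implicit in \cite{Kunnemann22}), and the two empty anchored box problems in $\RR^{2d}$ follow by composing with the point--box duality reduction of \cite{GiannopoulosKWW12}, under which the maximum depth of a $d$-dimensional box arrangement is read off from the volume (respectively perimeter) of the largest empty anchored box. I would first observe that the multipartite form of $3$-uniform $K$-hyperclique --- given vertex sets $V_1,\dots,V_K$ of size $n$ and, for each $\{a,b,c\}\in\binom{[K]}{3}$, a relation $E_{abc}\subseteq V_a\times V_b\times V_c$, decide whether some transversal $(y_1,\dots,y_K)$ satisfies $(y_a,y_b,y_c)\in E_{abc}$ for all triples --- inherits the hardness of ordinary $3$-uniform $K$-hyperclique: taking $K$ copies of the vertex set and letting $E_{abc}$ be all $(u,v,w)$ with $u,v,w$ pairwise distinct and $\{u,v,w\}$ a hyperedge turns $K$-hypercliques into transversal hypercliques and back. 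Since $K\ge4$, an $O(n^{K-\eps})$-time algorithm for the multipartite problem would therefore refute the $3$-uniform Hyperclique Hypothesis.

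For the construction, put $L=\sum_{i=1}^d|s_i|$ and work inside the grid $C=\prod_{i=1}^d[n^{|s_i|}]$, writing the $i$-th coordinate of a point $z$ as a base-$n$ tuple in $[n]^{|s_i|}$, so that forcing a length-$\ell$ prefix of coordinate $i$ restricts that coordinate to an interval --- this is exactly what the word ``prefix'' in the design buys us. Thus $z$ assigns a value $z_i^{(p)}\in[n]$ to every slot $(i,p)$, and the intent is that slot $(i,p)$ holds a candidate value for part $s_i[p]$. I then add two box families. For each triple $\{a,b,c\}$, the triplet condition supplies prefixes of (not necessarily distinct) coordinates $i,i',i''$ with $\ell+\ell'+\ell''\le\alpha$ whose slots contain an occurrence of each of $a,b,c$; fixing one such occurrence as the designated slot of each, I include, for every one of the at most $n^{\ell+\ell'+\ell''}\le n^{\alpha}$ assignments of values to all slots of these three prefixes, the box that forces exactly these three prefixes and is unrestricted in the remaining coordinates --- but only when the three designated values form a triple in $E_{abc}$. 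For each part $m$ occurring more than once, with first occurrence $(i_0,\ell_{\min}(m))$, and for every further occurrence $(i',p')$ (so $p'\le\ell_{\max}(m)$), I include, for every one of the at most $n^{\ell_{\min}(m)+p'-1}\le n^{\alpha}$ assignments of values to the slots of $s_{i_0}[..\ell_{\min}(m)]$ and $s_{i'}[..p']$ that agree on the two slots holding $m$ (the common value being chosen only once), the box that forces these two prefixes and is unrestricted elsewhere. In all this is $N=O(n^{\alpha})$ boxes in $\RR^d$.

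Now fix a point $z$. Each triple's family contains at most one box through $z$ --- the one matching $z$ on the relevant slots --- and that box is present iff $z$'s values on the three designated slots form a triple in $E_{abc}$; likewise each consistency family for a repeated occurrence contains at most one box through $z$, present iff $z$ carries equal values on the two linked $m$-slots. Hence the depth of $z$ is at most $T:=\binom{K}{3}+(L-K)$, with equality iff every consistency box fires --- which forces, part by part, all occurrences of each repeated part $m$ to carry a common value $y_m$, i.e.\ $z$ to encode a transversal --- and every hyperedge box fires --- which, given consistency, is exactly the statement that $(y_a,y_b,y_c)\in E_{abc}$ for all triples. Conversely, any transversal hyperclique $(y_1,\dots,y_K)$ yields the point assigning $y_{s_i[p]}$ to slot $(i,p)$, which has depth exactly $T$. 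So the maximum depth of the arrangement equals $T$ iff a transversal hyperclique exists, whence an $O(N^{K/\alpha-\eps})$-time depth algorithm solves the multipartite problem in $O(n^{K-\eps\alpha})$ time and refutes the $3$-uniform Hyperclique Hypothesis; this yields the claimed bound for the depth problem, and, via the reductions mentioned above, for the other three problems.

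The crux is the box budget, specifically for the consistency boxes: the reason the singleton condition's slack $\ell_{\min}(m)+\ell_{\max}(m)\le\alpha+1$ is enough is that such a box pins the shared value of $m$ only \emph{once}, so the naive count $n^{\ell_{\min}(m)+p'}\le n^{\alpha+1}$ improves to $n^{\ell_{\min}(m)+p'-1}\le n^{\alpha}$; this is also why each further occurrence of $m$ is chained to its \emph{first} occurrence rather than to a neighbouring one, as a pairwise chain could require prefixes of total length up to $2\alpha$. One must also check that reaching depth $T$ truly enforces global consistency --- so that one recovers a genuine transversal hyperclique rather than a ``Frankenstein'' assignment that reads different values for one part in different triples --- and, for Klee's measure problem, that the coverage version in which exactly the transversal hypercliques are left uncovered still fits the same budget; this last point is where following \cite{Kunnemann22} is most useful.
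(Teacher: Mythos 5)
Your reduction to the depth problem is correct, and it takes a genuinely different route from the paper's. The paper reduces hyperclique to the \emph{coverage} problem: it builds boxes that cover exactly the cells corresponding to inconsistent encodings and non-edges, so that an uncovered cell remains iff a hyperclique exists; depth and Klee's measure then follow because coverage is a special case of Klee's measure and reduces to depth via the standard $2^d$-complement trick. You instead build \emph{positive} witness boxes (one family per triple, containing only the assignments whose designated values form a hyperedge, and one family per non-primary occurrence, containing only the agreeing assignments), so that the maximum depth equals $\binom{K}{3}+(L-K)$ iff a hyperclique exists. Since the boxes within each family are pairwise disjoint, the counting is sound, the box budget $O(n^\alpha)$ is justified exactly as you say (the singleton condition's $\alpha+1$ slack is absorbed because the shared value is chosen once), and chaining every copy to the primary occurrence does give global consistency. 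For the depth problem alone this is a clean, self-contained proof.

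The gap is in the other three problems. Klee's measure problem does \emph{not} follow from your depth construction: the reduction between coverage and depth goes in the wrong direction for that, so you genuinely need the coverage instance, and there the complement of ``the two $m$-slots agree'' (for a fixed assignment of the remaining forced slots) is not a box but a union of two boxes --- this is the $I_<$/$I_>$ interval-splitting trick in the paper's consistency-checking boxes $C^<$ and $C^>$, which your proposal never describes and which is the one place the singleton condition is used in a way that differs from your counting. Deferring this to ``the analogous coverage version implicit in \cite{Kunnemann22}'' leaves the key step unproved. Similarly, there is no ``point--box duality'' under which the maximum depth of a box arrangement in $\RR^d$ is read off from the largest empty anchored box in $\RR^{2d}$; the actual argument (following \cite{GiannopoulosKWW12}) redoes the \emph{exclusion}-style construction directly: scaffold points in each two-dimensional factor $\RR_i^2$ force any sufficiently large anchored box to commit to a value $x_i\in\{0,\dots,U-1\}$, and checking \emph{points} with the same semantics as the coverage boxes rule out inconsistent and non-edge choices. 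So the empty-box claims also rest on a reduction that does not exist in the form you invoke. In short: depth is fully proved, but the remaining three items need the coverage/exclusion construction, which is not a corollary of your positive-box argument.
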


Beyond these problems, similar reductions are also possible for related problems such as the Bichromatic Box problem in $\mathbb{R}^{2d}$ (given sets of red and blue points, find the axis-parallel box containing the maximum number of blue points while avoiding any red point) and various related discrepancy problems such as the Star Discrepancy, see~\cite{GiannopoulosKWW12}. Note that there is a blow-up in the dimension for the Empty Anchored Box problems, which turns out to be unavoidable assuming the 3-uniform hyperclique hypothesis, as there are $O(n^{(1/2-\epsilon)d})$-algorithms for these problems (see below). At this point, we only give a rough sketch of the reduction, with the full proof deferred to Section~\ref{sec:reductions}, where we also formally define all listed problems and discuss the 3-uniform hyperclique hypothesis. 
\begin{proof}[Proof sketch for Proposition~\ref{prop:basic_reductions}]
	For each problem, we give a reduction from the 3-uniform hyperclique problem: Given a 3-uniform hypergraph $G=(V, E)$ with $V=V^{(1)}\cup \cdots \cup V^{(K)}$ and $|V^{(1)}|=\cdots=|V^{(K)}|=n$, determine whether there are $v^{(1)}\in V^{(1)},\dots, v^{(K)}\in V^{(K)}$ that form a clique in $G$. The 3-uniform hyperclique hypothesis states that this problem requires running time $n^{K-o(1)}$.

	Intuitively, a special case of each of the problems listed above is to find an axis-parallel box $Q$ satisfying certain properties. More specifically, any candidate box $Q$ is given by choosing some value $v_i\in \{0,\dots, U-1\}$ for each dimension $i\in [d]$. We use a $(d,K,\alpha)$ prefix covering design $s_1,\dots, s_d$ to interpret the values $v_1,\dots, v_d$ as choices of vertices in $V^{(1)},\dots, V^{(K)}$: Namely, with $s_i=(s_i[1],\dots, s_i[L])$, we think of any number $v_i \in \{0,\dots, U-1\}$ with $U=n^L$ as a base-$n$ number $v_i=(v_i[1],\dots, v_i[L])$. We interpret $(v_i[1],\dots, v_i[L])\in \{0,\dots, n-1\}^L$ as choosing the $(v_i[\ell]+1)$-st vertex in $V^{(s_i[\ell])}$ for all $1\le \ell \le L$. 

	With this encoding fixed, it remains to ensure that the only true solutions $Q$ encode a \emph{clique} in $G$. This consists of two tasks: (1) ensuring that the candidate box $Q$ chooses vertices consistently, i.e., for each $V^{(x)}$ such that $x$ occurs in more than one $s_i$, we need to make sure that the same vertex is chosen in each occurrence, and (2) ensuring that the chosen vertices form a clique. Crucially, for both tasks, our geometric problems allow us to exclude candidate boxes $Q$ where the $v_i$ have certain prefixes. Specifically, due to the singleton condition, we only need to construct $O(n^\alpha)$ boxes to ensure consistency of the remaining candidate solutions $Q$. Likewise, the triplet condition is used to ensure that all candidate boxes $Q$ that encode a non-clique (for which one of the triplets $\{v^{(a)}, v^{(b)}, v^{(c)}\}$ is not an edge in $G$) are excluded, using only $O(n^\alpha)$ additional boxes. In total, this creates an instance of size $O(n^\alpha)$ for the target problem, which yields an $n^{\frac{K}{\alpha}-o(1)}$ lower bound under the 3-uniform hyperclique hypothesis.
\end{proof}

From Proposition~\ref{prop:basic_reductions}, we obtain the following direct corollary.

\begin{corollary}
	For any $d\ge 3$, let $\gamma_d \coloneqq \sup \{ \frac{K}{\alpha} \mid \text{there is a } (d,K,\alpha)\text{ prefix covering design}\}$. Then for no $\eps>0$ there exists an $O(n^{\gamma_d-\epsilon})$-algorithm for any of the problems listed in Proposition~\ref{prop:basic_reductions}, unless the 3-uniform hyperclique hypothesis fails.
\end{corollary}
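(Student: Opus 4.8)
The plan is to derive the statement directly from Proposition~\ref{prop:basic_reductions} together with the definition of $\gamma_d$ as a supremum; no new combinatorial or geometric content is needed. I would argue by contradiction. Fix one of the four problems $P$ listed in Proposition~\ref{prop:basic_reductions} and suppose that for some $\eps > 0$ there is an $O(n^{\gamma_d - \eps})$-time algorithm $\mathcal{A}$ for $P$. The goal is then to conclude that the 3-uniform hyperclique hypothesis fails.

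First I would use the definition of the supremum to extract a witnessing design: since $\gamma_d = \sup\{K/\alpha : \text{a }(d,K,\alpha)\text{ prefix covering design exists}\}$, there is a $(d,K,\alpha)$ prefix covering design with $\tfrac{K}{\alpha} > \gamma_d - \tfrac{\eps}{2}$. Consequently $\gamma_d - \eps < \tfrac{K}{\alpha} - \tfrac{\eps}{2}$, so $\mathcal{A}$ is in particular an $O(n^{\frac{K}{\alpha} - \eps/2})$-time algorithm for $P$. Now I would apply Proposition~\ref{prop:basic_reductions} to this $(d,K,\alpha)$ prefix covering design with the choice $\eps' \coloneqq \eps/2 > 0$: it states that, unless the 3-uniform hyperclique hypothesis fails, there is no $O(n^{\frac{K}{\alpha} - \eps'})$-time algorithm for $P$. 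Since $\mathcal{A}$ is precisely such an algorithm, the hyperclique hypothesis must fail, which is the desired contradiction. One should note that $K$ and $\alpha$ may depend on the chosen problem $P$, but since Proposition~\ref{prop:basic_reductions} treats all four problems uniformly this causes no difficulty, as long as $P$ is fixed throughout the argument.

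For completeness I would also address the degenerate case $\gamma_d = \infty$ (i.e., the set of achievable ratios $K/\alpha$ is unbounded): here the claim should be read as ``for every constant $c$ there is no $O(n^{c})$-time algorithm for $P$ unless the hyperclique hypothesis fails'', and the same argument applies after choosing a design with $\tfrac{K}{\alpha} > c$ and setting $\eps' \coloneqq \tfrac{K}{\alpha} - c$. (In fact, later sections show $\gamma_d$ is finite, so this case does not actually arise, but the corollary does not rely on that.) The main ``obstacle'' is that there is essentially none: the entire substance of the statement lies in Proposition~\ref{prop:basic_reductions}, and this corollary is merely the observation that lower bounds of the form $n^{K/\alpha - o(1)}$ holding for every admissible pair $(K,\alpha)$ automatically yield the bound $n^{\gamma_d - o(1)}$ at the supremum exponent, in the precise $\eps$-quantified sense stated above.
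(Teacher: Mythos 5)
Your proof is correct and is exactly the (essentially unique) supremum argument the paper has in mind when it calls this a "direct corollary" of Proposition~\ref{prop:basic_reductions}; the paper itself gives no further justification. The extraction of a witnessing design with $\frac{K}{\alpha} > \gamma_d - \frac{\eps}{2}$ and the application of the proposition with $\eps' = \eps/2$ is precisely the right way to make the deduction explicit.
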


The tight conditional lower bound~\cite{Kunnemann22} for Klee's Measure problem and the depth problem in $\mathbb{R}^3$ follows from the following construction: For any $g\in \NN$, we set $K=3g$, write $[K]=\{a_1,\dots, a_g, b_1,\dots,b_g, c_1,\dots, c_g\}$ and observe that \[s_1= (a_1,\dots, a_g, b_g, \dots, b_1), s_2 = (b_1,\dots,b_g, c_g, \dots, c_1), s_3 = (c_1,\dots,c_g, a_g, \dots, a_1)\]
provide a $(3, 3g, 2g+1)$ prefix covering design. Thus, we obtain $\gamma_3 \ge \lim_{g\to\infty} \frac{3g}{2g+1} = \frac{3}{2}$, establishing an $n^{\frac{3}{2}-o(1)}$ conditional lower bound for KMP in $\mathbb{R}^3$ and related problems.\footnote{It is not hard to prove that $\gamma_3 \le \frac{3}{2}$, resulting in $\gamma_3 = \frac{3}{2}$. This raises the question whether we can find exact values of $\gamma_d$ for $d\ge 4$.}

Given the direct applicability of prefix covering designs to Klee's measure problem, the depth problem and many related problems, it is only natural to ask what the highest obtainable lower bounds are using this technique. For one, designing better prefix covering designs gives stronger lower bounds. On the other hand, establishing limits for prefix covering designs may indicate potential for improved algorithms for KMP and related problems (such a phenomenon has been observed in other contexts, e.g.,~\cite{BringmannK17}).

Our first result is that prefix covering designs cannot establish a higher lower bound than $n^{\frac{d}{3} + O(\sqrt{d})}$.  The following bound will be proved in Section~\ref{sec:limits}.
\begin{proposition}\label{prop:square_root_lower_bound}
	We have that $\gamma_d \le \frac{d}{3(1-\sqrt{\frac{2}{d}})} = \frac{d}{3} + \sqrt{\frac{2}{9}}\cdot \sqrt{d} + o(\sqrt{d})$.
\end{proposition}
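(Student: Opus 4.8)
The plan is to bound $\alpha$ from below in terms of $K$ for any $(d,K,\alpha)$ prefix covering design, which directly yields an upper bound on $K/\alpha$. Fix such a design with sequences $s_1,\dots,s_d$. The basic intuition is a counting/volume argument: the triplet condition forces, for \emph{every} triple $\{a,b,c\}$, the existence of prefixes of total length at most $\alpha$ covering all three elements; summing or averaging this requirement over all $\binom{K}{3}$ triples should force $\alpha$ to be large relative to $K$, with the $\sqrt{d}$ slack coming from the fact that we have $d$ sequences to distribute the covering work across.

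First I would set up the key quantity: for a prefix length budget, consider for each $i\in[d]$ and each threshold $t$ the prefix $s_i[..t]$ and the set of distinct elements it contains. A convenient reformulation is to assign to each element $x\in[K]$ and each sequence $s_i$ the first position $p_i(x)$ at which $x$ appears in $s_i$ (or $\infty$ if it does not appear). Then $\{a,b,c\}$ being coverable with budget $\alpha$ means: there is an assignment of $a,b,c$ to (not necessarily distinct) indices $i,i',i''$ such that $p_i(a)+p_{i'}(b)+p_{i''}(c)\le\alpha$ — essentially, $\min_i p_i(a)+\min_i p_i(b)+\min_i p_i(c)\le\alpha$ when the three minimizing indices are distinct, but one must be careful when two of $a,b,c$ want the same sequence, since then they must both fit inside one prefix and one should take the \emph{larger} of their two first-occurrence positions in that sequence. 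Handling this interaction cleanly is, I expect, the main obstacle: the naive bound "$\sum$ of the three smallest first-occurrence depths $\le\alpha$" is false, and one must argue that collisions (two of the triple forced into the same sequence) are rare enough, or alternatively restrict attention to a large subset of $[K]$ on which no bad collisions occur.

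Concretely, I would proceed as follows. (1) For each $x\in[K]$, let $\delta(x)=\min_{i\in[d]} p_i(x)$ be its \emph{depth}, i.e.\ the smallest prefix length of any sequence that already contains $x$, and let $\iota(x)\in[d]$ be a sequence achieving it. (2) Observe that each sequence $s_i$ can account for only a bounded number of elements at small depth: at most $t$ distinct elements have $p_i(\cdot)\le t$, so across all $d$ sequences at most $dt$ elements have depth $\le t$. Hence if we sort $[K]$ by depth, the $j$-th smallest depth is at least $\lceil j/d\rceil$, giving $\sum_{x\in[K]}\delta(x)\ge \sum_{j=1}^{K}\lceil j/d\rceil \ge \frac{K^2}{2d}(1-o(1))$, so the \emph{average} depth is $\ge \frac{K}{2d}(1-o(1))$. (3) Now take a uniformly random triple $\{a,b,c\}$. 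With the right book-keeping — this is where I would either invoke the singleton condition or pass to a large "collision-free" subset — the triplet condition gives $\alpha \ge \delta(a)+\delta(b)+\delta(c)$ in expectation up to lower-order corrections, hence $\alpha \ge 3\cdot\frac{K}{2d}(1-o(1)) = \frac{3K}{2d}(1-o(1))$. That only yields $\gamma_d\le\frac{2d}{3}(1+o(1))$, which is far too weak, so the real argument must be sharper: rather than the average depth, one needs that the three elements of a random triple land in three \emph{different} sequences whose prefixes must then all be long, and one must exploit that a single sequence covering many triple-elements at small total depth forces those elements to cluster near the front of that one sequence — a genuine capacity constraint. I would therefore recast step (2)–(3) as a fractional relaxation: assign to triple $\{a,b,c\}$ a "charge" of $\alpha$ and distribute it among the $\le 3$ prefixes used; the total charge is $\alpha\binom{K}{3}$, while each sequence $s_i$ and each prefix length $\ell$ can be "used" by only $O(\ell^3)$ triples (those all of whose used elements among that prefix lie in $s_i[..\ell]$), and a prefix of length $\ell$ contributes cost $\ell$; optimizing the resulting allocation LP over the $d$ sequences and lengths up to $\alpha$ gives $\alpha\binom{K}{3}\lesssim d\cdot\sum_{\ell\le\alpha}\ell\cdot\ell^{?}$, and solving for the crossover yields $K\lesssim 3\alpha(1-\sqrt{2/d})$, i.e.\ the claimed $\gamma_d\le\frac{d}{3(1-\sqrt{2/d})}$.

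Finally I would extract the stated asymptotics: from $\gamma_d\le\frac{d}{3(1-\sqrt{2/d})}$, a first-order expansion of $(1-\sqrt{2/d})^{-1}=1+\sqrt{2/d}+O(1/d)$ gives $\gamma_d\le\frac{d}{3}+\frac{d}{3}\sqrt{2/d}+O(1) = \frac{d}{3}+\sqrt{2/9}\cdot\sqrt{d}+o(\sqrt d)$, matching the proposition. The delicate points to get right in the writeup are: the precise capacity bound "how many triples can a length-$\ell$ prefix of $s_i$ serve" (which governs the exponent in the LP and hence the constant $\sqrt{2/9}$ rather than something worse), and the treatment of the singleton condition, which I expect is exactly what prevents cheap consistency tricks from driving $K/\alpha$ above the bound and must be folded into the capacity count. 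I would present the collision/consistency handling via the singleton condition as a clean lemma first, then the LP/counting argument, then the asymptotic expansion.
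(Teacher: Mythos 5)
Your proposal correctly identifies the general flavor of the argument (a capacity/counting bound on how much covering work $d$ sequences of bounded prefix length can do), and your observation in step (2) that at most $dt$ elements can have first-occurrence depth $\le t$ does appear in the paper's proof. However, there is a genuine gap: the engine that actually produces the $\sqrt{2/d}$ term is missing. Your averaging argument admittedly only gives $\gamma_d \le \frac{2d}{3}(1+o(1))$, and the ``fractional relaxation'' you propose to sharpen it is not actually carried out --- the capacity exponent is left as a literal question mark and the conclusion $K \lesssim 3\alpha(1-\sqrt{2/d})$ is asserted rather than derived. As stated, the LP is not even well-specified (a triple may use a given prefix for one, two, or three of its elements, so the ``$O(\ell^3)$ triples per prefix'' count does not bound the relevant quantity), and there is no reason to believe optimizing it yields the claimed constant.

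The paper's actual mechanism is quite different and more combinatorial. Write $g=\lceil K/d\rceil$ and suppose $\alpha = 3(g-a)$ for some $a\ge 1$; the goal is to show $a \lesssim \sqrt{2/d}\cdot g$. Let $B$ be the set of elements whose first occurrence is at level $> g-a$; by your own counting observation, $|B|\ge d(a-1)$. The key point is that a triple drawn from $B$ \emph{cannot} be covered by three nonempty prefixes (that would cost $\ge 3(g-a+1)>\alpha$), so it must be covered by two, forcing at least two of its three elements to co-occur in a single sequence. Hence the complement of the co-occurrence graph on $B$ is triangle-free, and Mantel's theorem gives at least $(|B|-2)^2/4$ co-occurrence edges. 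On the other hand, the singleton condition (which you invoke only vaguely) confines the occurrences of $B$-elements to roughly $g$ levels per sequence, so each sequence contributes at most $\binom{g}{2}$ edges, i.e.\ at most $dg^2/2$ in total. Balancing $(d(a-1))^2/4 \lesssim dg^2/2$ yields $a \lesssim \sqrt{2/d}\,g$, and substituting back into $\alpha \ge 3(g-a)$ gives the bound (with the Scaling Lemma used to kill the $O(1/g)$ error terms). To repair your proof you would need to replace the unspecified LP with this restriction to deep elements, the two-prefix observation, and the Tur\'an-type edge count.
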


However, as $\frac{d}{3(1-\sqrt{2/d})} \ge \frac{d}{2}$ for $d\le 18$, this result does not rule out tight lower bounds for small dimensions. In fact, combining a computer-aided search with problem-specific insights, we give improved constructions for $d\in \{4,5\}$, which give lower bounds that are surprisingly close to $\frac{d}{2}$.

\begin{theorem} \label{thm:small_constructions}
	There is a $(4,40,21)$ prefix covering design, which yields $\gamma_4\ge \frac{40}{21} > 1.90476$. 

	There is a $(5,40,18)$ prefix covering design, which yields $\gamma_5\ge \frac{40}{18} > 2.22222$.
\end{theorem}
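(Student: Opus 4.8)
The statement is an existence claim for two concrete finite objects, so any proof is necessarily a \emph{construct-and-verify} argument. The plan is first to exhibit the sequences $s_1,\dots,s_4$ over $[40]$ (respectively $s_1,\dots,s_5$) explicitly, ideally in a form that exposes enough structure to make the verification tractable. Guided by the $d=3$ construction --- where $[K]$ is partitioned into blocks and each sequence is the concatenation of a ``forward'' block-segment with a ``backward'' block-segment, so that every symbol occurs exactly twice, once at some position $p$ and once at position $\alpha+1-p$ --- I would search for, and then present, designs of the same shape: partition $[40]$ into $m$ blocks, let each $s_i$ be an alternating forward/backward concatenation of block-segments, and record the block-incidence pattern (which block occurs in which sequence, at which segment, forward or backward) in a small combinatorial table. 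The choice of that incidence pattern carries all the nontrivial content, and --- given the oddly specific ratios $\tfrac{40}{21}$ and $\tfrac{40}{18}$ --- I do not expect a clean closed form; the pattern will be the output of an optimization.

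With the sequences in hand, the verification splits along the two defining conditions. The singleton condition is routine: for each symbol $x$ occurring more than once, read off $\ell_{\min}(x)$ and $\ell_{\max}(x)$ from the explicit sequences and check $\ell_{\min}(x)+\ell_{\max}(x)\le\alpha+1$; under the block shape above this collapses to essentially one inequality per block. The triplet condition is the crux. I would organize the $\binom{40}{3}=9880$ triplets by their \emph{block pattern} --- how the three symbols distribute among the $m$ blocks (all three in one block; two in one and one in another; one in each of three blocks) together with which blocks are involved --- and for each pattern give an explicit covering rule, namely a choice of at most three prefixes (one per relevant sequence, possibly empty) of total length $\le\alpha$, using symmetries of the incidence pattern to identify equivalent cases. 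Triplets inside a single block, or with two symbols in one block, are easy, since they sit in one or two short prefixes; the delicate case is three symbols in three distinct blocks, where the budget $\alpha$ is nearly tight and one must pick the covering decomposition according to which of the three in-block positions is largest. If a full human case analysis proves too unwieldy, the clean fallback is that each design is a fixed finite object, so exhaustive machine verification over all $9880$ triplets certifies correctness directly.

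The real obstacle is not verification but discovery: the set of candidate sequence tuples over $[40]$ is astronomically large, and even after restricting to the block-structured shape one still faces a sizeable combinatorial search over block sizes, segment orders, and incidence patterns. The plan there --- as the paper indicates --- is to prune aggressively using problem-specific restrictions (fix $K$ and $\alpha$, fix the number and sizes of blocks, impose cyclic or other symmetry, cap sequence lengths) and then encode ``does a $(d,K,\alpha)$ design of this restricted shape exist?'' as a propositional satisfiability instance --- Boolean variables for the placements of symbol occurrences; clauses enforcing the shape, the singleton inequalities, and, for each triplet, a disjunction over admissible coverings of total length $\le\alpha$ --- and hand it to a SAT solver, which either returns a witnessing design (to be reported and re-checked independently) or proves infeasibility, in which case a parameter is relaxed and the search is repeated. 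Calibrating how far the parameters can be pushed is precisely where upper bounds on $\gamma_d$ of the kind in Proposition~\ref{prop:square_root_lower_bound} are useful.
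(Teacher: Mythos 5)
Your overall methodology is exactly the paper's: the designs were indeed found by a SAT solver after fixing a rigid shape for the primary occurrences, the singleton condition is checked by a short hand argument exploiting that shape, and the triplet condition is delegated to an exhaustive machine check over all $\binom{40}{3}$ triplets. So there is no disagreement about \emph{how} one proves such a statement. The problem is that for an existence claim about two specific finite objects, the witness \emph{is} the proof, and your proposal never produces it. Everything you write is a plan for discovering and certifying a design, contingent on an optimization whose output you explicitly decline to predict; without the actual sequences $s_1,\dots,s_4$ and $s_1,\dots,s_5$, nothing has been established. The paper's proof consists almost entirely of writing down those sequences (e.g.\ $s_1=(1,2,\dots,10,40,19,28,37,26)$ etc.\ for $d=4$) and then verifying the two conditions; omitting them leaves a genuine gap, not a routine detail.

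A secondary issue is that the structural ansatz you extrapolate from the $d=3$ construction --- partition $[40]$ into blocks, make each $s_i$ a concatenation of forward and reversed block-segments so that \emph{every} symbol occurs exactly twice at positions $p$ and $\alpha+1-p$ --- does not match the actual designs. In the $(4,40,21)$ design each sequence begins with $g=10$ fresh primary elements and ends with a short tail of $5$ copies drawn one-at-a-time from \emph{several} other sequences' blocks; consequently most elements (those with primary level $\le 7$) occur only once, and only the elements at primary levels $8$, $9$, $10$ receive a copy, placed at level $21-z$. Your claim that the singleton check ``collapses to essentially one inequality per block'' relies on the every-symbol-twice, mirrored shape and would not apply as stated; the correct (still easy) argument splits on whether the primary level $z$ satisfies $z\le 7$ (where $\ell_{\min}+\ell_{\max}\le 7+15=22\le\alpha+1$ with no copy structure needed) or $z\ge 8$ (where the copy sits at level $21-z$). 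This is a fixable misprediction of the witness's shape rather than a flaw in the verification logic, but it reinforces that the content of the theorem lives in the explicit construction you did not supply.
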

\begin{proof}
	The following sequences give a $(4,40,21)$ prefix covering design:
	\begin{align*}
		s_1 & = (1,	2,	3,	4,	5,	6,	7,	8,	9,	10,	40,	19,	28,	37,	26),\\
		s_2 & = (11,	12,	13,	14,	15,	16,	17,	18,	19,	20,	30,	9,	38,	27,	36),\\
		s_3 & = (21,	22,	23,	24,	25,	26,	27,	28,	29,	30,	20,	39,	8,	7,	37),\\
		s_4 & = (31,	32,	33,	34,	35,	36,	37,	38,	39,	40,	10,	29,	18,	17,	27).
	\end{align*}
	The following sequences give a $(5,40,18)$ prefix covering design:
	\begin{align*}
		s_1 &= (1,	2,	3,	4,	5,	6,	7,	8,	24,	31,	38,	30,	14),\\
		s_2 &= (9,	10,	11,	12,	13,	14,	15,	16,	32,	40,	6,	31,	22),\\
		s_3 &= (17,	18,	19,	20,	21,	22,	23,	24,	8,	7,	39,	15,	30),\\
		s_4 &= (25,	26,	27,	28,	29,	30,	31,	32,	40,	16,	23,	39,	6),\\
		s_5 &= (33,	34,	35,	36,	37,	38,	39,	40,	16,	32,	15,	23).
	\end{align*}
	For the readers' convenience, we provide checker programs to verify the singleton and triplet conditions in~\cite{ourRepo} (see Section~\ref{sec:constructions-appendix} for details).
\end{proof}

For Klee's measure problem and the depth problem in $\mathbb{R}^4$ and $\mathbb{R}^5$, the gap between the resulting conditional lower bound and the known upper bound is thus at most $O(n^{0.09524})$ and $O(n^{0.27778})$, respectively. This improves over previous hyperclique-based lower bounds of $\Omega(n^{1.777})$ and $\Omega(n^{2.0833})$, respectively. 

These results may (re-)ignite hope that it might be possible to find prefix covering designs that establish tight lower bounds for $d=4$ and $d=5$. Alas, by a careful investigation of the limits of prefix covering designs, we refute this hope.

\begin{theorem} \label{thrm:no_tight_bound}
	We have $\gamma_4 < 2$.
\end{theorem}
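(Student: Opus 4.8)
The plan is to show that every $(4,K,\alpha)$ prefix covering design must satisfy $K/\alpha \le 2-c$ for some absolute constant $c>0$; equivalently, $\alpha \ge (1/2+c') K - O(1)$. The key idea is a charging argument: we must account for the ``cost'' $\alpha$ in terms of the prefix lengths used to cover all triplets, and argue that a single value of $\alpha$ simply cannot be small enough to simultaneously (i) make the four sequences long enough to cover all $\binom{K}{3}$ triplets cheaply via the triplet condition, and (ii) respect the singleton condition for the many values that must be repeated across sequences. First I would set up notation: write $L_i = |s_i|$, so $\sum_i L_i \ge K$ counts occurrences with multiplicity, and let $m = \sum_i L_i - K \ge 0$ be the number of ``extra'' occurrences, i.e. repeats. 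The singleton condition forces every repeated value $x$ to have $\ellmin(x)+\ellmax(x)\le \alpha+1$, which in particular means no repeated value can appear at a position $> \alpha$ in any sequence unless all its occurrences are near the front; this sharply limits how the ``tails'' of the sequences (positions beyond roughly $\alpha/2$) can share symbols.

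The heart of the argument is a counting bound on how many triplets a $(4,K,\alpha)$-design can cover. For a fixed budget split $\ell+\ell'+\ell''\le\alpha$ over three (not necessarily distinct) sequences, the set of triplets covered is contained in $\binom{P}{3}$ where $P$ is the union of the three chosen prefixes, so $|P|\le \ell+\ell'+\ell''\le\alpha$ — but this counts only $\binom{\alpha}{3}$ triplets per ``configuration'', and there are only polynomially many configurations (at most $O(d^3\alpha^3)=O(\alpha^3)$ of them for $d=4$), giving at most $O(\alpha^6)$ triplets, which is far less than $\binom{K}{3}=\Theta(K^3)$ once $\alpha$ is, say, below $K^{1/2}$ — but that regime is not the one we care about. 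So instead I would argue more carefully: partition each sequence $s_i$ at position $\lfloor\alpha/2\rfloor$ into a \emph{head} $H_i$ and a \emph{tail} $T_i$. If a triplet $\{a,b,c\}$ is covered with budget $\ell+\ell'+\ell''\le\alpha$, then at least one of the three prefixes has length $\le \alpha/3$, hence lies inside a head; more usefully, if two of the three elements lie only in tails (i.e. their covering prefixes extend past $\alpha/2$ in every sequence containing them), the budget is already $>\alpha$, a contradiction. Thus every covered triplet has at least two of its three elements appearing within some head $H_i$. Writing $\mathcal{H}=\bigcup_i H_i$ (as a set of values), this shows all of $[K]$ except possibly $O(1)$ ``pure-tail'' values must lie in $\mathcal{H}$, and moreover every triplet must have $\ge 2$ members in $\mathcal{H}$ — consistent — but the real pinch comes from bounding $|\mathcal H|$: since $|H_i|\le\alpha/2$, we get $|\mathcal H|\le 4\cdot(\alpha/2)=2\alpha$, so $K \le 2\alpha + O(1)$, i.e. $K/\alpha \le 2 + o(1)$. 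This only reproves $\gamma_4\le 2$, so the final, decisive step is to gain the additive constant loss into a \emph{multiplicative} gap.

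To get the strict inequality I would exploit the singleton condition together with the fact that the four heads cannot all be disjoint if $K$ is close to $2\alpha$: if $|\mathcal H|$ is within $o(\alpha)$ of $2\alpha$, then the heads are nearly disjoint, so almost every value occurs in exactly one head; but then for a repeated value $x$ (and there must be $\approx m$ of these with $\sum L_i = K + m$), its two occurrences cannot both be in heads, so one occurrence is in a tail $T_i$ at position $>\alpha/2$, and the singleton condition $\ellmin(x)+\ellmax(x)\le\alpha+1$ then forces its \emph{other} occurrence to be at position $\le \alpha/2 +1$, i.e. essentially at the very front of some head. Counting: the number of front-of-head slots is $O(\alpha)$ but spread over $4$ sequences, so $O(\alpha)$ repeated values can be accommodated, meaning $m = O(\alpha)$ and $\sum_i L_i \le K + O(\alpha)$; combined with $\sum_i L_i \le 4\cdot(\text{max length})$ and the requirement that the sequences must be long enough (their combined length, counting multiplicity, must reach past $\alpha/2$ in a way that covers all heavy triplets) one derives $K \le (2-c)\alpha + O(1)$. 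The main obstacle I anticipate is making this last step fully rigorous: the naive head/tail split loses exactly the constant we need to recover, so the argument must track the interaction between the triplet-covering requirement on the \emph{tails} (triplets with exactly two head-members still need their third member covered, and the tails are short) and the singleton constraint, presumably via a more refined potential that weights positions by how ``expensive'' they are, and an extremal/LP-style argument showing the optimum of that potential is bounded away from $K/2$. I would therefore structure the proof as: (1) head/tail decomposition and the ``$\ge 2$ in heads'' lemma; (2) singleton-driven bound $m=O(\alpha)$ on repeats; (3) a refined weighting argument forcing $K\le(2-c)\alpha+O(1)$; (4) conclude $\gamma_4 = \sup K/\alpha \le 2-c < 2$.
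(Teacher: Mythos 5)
Your head/tail split at level $\alpha/2$, together with the observation that at most one element of a covered triplet can require a prefix longer than $\alpha/2$, does recover the soft bound $K \le 2\alpha + o(\alpha)$; this corresponds to the first stage of the paper's proof (Lemma~\ref{lemma:small_primary_positions}, which shows every primary position lies in the first $\frac{\alpha}{2}+o(\alpha)$ levels, so that almost every slot in the first half of each sequence is a primary occurrence). But the decisive step --- turning the additive slack into a multiplicative gap $K \le (2-c)\alpha$ --- is exactly where your argument stops being an argument. Your step (2) yields no contradiction: you conclude that a repeated value must have an occurrence at position $\le \alpha/2+1$, but that just says ``somewhere in a head,'' and counting ``front-of-head slots'' as $O(\alpha)$ against $O(\alpha)$ repeats rules nothing out. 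The remaining work is deferred to an unspecified ``refined potential'' and ``LP-style argument,'' which you yourself flag as the main obstacle.

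The paper's route to the strict inequality is structurally different and constitutes the actual content of the theorem. First comes a rigidity statement: for elements whose primary position sits at level $m-x$ near the top of the primary region, the singleton condition confines all copies to levels $\le m+x+o(m)$, and since only $o(m)$ slots in the first $m$ levels are copies, a slot-counting argument forces almost every such element to have \emph{exactly one} copy, located at the mirrored level $2m-p\pm o(m)$ (Lemmas~\ref{lemma:existance_of_copies} and~\ref{lemma:uniqueness_of_copies}). This mirror structure is then encoded as a directed graph on the four sequences recording where each band of elements is ``uniquely copied''; after stabilizing the edge set on a subinterval (Lemma~\ref{lm:five-division}), two lemmas exhibit explicit triplets that cannot be covered within budget $\alpha$ whenever certain triples of edges coexist (Lemmas~\ref{lm:first-forbidden-edge-triple} and~\ref{lm:second-forbidden-edge-triple}), and a short case analysis shows that no $4$-vertex digraph carrying the required in- and out-degree mass avoids all forbidden configurations. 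None of this --- the exact-one-copy mirror rigidity, the sequence-to-sequence copy graph, or the forbidden-configuration analysis --- appears in your proposal, and without some substitute for it the proof does not go through.
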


This result is proven via a careful analysis of the structure of prefix covering designs with quality $\frac{K}{\alpha}$ approaching 2: We show that certain \emph{levels} (i.e., $s_1[\ell],\dots, s_4[\ell]$ for certain values of $\ell$) must have a very rigid structure. Essentially, every element on such a level must have exactly a single copy on a corresponding other level. A detailed analysis of all possibilities displays a contradiction; we cannot get a quality $\frac{K}{\alpha}$ that is arbitrarily close to $2$. We give the full proof in Section~\ref{sec:limits-appendix}.
It remains an interesting question to determine the precise value of $\gamma_4$; our results yield $1.90476 \le \gamma_4 < 2$.

\paragraph*{Connection to covering designs}

Our previous results give evidence of the intricacy of designing good prefix covering designs. Unfortunately, designing optimized designs for small dimensions like $d=4$ and $d=5$ offers little insights into the asymptotics in $d$ as well as the general structure of good prefix designs for larger dimensions.

We address this by providing general constructions that are applicable for all $d$ and make use of the extensive literature on \emph{combinatorial designs}. Specifically, we observe an interesting connection between so-called \emph{covering designs} (see, e.g., the surveys~\cite{MillsM92,GordonKP95,GordonS06, coveringDesigns} and~\cite{Chan19} for an algorithmic application in computational geometry) and prefix covering designs. A $(v,k,t)$ covering design is a collection of $k$-sized subsets $B_1,\dots, B_b$ -- called \emph{blocks} -- of $[v]$ such that every $t$-element subset of $[v]$ is fully contained in some block $B_i$. These covering designs constitute a relaxation of balanced incomplete block designs.

Note that a $(d,K,\alpha)$ prefix covering design $s_1,\dots, s_d$ where each $s_i$ has length at most $L$ is superficially similar to a $(v,k,t)$-covering design with $v=K$ elements, block size $k=L$, parameter $t=3$ and $d$ blocks: in both designs, we cover triplets among $v=K$ elements using $d$ sequences/blocks. However, there are two key differences. (1) In covering designs, we cover each triplet in a single block, while in prefix covering designs, we may use prefixes from up to three sequences. (2) The sequences of prefix covering designs are inherently ordered (due to the \emph{prefix} nature of the singleton and triplet conditions), while covering designs have unordered blocks. A priori, it is unclear whether there is a general way to use good covering designs to obtain good prefix covering designs or vice versa.
Maybe surprisingly, we show how to use good $(v,k,t)$ covering designs with $t=2$ (rather than $t=3$, which might appear as the more natural correspondence) to obtain strong prefix covering designs.

Specifically, for any such covering design satisfying a mild matching-like condition (which is satisfied by many constructions known in the literature), we obtain high-quality prefix covering designs. We will see below that by plugging in known constructions, we get prefix covering designs that are close to optimal when $d\to \infty$. 

\begin{theorem} \label{thm:transformation}
	Let $d\ge 3, k\in \mathbb{N}$ and $v$ be a multiple of $d$ such that there is a $(v,k,2)$ covering design with $d$ blocks with the following property: For every block $B_i$, there exists $U_i \subseteq B_i$ of size $\frac{v}{d}$ such that $U_1,\dots, U_d$ partition $[v]$. Then $\gamma_d \ge \frac{d}{3-2\frac{v}{kd}}$.
    
\end{theorem}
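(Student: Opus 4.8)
The goal is to take a $(v,k,2)$ covering design with $d$ blocks admitting a partition-compatible sub-structure $U_1,\dots,U_d$ and turn it into a $(d,K,\alpha)$ prefix covering design with $K=v$ and $\frac{K}{\alpha}$ approaching $\frac{d}{3-2v/(kd)}$. The natural idea is: each block $B_i$ becomes sequence $s_i$, where we first write down the $\frac{v}{d}$ elements of the ``private'' part $U_i$ and then the remaining $k-\frac{v}{d}$ elements of $B_i\setminus U_i$. Since the $U_i$ partition $[v]=[K]$, every element of $[K]$ appears exactly once as a ``private'' element (in some short prefix of length $\le \frac{v}{d}$), and additionally possibly several times later as a ``shared'' element in various $B_i\setminus U_i$. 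To make the quality bound work out one takes $g$ disjoint copies of this construction (analogously to the $(3,3g,2g+1)$ example in the introduction) — i.e. concatenate $g$ scaled copies so the additive $+1$ terms vanish in the limit — but I will describe the single-copy version and note the scaling at the end.

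\textbf{Checking the singleton condition.} Fix $x\in[K]$ occurring more than once. It occurs exactly once as a private element, at some position $\le \frac{v}{d}$ in the sequence $s_{i_0}$ with $x\in U_{i_0}$; hence $\ell_{\min}(x)\le \frac{v}{d}$. Every other occurrence is as a shared element, at a position between $\frac{v}{d}+1$ and $k$; hence $\ell_{\max}(x)\le k$. So $\ell_{\min}(x)+\ell_{\max}(x)\le \frac{v}{d}+k$, and we need this to be $\le \alpha+1$. This will dictate the choice of $\alpha$ — roughly $\alpha \approx k + \frac{v}{d} - 1$ — but I expect that with the right ordering and scaling one actually wants $\alpha$ closer to $\tfrac{3}{2}k - \tfrac{v}{d}\cdot\tfrac{1}{2}$ or similar, so the singleton bound must be shown to be dominated; the precise bookkeeping here (reconciling the singleton requirement with the triplet requirement to get $\frac{K}{\alpha}=\frac{d}{3-2v/(kd)}$) is where I'd need to be careful, and is likely the main obstacle.

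\textbf{Checking the triplet condition.} Let $\{a,b,c\}\in\binom{[K]}{3}$. Each of $a,b,c$ lies in a unique private set; say $a\in U_{i_a}$, etc. The covering design guarantees each \emph{pair} is covered by a block: there is a block containing $\{a,b\}$, one containing $\{a,c\}$, one containing $\{b,c\}$. The key observation is that a block $B_i$ containing, say, $\{a,b\}$ contains $a$ and $b$ each either as its private element or as a shared element appearing at position $\le k$; in the worst case both $a,b$ appear in the shared tail, so the prefix $s_i[..k]$ of length $k$ covers $\{a,b\}$. Doing this for all three pairs gives three prefixes of total length $\le 3k$, which is too much — so one must be smarter: use the private occurrences. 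Since $a$ appears privately in $s_{i_a}$ at position $\le \frac{v}{d}$, we can cover $a$ cheaply there; similarly $b,c$. Then $\{a,b,c\}$ is covered by three prefixes $s_{i_a}[..\tfrac{v}{d}], s_{i_b}[..\tfrac{v}{d}], s_{i_c}[..\tfrac{v}{d}]$ of total length $\le 3\frac{v}{d}$ — unless two of $i_a,i_b,i_c$ coincide, in which case one prefix handles two elements. Combining the two strategies (private prefixes for some elements, a shared-containing block prefix for a covered pair among the rest) and optimizing the split, one gets total prefix length $\le 2\cdot\frac{v}{d} + k$ in the generic case (cover $c$ via its private prefix, cover the pair $\{a,b\}$ via a block $B_i\supseteq\{a,b\}$ using prefix $s_i[..k]$, but if $a$ or $b$ happens to be private in $B_i$ we do even better), and the degenerate cases (shared indices, elements lying privately in the covering block) only help. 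Setting $\alpha = \max\{2\frac{v}{d}+k-1,\ k+\frac{v}{d}\}$ and passing to $g$ disjoint copies to kill the additive constants, the dominant term is $2\frac{v}{d}+k$, and $\frac{K}{\alpha}\to\frac{v}{k+2v/d}=\frac{vd}{kd+2v}=\frac{d}{k d/v + 2}$; rewriting $\frac{kd}{v}=\frac{1}{v/(kd)}$ this is $\frac{d}{1/(v/(kd)) + 2}$ — I need the arithmetic to land on $\frac{d}{3-2v/(kd)}$, which suggests the correct ordering is the reverse (shared elements first, private elements as the \emph{tail}), or that one splits each block into three zones rather than two; pinning down the exact three-zone layout that yields the $3-2\frac{v}{kd}$ denominator is the crux of the argument.

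\textbf{Remaining steps.} After fixing the layout: (1) verify rigorously that the chosen $\alpha$ satisfies both conditions for every triplet and every repeated singleton, handling all degenerate index-coincidence cases; (2) form $g$ independent copies on disjoint element sets $[K]=[gv]$ (sequences $s_i^{(1)},\dots$ concatenated, or rather $g$ separate designs combined so that the ``$+1$'' and other constants become negligible), obtaining a $(d, gv, g\alpha_0 + c)$ design for a constant $c$ independent of $g$; (3) take $g\to\infty$ to conclude $\gamma_d\ge \frac{v}{\alpha_0} = \frac{d}{3-2v/(kd)}$ by definition of $\gamma_d$ and Proposition~\ref{prop:basic_reductions}. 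The main obstacle throughout is the combinatorial case analysis in step (1): ensuring the triplet condition holds \emph{simultaneously} with the tight $\alpha$ in the worst arrangement of $\{a,b,c\}$ relative to their private sets and the covering block, since this is exactly what forces the $2\frac{v}{kd}$ savings over the trivial $\alpha=3k$.
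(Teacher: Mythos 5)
Your proposal has a genuine gap, and you essentially admit it yourself: the arithmetic of your layout lands on $\frac{K}{\alpha}\to\frac{d}{kd/v+2}$, which (since $k\ge 2v/d$ for any such covering design) is strictly worse than the claimed $\frac{d}{3-2v/(kd)}$ --- for projective planes it gives only $\approx\sqrt{d}$ instead of $\approx d/3$. The missing idea is not a reordering or a ``three-zone'' split of the block itself, but \emph{padding}: the paper prepends to each sequence a long run of brand-new elements, each occurring exactly once (in the basic version $k$ fresh elements per sequence, so $s_i$ has length $2k$ with layout fresh elements, then $U_i$, then $B_i\setminus U_i$). This inflates $K$ from $v$ to $(k+\frac{v}{d})d$ while $\alpha$ only grows to $3k+\frac{v}{d}$: a triplet with at most one covering-design element is covered by three primary prefixes of total length $\le k+k+(k+\frac{v}{d})$, and a triplet with two covering-design elements is covered by one \emph{whole} sequence of length $2k$ (using the $t=2$ property, as you correctly identified) plus one short primary prefix. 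Without the padding, $K$ is too small relative to the sequence lengths that the pair-covering step forces you to pay.

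The second missing ingredient is that even the padded construction only gives $\frac{(k+v/d)d}{3k+v/d}$, and reaching the stated bound requires replicating the covering design $n$ times \emph{inside each sequence}, arranged as $U_i^1,\dots,U_i^n$ followed by the remainders $R_i^n,\dots,R_i^1$ in reversed order (mirroring the $(3,3g,2g+1)$ example), with $m=nk-(n-1)\frac{v}{d}$ fresh elements in front; letting $n\to\infty$ yields the denominator $3-2\frac{v}{kd}$. Note that your step of ``taking $g$ disjoint copies to kill additive constants'' is the Scaling Lemma, which preserves the ratio $K/\alpha$ and cannot improve it; the paper's replication is a structurally different nested interleaving whose ratio genuinely improves with $n$. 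Your singleton bookkeeping and your use of the covering design to handle pairs are both on the right track, but without the padding and the nested replication the construction does not reach the claimed bound.
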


Let us give an example application of this theorem (see Sections~\ref{sec:consequences} and~\ref{sec:constructions-appendix} for stronger consequences). It is well known that the projective plane of order $q$ (where $q$ is a prime power) yields a set of $v=q^2+q+1$ points, $d=q^2+q+1$ lines, with $k=q+1$ points on each line, such that every pair of points is connected by a line. This yields a $(v,k,2)$-design with $d=v=q^2+q+1$ and $k=q+1$. One can show that this design satisfies the matching-like condition (see Section~\ref{sec:limits-appendix}). Thus, for infinitely many $d$, we obtain a lower bound of $\gamma_d \ge \frac{d}{3-\frac{2}{q+1}}$. Since $q=O(\sqrt{d})$, we obtain $\gamma_d \ge \frac{d}{3-\Omega(1/\sqrt{d})} = \frac{d}{3} + \Omega(\sqrt{d})$ for infinitely many $d$, improving over the lower bound of $\gamma_d\ge \frac{d}{3} + \frac{1}{3} + \frac{1}{3(d-1)}$ that is implicit in~\cite{Kunnemann22}.

\subsection{Consequences: Improved conditional lower bounds}
\label{sec:consequences}

Using Theorem~\ref{thm:transformation}, we may take any $(v,k,2)$ covering design with $d$ blocks that is known in the literature, check whether it satisfies the matching-like condition, and obtain the corresponding lower bound on $\gamma_d$. In Table~\ref{tab:CD_results}, we list lower bounds on $\gamma_d, d\le 10$ obtained this way, specifically, by using covering designs listed in the La Jolla Covering Repository~\cite{coveringDesigns} (see Section~\ref{sec:constructions} for details). Notably, the resulting lower bounds improve over the constructions in~\cite{Kunnemann22} for $d\ge 4$.

We also provide a lower bound for \emph{all} $\gamma_d$ that is close to optimal  when $d\to \infty$.

\begin{theorem} \label{thm:general_lower_bound}
	There is some function $f(d) = d/3 + 2\sqrt{d}/9 - o(\sqrt{d})$ such that $\gamma_d \ge f(d)$ for all $d\ge 3$.
\end{theorem}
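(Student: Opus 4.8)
The plan is to invoke Theorem~\ref{thm:transformation} and feed it an explicit family of $(v,k,2)$ covering designs with $d$ blocks, a partition-into-matchings property, and the right tradeoff between $v/d$ and $k$. Concretely, Theorem~\ref{thm:transformation} gives $\gamma_d \ge \frac{d}{3 - 2\frac{v}{kd}}$, so to obtain $f(d) = d/3 + 2\sqrt d/9 - o(\sqrt d)$ we want $\frac{v}{kd} = \frac{1}{\sqrt d}(1+o(1))$, i.e. roughly $k \approx \sqrt d$ and $v \approx d$. (A quick sanity check: with $v/(kd) = 1/\sqrt d$ we get $\gamma_d = \frac{d}{3 - 2/\sqrt d} = \frac{d}{3}\cdot\frac{1}{1-\frac{2}{3\sqrt d}} = \frac{d}{3}\bigl(1 + \frac{2}{3\sqrt d} + o(1/\sqrt d)\bigr) = \frac{d}{3} + \frac{2\sqrt d}{9} + o(\sqrt d)$, matching the claimed $f$.) So the whole theorem reduces to constructing, for every $d$ (or a dense enough set of $d$ with a padding argument to cover the rest), a $(v,k,2)$ covering design with $d$ blocks where $v$ is a multiple of $d$, $v/(kd) \le \frac{1}{\sqrt d} + o(1/\sqrt d)$, and the blocks can be trimmed to equal-size subsets $U_1,\dots,U_d$ partitioning $[v]$.

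The natural building block is the projective-plane example already worked out in the excerpt: for a prime power $q$ it gives a $(v,k,2)$ design with $v=d=q^2+q+1$, $k=q+1$, satisfying the matching-like condition, hence $\gamma_d \ge \frac{d}{3 - \frac{2}{q+1}}$. Here $q+1 \sim \sqrt d$, so $\frac{v}{kd} = \frac{1}{q+1} \sim \frac{1}{\sqrt d}$ and this already delivers $f(d)$ — but only for $d$ of the form $q^2+q+1$ with $q$ a prime power. The work is therefore entirely in the \emph{reduction to arbitrary $d$}. I would proceed in two steps. First, for general $d$, pick the largest prime power $q$ with $d' := q^2+q+1 \le d$; by Bertrand-type / prime-gap bounds for prime powers we have $d' = d - o(d)$, in fact $d - d' = O(d^{3/4})$ or better, so $q = \sqrt d(1-o(1))$. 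Second, I need to go from a design with $d'$ blocks to one with exactly $d$ blocks, multiplying both $v$ and $d$ appropriately so that divisibility and the partition property are maintained, while not hurting the ratio $v/(kd)$ by more than a $1+o(1)$ factor. The cleanest route: take $m$ disjoint copies of the projective plane design on disjoint ground sets (this gives $m d'$ blocks, ground set size $m d'$, still $k = q+1$, and the matching property holds copy-wise), then add at most $d - d' \cdot \lfloor d/d'\rfloor$ "padding" blocks — each a fresh set of $q+1$ brand-new elements used as its own block, which trivially covers no cross pairs but we only need it to carry its own $U_i$ of size $v/d$. One has to check that after this the total number of blocks is exactly $d$, that $v$ is a multiple of $d$, and that every pair of the final ground set that needs covering is covered (pairs within a padding block, and pairs within each projective-plane copy; pairs \emph{across} components never need covering because... — actually this is the subtle point, see below).

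The step I expect to be the main obstacle is precisely this padding / block-count-matching argument, because a covering design must cover \emph{all} pairs of its ground set $[v]$, and naively gluing disjoint sub-designs leaves cross-component pairs uncovered. The fix is to not take the ground sets disjoint but to reuse a common "core": e.g. keep one projective plane on $[d']$ and let every extra block be $U_i \cup (\text{a fixed transversal of the core})$, or more robustly, restructure so that the quantity we actually need is weaker than a full covering design. I would revisit exactly what Theorem~\ref{thm:transformation}'s hypothesis demands — it asks for a $(v,k,2)$ covering design, i.e. all $\binom{v}{2}$ pairs covered — and engineer the padding blocks to each contain a fixed spanning subset (a "hub") of size $\le k - v/d$ so that every new element is pair-connected to the hub and, via the hub, transitively the covering obligations are met within a single enlarged block system; the arithmetic to keep $k = (1+o(1))\sqrt d$ while absorbing the hub is the delicate part. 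An alternative, possibly cleaner, is to not insist on hitting \emph{every} $d$ but to prove the bound for a set of $d$ of positive density and then observe (as is standard for such monotone-in-structure quantities, though one must check monotonicity of $\gamma_d$ in $d$ — adding a dummy $d$-th sequence equal to a single already-covered element shows $\gamma_{d+1} \ge \gamma_d$) that $\gamma_d \ge \gamma_{d'}$ for the nearest smaller good $d'$, giving $\gamma_d \ge \frac{d'}{3 - 2/(q+1)} = \frac{d}{3} + \frac{2\sqrt d}{9} - o(\sqrt d)$ since $d' = d - o(d^{?})$ only costs a lower-order term. I would pursue this monotonicity route first, as it sidesteps the covering-design gluing entirely and isolates the real content in (i) the projective-plane construction (done in the excerpt) and (ii) prime-gap bounds for prime powers, both of which are off-the-shelf.
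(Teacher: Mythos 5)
Your overall skeleton (projective planes as $(q^2+q+1,\,q+1,\,2)$ covering designs satisfying the multi-matching condition, plus prime-gap bounds to find a nearby good dimension) matches the paper, and you correctly identify that the entire difficulty is the extension from $d'=q^2+q+1$ to arbitrary $d$. However, both routes you propose for that extension have a genuine gap. The monotonicity route, which you say you would pursue first, fails \emph{quantitatively}: from $\gamma_d \ge \gamma_{d'}$ with $d'$ the largest value of the form $q^2+q+1$ below $d$, you get $\gamma_d \ge \frac{d'}{3}+\frac{2}{9}\sqrt{d'}-o(\sqrt{d'}) = \frac{d}{3} - \frac{d-d'}{3} + \frac{2}{9}\sqrt{d}-o(\sqrt{d})$. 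Consecutive values of $q^2+q+1$ differ by at least $2q+2 = \Theta(\sqrt{d})$ even for consecutive integers $q$, and with prime gaps the loss $d-d'$ can be as large as $\Theta(d^{5/6})$; so the subtracted term $\frac{d-d'}{3}$ is $\Omega(\sqrt d)$ for generic $d$ and swamps the $\frac{2}{9}\sqrt d$ gain, leaving you with a bound no better than $\frac{d}{3}-\Omega(\sqrt d)$. Your alternative route — gluing or padding the \emph{covering designs} with hub blocks — runs into exactly the cross-pair coverage obstruction you flag, and the hub fix does not resolve it: a fresh element $x$ in a padding block and an old element $y$ outside the hub lie in no common block, and pairs of fresh elements across distinct padding blocks are likewise uncovered.

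The paper's resolution is to pad at the level of the \emph{prefix covering design}, not the covering design. Starting from the scaled construction in the proof of Theorem~\ref{thm:transformation} (with parameters $K=(nk+v')d'$ and $\alpha\le 3nk-(2n-3)v'$), one appends $d-d'$ entirely new sequences, each consisting of $nk-(2n-1)v'$ \emph{fresh unique} elements. Because these elements never participate in the covering design, any triplet touching a new sequence is covered by taking all three primary-position prefixes, of total length $(nk-(2n-1)v')+(nk+v')+(nk+v')=3nk-(2n-3)v'=\alpha$; so $\alpha$ does not grow, while $K$ increases by nearly $nk$ per added dimension. The resulting ratio (as $n\to\infty$) is $\frac{kd-2(d-d')}{3k-2}$, i.e.\ the penalty for the padding is $\frac{2(d-d')}{3k-2}=O\bigl(\frac{d-d'}{\sqrt d}\bigr)=O(d^{1/3})=o(\sqrt d)$ rather than $\Theta(d-d')$. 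This calibrated padding of sequences — making each dummy dimension carry almost a full sequence's worth of fresh elements without increasing $\alpha$ — is the idea missing from your proposal; without it neither of your routes yields the claimed $f(d)$.
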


This lower bound is obtained by showing how to extend the projective planes covering designs (in a suitable way) to obtain strong prefix covering designs for all values of $d$. 

By the above theorem, we obtain a $n^{d/3 + 2/9\sqrt{d} - o(\sqrt{d})}$ conditional lower bound for Klee's measure problem and related problems. Note that Chan's reduction from $K$-clique~\cite{Chan10} can be interpreted as a lower bound of $n^{(\omega/6)d-o(1)}$ assuming that current $K$-clique algorithms are optimal. If $\omega = 2$, this cannot give any higher lower bound than $n^{d/3-o(1)}$.

Table~\ref{tab:CD_results} also lists the corresponding upper bound of $O(n^{d/2})$ for Klee's measure problem and the depth problem for comparison. The gaps for the Largest-Volume/Maximum-Perimeter Empty (Anchored) Box problem in $\mathbb{R}^d$ are a bit larger: Chan~\cite{Chan21} obtains an upper bound\footnote{While Chan focuses on the Largest-Volume Empty Box problem, he states that his algorithms for $d\ge 4$ also work for the Maximum-Perimeter version, see~\cite[Section 5]{Chan21}.} for the anchored version of $\tOh(n^{d/3 + \lfloor d/2\rfloor/6}) \le \tOh(n^{5d/12})$ for $d\ge 4$. In particular, this yields upper bounds of $\tOh(n^{2.5})$, $\tOh(n^{3.3334})$, and $\tOh(n^{4.1667})$ for $d=6$, $d=8$ and $d=10$, respectively, while we supply a conditional lower bound of $n^{\gamma_{d/2}-o(1)}$ for even $d\ge 6$, which yields lower bounds of $n^{1.5-o(1)}$, $n^{1.9047-o(1)}$ and $n^{2.2222-o(1)}$ for $d=6$, $d=8$ and $d=10$, respectively. It is an interesting question whether we can prove a higher lower bound than $n^{d/4-o(1)}$ for any $d$ or whether Chan's algorithms can be improved further.

\begin{table}
	\begin{center}
\begin{tabular}{| l | p{2cm} | p{2cm} | p{2cm} | p{2cm} |}
    \hline
    $d$ & Upper bound & Previously known lower bound & SAT-solver lower bound & Covering designs lower bound \\
    \hline
    $3$ & $1.5$ & $1.5$ &  & $1.5$\\
    \hline
    $4$ & $2$ & $1.7777$ & $1.9047$ & $1.8461$\\
    \hline
    $5$ & $2.5$ & $2.0833$ & $2.2222$ & $2.1929$\\
    \hline
    $6$ & $3$ & $2.4$ & & $2.5714$\\
    \hline
    $7$ & $3.5$ & $2.7222$ & & $3$\\
    \hline
    $8$ & $4$ & $3.0476$ & & $3.3333$\\
    \hline
    $9$ & $4.5$ & $3.375$ & & $3.6818$\\
    \hline
    $10$ & $5$ & $3.7037$ & & $4.0540$\\
    \hline
\end{tabular}
	\end{center}
    \caption{The exponents of the upper and conditional lower bounds for Klee's measure problem and the depth problem in $\mathbb{R}^d$ for $d\le 10$. The upper bound column is due to the $n^{d/2 \pm o(1)}$-time algorithms~\cite{OvermarsY91, Chan10, Chan13}, the conditional lower bounds are based on the 3-uniform hyperclique hypothesis and result from~\cite{Kunnemann22} (3rd column), Theorem~\ref{thm:small_constructions} (4th column) and from combining Theorem~\ref{thm:transformation} with covering designs found in the La Jolla Covering Repository maintained by D. Gordon~\cite{coveringDesigns} (5th column).}
		\label{tab:CD_results}
	\end{table}

\subparagraph*{Related Work}
Klee's measure problem has been well-studied since the 1970s~\cite{Klee77, Bentley77, FredmanW78, LeeuwenW81, OvermarsY91, Chan10, Chan13, Kunnemann22}, including algorithms beating $n^{d/2\pm o(1)}$ for various special cases, e.g.,~\cite{AgarwalKS07, BeumeFLPV09, Agarwal10, Bringmann12, YildizS12, Bringmann13}.

The depth problem for axis-parallel boxes is closely related to Klee's measure problem and often admits similar algorithmic ideas, see particularly~\cite{Chan13}.

Finding a largest-volume empty axis-parallel box has initially been mostly studied in two dimensions (see, e.g.,~\cite{NaamadLH84, ChazelleDL86, AggarwalS87}). In higher dimensions, Backer and Keil~\cite{BackerK10} give a $\tOh(n^d)$ algorithm, which was recently improved to $\tOh(n^{(5d+2)/6})$ by Chan~\cite{Chan21}. Note that our lower bounds are most interesting for the anchored version of the problem, which is solvable in faster running time $\tOh(n^{5d/12})$~\cite{Chan21}. Approximation algorithms have been given in~\cite{DumitrescuJ13}.
Giannopoulos et al.~\cite{GiannopoulosKWW12} give a reduction from $d$-clique, which can be understood as an $n^{(\omega/12) d-o(1)}$ lower bound assuming that current clique algorithms are optimal.

\section{Constructions} \label{sec:constructions}

In this section, we prove our general result transforming covering designs to prefix covering designs (Theorem~\ref{thm:transformation}). All remaining proofs and details on constructing prefix covering designs are deferred to Section~\ref{sec:constructions-appendix} in the appendix.

For a $(d, K, \alpha)$ prefix covering design (PCD) with sequences $s_1$, $s_2$, $\ldots$, $s_d$ we call elements $s_1[i]$, $s_2[i]$, $\ldots$, $s_d[i]$ the $i$-th level of the PCD.

When analyzing such prefix covering designs, it is helpful to distinguish between the ``first'' occurrence of some element, which we call the \emph{primary} element, and all other occurrences, which we call \emph{copies}. We call a pair $(i, \ell)$ a position if $1 \le i \le d$, $1 \le \ell$, and there exists $\ell$-th element in $s_i$.

\begin{definition}
    For any prefix covering design $s_1,\dots, s_d$, we call a position $(i, \ell)$ the \emph{primary position} of value $x$ ($1 \le x \le K$) if and only if $s_i[\ell]=x$ and $s_{i'}[\ell'] \neq x$ for every other position $(i', \ell')$ such that $(\ell', i')$ precedes $(\ell, i)$ in the lexicographic ordering.

Every other occurrence $(i', \ell')$ with $s_{i'}[\ell']=x$ is called a \emph{copy} of $x$.
\end{definition}

Note that if $(i, \ell)$ is a primary position of value $x$, then $\ell=\ell_{\min}(x)$.

\begin{definition}
    A $(v, k, t)$ covering design where $v \ge 2$, $k \ge t \ge 1$ is a collection of $k$-element subsets (called blocks) of $[v]$ such that any $t$-element subset is contained in at least one block.
\end{definition}

In the following proof, we will be extensively using $(v, k, t)$ covering designs for $t=2$. So, every pair of elements is contained in at least one block.

\begin{proof}[Proof of Theorem \ref{thm:transformation}]
    Consider some $(v, k, 2)$ covering design consisting of $d$ blocks where $v$ is divisible by $d$ and set $v' \coloneqq \frac{v}{d} \in \mathbb{N}$. Define $B_1$, $B_2$, $\ldots$, $B_d$ as the blocks of this covering design. Assume there exist sets $U_1 \subseteq B_1$, $U_2 \subseteq B_2$, $\ldots$, $U_d \subseteq B_d$ such that $|U_1| = |U_2| = \ldots = |U_d| = v'$ and $U_1, U_2, \ldots, U_d$ partition $[v]$.
    Then we will prove that for every $\varepsilon > 0$, there exist $K$ and $\alpha$ such that $\frac{K}{\alpha} \ge \frac{d}{3-2 \frac{v}{kd}} - \varepsilon$ and $(d, K, \alpha)$ PCD exists. From this we automatically get that $\gamma_d \ge \frac{d}{3 - 2 \frac{v}{kd}}$ by going to the limit.

\begin{figure}
    
\begin{center}
\begin{tabular}{| c | c c c c c c |}
    \hline
    $s_1$ & 8 & 9 & 10 & 1 & 2 & 3 \\
    \hline
    $s_2$ & 11 & 12 & 13 & 4 & 1 & 5 \\
    \hline
    $s_3$ & 14 & 15 & 16 & 7 & 1 & 6 \\
    \hline
    $s_4$ & 17 & 18 & 19 & 6 & 2 & 4 \\
    \hline
    $s_5$ & 20 & 21 & 22 & 2 & 5 & 7 \\
    \hline
    $s_6$ & 23 & 24 & 25 & 3 & 4 & 7 \\
    \hline
    $s_7$ & 26 & 27 & 28 & 5 & 3 & 6 \\
    \hline
\end{tabular}
\end{center}

\caption{Example of a ($7$, $28$, $10$) PCD construction from a $(7, 3, 2)$ covering design with $7$ blocks.}
\label{fgr:example_d_7}
\end{figure}

    First, we present a slightly worse construction.

    Order elements inside blocks of a given covering design in such a way that elements of $U_i$ are located in the first $v'$ positions of $B_i$, i.e., $\{B_i[j]\mid 1 \le j \le v'\} = U_i$.
    To construct sequences of our PCD, we take these blocks of the covering design and put $kd$ new different elements in front of them by prepending $k$ elements in each sequence. In other words, the resulting PCD has sequences $s_1$, $s_2$, $\ldots$, $s_d$ each of length $2k$ such that $s_i[j] = v + (i - 1) \cdot k + j$ for $j \le k$ and $s_i[j] = b_i[j - k]$ for $j > k$. An example for $d=7$ is given in Figure \ref{fgr:example_d_7}.
    We will prove that this gives a $(d, K, \alpha)$ PCD with $K=(v'+k)d$ and $\alpha \le 3k+v'$.

There are $v'd$ elements from a covering design and $kd$ more unique elements that we added, so $K=(v'+k)d$. It remains to check that $\alpha \le 3k+v'$.

First, we check the singleton condition. Due to our ordering of the covering design blocks, all primary positions of all elements are located in the first $k+v'$ levels, so $\ell_{\min}(x) \le k+v'$ for every element $x$. At the same time, there are $2k$ elements in each sequence in total, so $\ell_{\max}(x) \le 2k$. Thus, $\ell_{\min}(x) + \ell_{\max}(x) \le (k+v')+2k=3k+v'$ for each $x\in [K]$.

Second, we check the triplet condition. Assume we chose three elements $a$, $b$ and $c$.
Define their primary positions as $(i_a, \ell_a)$, $(i_b, \ell_b)$ and $(i_c, \ell_c)$ respectively. Without loss of generality, assume that $\ell_a \le \ell_b \le \ell_c$.
Consider two cases.

\begin{enumerate}
\item If there is at most one element from the covering design among these three, then $\ell_a \le k$, $\ell_b \le k$ and $\ell_c \le k+v'$, so we can cover them with prefixes $s_{i_a}[.. \ell_a]$, $s_{i_b}[.. \ell_b]$ and $s_{i_c}[.. \ell_c]$ of total size $\ell_a + \ell_b + \ell_c \le k + k + (k + v') = 3k+v'$.
\item If there are at least two elements from the covering design among these three, then $b$ and $c$ are in the covering design. By the definition of a covering design, there should be a sequence $s_i$ that contains both $b$ and $c$. Thus we can cover all three elements with two prefixes: $s_i[.. 2k]$ (whole sequence) and $s_{i_a}[.. \ell_a]$ of total size $2k + \ell_a \le 2k + (k+v') = 3k + v'$.
\end{enumerate}

This concludes the proof that $\alpha \le 3k+v'$ and already gives a bound $\gamma_d \ge \frac{K}{\alpha} \ge \frac{(k + v')d}{3k+v'} = \frac{d}{3} \cdot \frac{3k + 3v'}{3k+v'} = \frac{d}{3} \cdot \left(1 + \frac{2v'}{3k+v'}\right) = \frac{d}{3} \cdot \left(1 + \frac{2v}{3dk+v} \right)$.

\ 

To improve this construction we will replicate the covering design $n$ times for some positive integer $n$.
Define $B_i^j$ for $1 \le i \le d$ and $1 \le j \le n$ as the $i$-th block of the $j$-th copy of the covering design. We want different copies of the covering design to be over different elements, so the $v$ elements of $B^j$ are $\{(j-1)v+1,\dots, jv\}$. Define $U_i^j$ as $v'$-element subsets of $B_i^j$ such that $U_1^j$, $U_2^j$, $\ldots$, $U_d^j$ partition $\{(j-1)v+1,\dots, jv\}$. Define $R_i^j \coloneqq B_i^j \setminus U_i^j$ as the remaining $k-v'$ elements of each block. Also, for every sequence of our PCD, we define $m \coloneqq nk - (n-1)v'$ unique elements that are put at the beginning of this sequence. Let these unique elements for sequence $i$ be called $A_i$ ($A_i = \{nv+(i-1)m+1,\dots, nv+im\}$). Now we are ready to construct the sequences $s_1,\dots, s_d$ of our prefix covering design by
\[s_i = (A_i, U_i^1, U_i^2, \ldots, U_i^n, R_i^n, R_i^{n-1}, \ldots, R_i^1).\]

An example of such a construction is given in Figure \ref{fgr:scaled_d_7_construction}.

\begin{figure}
    
\begin{center}
\begin{tabular}{| c | c c c c c c c c c c c c c c c c |}
    \hline
$s_1$ & 22 & 23 & 24 & 25 & 26 & 27 & 28 & 1 & 8 & 15 & 19 & 20 & 12 & 13 & 5 & 6 \\
    \hline
$s_2$ & 29 & 30 & 31 & 32 & 33 & 34 & 35 & 2 & 9 & 16 & 15 & 21 & 8 & 14 & 1 & 7 \\
    \hline
$s_3$ & 36 & 37 & 38 & 39 & 40 & 41 & 42 & 3 & 10 & 17 & 15 & 18 & 8 & 11 & 1 & 4 \\
    \hline
$s_4$ & 43 & 44 & 45 & 46 & 47 & 48 & 49 & 4 & 11 & 18 & 19 & 16 & 12 & 9 & 5 & 2 \\
    \hline
$s_5$ & 50 & 51 & 52 & 53 & 54 & 55 & 56 & 5 & 12 & 19 & 21 & 17 & 14 & 10 & 7 & 3 \\
    \hline
$s_6$ & 57 & 58 & 59 & 60 & 61 & 62 & 63 & 6 & 13 & 20 & 16 & 17 & 9 & 10 & 2 & 3 \\
    \hline
$s_7$ & 64 & 65 & 66 & 67 & 68 & 69 & 70 & 7 & 14 & 21 & 20 & 18 & 13 & 11 & 6 & 4 \\
    \hline
\end{tabular}
\end{center}

\caption{Example of a $(7, 70, 24)$ prefix covering design obtained by a scaled construction with $v=7$ ($v'=1$), $k=3$, $d=7$ and $n=3$.}
\label{fgr:scaled_d_7_construction}
\end{figure}

We will prove that such a PCD has $K=(nk+v')d$ and $\alpha \le 3nk-(2n-3)v'$, similarly to the proof for the simpler construction. First, there are $v'd$ elements from every covering design, and there are $n$ designs, so overall, there are $nv'd$ elements from covering designs. Additionally, there are $md=nkd-(n-1)v'd$ more unique elements that we added, so $K=(nk+v')d$ indeed. It remains to check that $\alpha \le T$ where $T \coloneqq 3nk-(2n-3)v'$. We will use that $T = 2m+nk+v' = 3m + nv'$.

First, we check the singleton condition. Due to our ordering of the covering design blocks, all primary positions of all elements are located in the first $m+nv'$ levels, so $\ell_{\min}(x) \le m+nv'$ for every element $x$. If $\ell_{\min}(x) \le m$, this element has only one occurrence, and we do not need to check the singleton condition for it. If $\ell_{\min}(x) = m + (n - i)v' + j$ for some $1 \le i \le n$ and $1 \le j \le v'$, then it means that element $x$ belongs to the $(n-i+1)$-st covering design, and its other occurrences are located in the levels from $m+nv'+(i-1)(k-v') + 1$ to $m+nv'+i(k-v')$. So $\ell_{\max}(x) \le m+nv'+i(k-v')$. Consequently, $\ell_{\min}(x) + \ell_{\max}(x) \le (m + (n-i)v' + j) + (m+nv'+i(k-v')) = 2m + 2nv' + i(k-2v') + j \le 2m + 2nv' + n(k-2v') + v' = 2m + nk + v' = T < T + 1$ where we used the fact that $k - 2v' = k - 2 \frac{v}{d} \ge 0$ due to the lemma below. We have even proved a slightly stronger inequality:
\begin{equation} \label{eq:not-tight}
    \ell_{\min}(x) + \ell_{\max}(x) \le T.
\end{equation}

\begin{lemma}
    For every $(v, k, 2)$ covering design with $d \ge 2$ blocks, $k \ge 2 v / d$ holds.
\end{lemma}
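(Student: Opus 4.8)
The plan is to prove the equivalent inequality $kd \ge 2v$ by a double counting argument on element--block incidences, after first disposing of the degenerate case $k=v$.

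First I would note that if $k=v$ then trivially $kd = vd \ge 2v$, since $d\ge 2$; so from now on one may assume $k<v$. The key step is then the claim that \emph{in this regime every element of $[v]$ lies in at least two blocks.} Write $r_x$ for the number of blocks $B_i$ with $x\in B_i$. Since $v\ge 2$ there is some $y\ne x$, and the pair $\{x,y\}$ must be contained in some block, so $r_x\ge 1$. Suppose $r_x=1$ and let $B_i$ be the unique block containing $x$. Then for every $y\ne x$ the pair $\{x,y\}$ can only be covered by $B_i$, so $y\in B_i$; as this holds for all $y\ne x$, we get $B_i=[v]$ and hence $k=|B_i|=v$, contradicting $k<v$. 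Therefore $r_x\ge 2$ for every $x\in[v]$.

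The conclusion is then immediate by counting incidences in two ways:
\[
 kd \;=\; \sum_{i=1}^{d}|B_i| \;=\; \sum_{x\in[v]} r_x \;\ge\; 2v,
\]
which is exactly $k\ge 2v/d$.

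The only conceptual obstacle is recognizing that the ``obvious'' approach --- counting covered pairs, which gives $dk(k-1)\ge v(v-1)$ and hence only $k\gtrsim v/\sqrt{d}$ --- is the wrong tool here: it is too weak for small $d$ (for instance it does not force $k\ge 2v/d$ when $d=3$). The correct, stronger bound instead rests on the observation that once $k<v$, no single block can cover all pairs through a fixed element, forcing each element into at least two blocks; everything after that is routine.
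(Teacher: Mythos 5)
Your proof is correct and follows essentially the same route as the paper: handle $k\ge v$ trivially, show that $k<v$ forces every element into at least two blocks (since a unique block containing $x$ could not cover all $v-1$ pairs through $x$), and finish by double counting incidences to get $kd\ge 2v$. No issues.
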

\begin{claimproof}[Proof of Lemma]
    If $k < v$, then every element $x \in [v]$ should be located in at least two sets: otherwise, we would cover only $k-1 < v-1$ pairs involving $x$, which contradicts the fact that it is a covering design. But if every element is located in at least two sets, then the sum of all set sizes $kd$ is at least $2v$. Dividing both numbers by $d$, we get the desired inequality.

    If $k \ge v$, then $k \ge v \ge 2 \frac{v}{d}$ because $d \ge 2$.
\end{claimproof}

Second, we check the triplet condition. Consider any three elements $a$, $b$ and $c$.
Define their primary positions as $(i_a, \ell_a)$, $(i_b, \ell_b)$ and $(i_c, \ell_c)$ respectively. Without loss of generality, assume that $\ell_a \le \ell_b \le \ell_c$.
Consider two cases.

\begin{enumerate}
    \item If at most one element out of these three is from covering designs, we know that $\ell_a \le m$, $\ell_b \le m$ and $\ell_c \le m + nv'$, so we can cover them with prefixes $s_{i_a}[.. \ell_a]$, $s_{i_b}[.. \ell_b]$ and $s_{i_c}[.. \ell_c]$ with total size $\ell_a + \ell_b + \ell_c \le m + m + (m + nv') = 3m+nv'=T$.

    \item If at least two elements out of these three are from covering designs, then $b$ and $c$ are in the covering designs. By the definition of a covering design there should be a sequence $s_i$ that contains both $b$ and $c$.\footnote{$b$ and $c$ may belong to different copies of our covering design, but all copies are identical, so equivalent elements from all covering designs occur in the same sequences, so there indeed should exist such $s_i$.} Then we can cover all three elements with two prefixes: $s_i[.. \max(\ell_b^i, \ell_c^i)]$ and $s_{i_a}[.. \ell_a]$ where $\ell_b^i$ and $\ell_c^i$ are positions of elements $b$ and $c$, respectively, in the sequence $i$. We already know that elements $b$ and $c$ satisfy \eqref{eq:not-tight}. It follows that $\ell_a + \ell_b^i \le \ell_b + \ell_b^i \le T$ and $\ell_a + \ell_c^i \le \ell_c + \ell_c^i \le T$. From this we can conclude that $\ell_a + \max(\ell_b^i, \ell_c^i) \le T$, as desired.
\end{enumerate}

This concludes the proof that $\alpha \le T = 3nk-(2n-3)v'$.
This construction gives us a bound $\gamma_d \ge \frac{K}{\alpha} \ge \frac{(nk+v')d}{3nk-(2n-3)v'} = \frac{(k+ \frac{v'}{n})d}{3k-(2-\frac{3}{n})v'}$ where $n \in \mathbb{N}$ can be chosen arbitrarily. When $n$ approaches infinity, this value approaches $\frac{kd}{3k-2v'} = \frac{d}{3 - \frac{2v'}{k}} = \frac{d}{3-2\frac{v}{kd}}$. Thus, for every $\varepsilon > 0$ there exists $n$ such that such a construction gives $\frac{K}{\alpha} \ge \frac{d}{3-2 \frac{v}{kd}} - \varepsilon$, as desired.
\end{proof}

We say that a $(v, k, 2)$ covering design with $d$ blocks \emph{admits a multi-matching} if for every block $B_i$ we can choose a subset $U_i$ of size $v / d$ such that $U_1, U_2, \ldots, U_d$ partition $[v]$. The following observation shows that in Theorem \ref{thm:transformation} it is not a restriction to assume that $v$ is divisible by $d$, since we can always suitably scale covering designs:

\begin{observation} \label{obs:covering-design-scaling}
    Every $(v, k, 2)$ covering design with $d$ blocks can be transformed into a $(vd, kd, 2)$ covering design with $d$ blocks by replacing each of $v$ elements with $d$ distinct elements.
	If this scaled covering design admits a multi-matching, we get a lower bound for $\gamma_d$ of $\frac{d}{3 - 2 \frac{(vd)}{(kd)d}} = \frac{d}{3 - 2 \frac{v}{kd}}$.
\end{observation}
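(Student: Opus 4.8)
The plan is to verify directly that the element-replacement operation preserves the covering property and then to track how the parameters $v$, $k$, and the multi-matching condition transform. First I would fix a $(v,k,2)$ covering design with blocks $B_1,\dots,B_d$ over the ground set $[v]$, and introduce a map $\phi$ that sends each element $x\in[v]$ to a set $\phi(x)$ of $d$ fresh distinct elements, with the sets $\phi(1),\dots,\phi(v)$ pairwise disjoint, so that $\bigcup_{x}\phi(x)$ has size $vd$; concretely one can take $\phi(x)=\{(x-1)d+1,\dots,xd\}$. The new blocks are $B_i' \coloneqq \bigcup_{x\in B_i}\phi(x)$, each of size $kd$, and there are still $d$ of them. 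To see that this is a valid $(vd,kd,2)$ covering design, take any pair $\{y,y'\}$ of distinct elements of the new ground set; they lie in $\phi(x)$ and $\phi(x')$ for some $x,x'\in[v]$ (possibly equal). If $x=x'$, then $\phi(x)\subseteq B_i'$ for any block $B_i$ containing $x$, and such a block exists since $\{x\}$ (indeed any pair through $x$) is covered in the original design. If $x\ne x'$, the original covering property gives a block $B_i\ni x,x'$, hence $B_i'\supseteq \phi(x)\cup\phi(x')\ni y,y'$. Either way $\{y,y'\}$ is covered, so the scaled design is a $(vd,kd,2)$ covering design with $d$ blocks.

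For the second statement, I would simply observe that the scaled design has parameters $v_{\mathrm{new}}=vd$, $k_{\mathrm{new}}=kd$, with $d$ blocks, and that $v_{\mathrm{new}}=vd$ is divisible by $d$ (with quotient $v$), so Theorem~\ref{thm:transformation} is applicable provided the multi-matching hypothesis holds for the scaled design. Under the stated assumption that the scaled design admits a multi-matching — i.e.\ each $B_i'$ contains a subset $U_i'$ of size $v_{\mathrm{new}}/d = v$ with $U_1',\dots,U_d'$ partitioning the $vd$-element ground set — Theorem~\ref{thm:transformation} yields
\[
\gamma_d \ge \frac{d}{3 - 2\frac{v_{\mathrm{new}}}{k_{\mathrm{new}}\, d}} = \frac{d}{3 - 2\frac{vd}{(kd)d}} = \frac{d}{3 - 2\frac{v}{kd}},
\]
which is exactly the claimed bound, and note it coincides with the bound Theorem~\ref{thm:transformation} would have given for the original (unscaled) parameters had $d \mid v$ held. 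This is why scaling costs us nothing: the ratio $\frac{v}{kd}$ governing the exponent is scale-invariant.

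I do not anticipate a serious obstacle here; the only mild subtlety is making the multi-matching bookkeeping precise. One natural remark worth including is that a multi-matching for the \emph{original} design (a partition of $[v]$ into sets $U_i\subseteq B_i$ of size $v/d$, when $d\mid v$) automatically lifts to one for the scaled design by taking $U_i' = \bigcup_{x\in U_i}\phi(x)$, which has size $(v/d)\cdot d = v = v_{\mathrm{new}}/d$; but the observation is stated so as to only require the multi-matching of the scaled design, which is the weaker and more convenient hypothesis when $d\nmid v$ in the original. If one wants, the disjointness and covering verifications above can be compressed to a single sentence, since they are entirely routine once $\phi$ is fixed.
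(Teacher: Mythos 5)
Your proof is correct and follows exactly the argument the paper leaves implicit (the observation is stated without proof there): replace each element by $d$ fresh ones, check the two cases $x=x'$ and $x\ne x'$ for pair coverage, and feed the scaled parameters into Theorem~\ref{thm:transformation}, noting that the ratio $\frac{v}{kd}$ is scale-invariant. Your extra remark that a multi-matching of the original design lifts to the scaled one is a nice addition but not needed for the statement as given.
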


The bound we give depends on the existence of specific covering designs admitting multi-matchings. In Section~\ref{sec:constructions-appendix}, we transform this lower bound into a general lower bound depending only on $d$; in Table~\ref{fgr:covering-design-bounds-table}, we list the bounds we obtain for specific values of $d$. These values were obtained as follows: for a fixed value of $d$, the best lower bound can be obtained by finding a covering design that minimizes the value $\mathrm{freq} \coloneqq \frac{kd}{v}$ which we call \emph{frequency} (for a fixed covering design, the frequency is the average number of occurrences of elements). We searched for covering designs in the La Jolla Covering Repository~\cite{coveringDesigns}, fixing the number of blocks to $d$ and choosing the ones with the smallest frequencies. Then we multiplied the number of elements and set sizes in these covering designs by $d$ using Observation \ref{obs:covering-design-scaling} (because Theorem \ref{thm:transformation} works only for covering designs with $v$ divisible by $d$) and checked whether they admit multi-matching. Perhaps surprisingly,
for all specific values of $d$ that we checked, the obtained covering designs indeed admit a multi-matching. The covering designs used and their multi-matchings can be found in \cite{ourRepo} along with a computer program that checks that provided constructions are indeed covering designs, and they indeed admit multi-matchings.

The question remains whether the frequency in some dimension $d$ could be minimized by a covering design that does \emph{not} admit a multi-matching. 
Indeed, one can construct covering designs that do not admit a multi-matching.
However, since we aim to minimize the frequencies, we are considering covering designs that should have a relatively small degree of redundancy~--- otherwise, they probably could be improved. In Section~\ref{sec:constructions-appendix}, we formulate the corresponding conjecture that ``sufficiently good'' covering designs always admit a multi-matching and discuss some evidence. We also provide weaker bounds obtained from covering designs not admitting multi-matchings.

\section{Limits} \label{sec:limits}

In this section, we prove limits of prefix covering designs, i.e., upper bounds on $\gamma_d = \sup \{ \frac{K}{\alpha} \mid \text{there exists a } (d,K,\alpha) \text{ prefix covering design}\}$. The proof that $\gamma_4<2$ is deferred to Section~\ref{sec:limits-appendix} in the appendix.

The following lemma formalizes the intuition that increasing the value of $K$ can only lead to better (more precisely, not worse) prefix covering designs. 

\begin{lemma}[Scaling Lemma]\label{lemma:scaling}
	For every $(d, K, \alpha)$ PCD and positive integer $\lambda\in \mathbb{N}$, there also exists a $(d, \lambda \cdot K, \lambda \cdot \alpha)$ PCD.
\end{lemma}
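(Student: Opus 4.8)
The plan is to take a $(d,K,\alpha)$ PCD with sequences $s_1,\dots,s_d$ over $[K]$ and build a $(d,\lambda K,\lambda\alpha)$ PCD by ``blowing up'' each element $x\in[K]$ into a block of $\lambda$ fresh symbols $x^{(1)},\dots,x^{(\lambda)}$, and replacing each occurrence of $x$ at a position in some $s_i$ by the consecutive run $x^{(1)},x^{(2)},\dots,x^{(\lambda)}$ in that sequence, in order. Concretely, define new sequences $t_i$ obtained from $s_i$ by substituting, for every position $\ell$, the single symbol $s_i[\ell]$ with the length-$\lambda$ block $(s_i[\ell])^{(1)},\dots,(s_i[\ell])^{(\lambda)}$; so $t_i$ has length $\lambda|s_i|$ and the new ground set has size $\lambda K$. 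The key bookkeeping fact is that a position $\ell$ in $s_i$ corresponds to the block of positions $\lambda(\ell-1)+1,\dots,\lambda\ell$ in $t_i$; in particular, the prefix $s_i[..\ell]$ corresponds exactly to the prefix $t_i[..\lambda\ell]$, and more generally the elements appearing in $t_i[..m]$ for $\lambda(\ell-1)<m\le \lambda\ell$ are precisely the blow-ups of the elements of $s_i[..\ell-1]$ together with the symbols $(s_i[\ell])^{(1)},\dots,(s_i[\ell])^{(m-\lambda(\ell-1))}$.

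Next I would verify the triplet condition for the $t_i$. Take any three new symbols; each is of the form $x^{(p)}$, $y^{(q)}$, $z^{(r)}$ for some (not necessarily distinct) $x,y,z\in[K]$ and $p,q,r\in[\lambda]$. First suppose $x,y,z$ are distinct, so $\{x,y,z\}\in\binom{[K]}{3}$; by the triplet condition for the original PCD there are $i,i',i''$ and $\ell,\ell',\ell''$ with $\ell+\ell'+\ell''\le\alpha$ covering $\{x,y,z\}$ by $s_i[..\ell],s_{i'}[..\ell'],s_{i''}[..\ell'']$. Then $t_i[..\lambda\ell], t_{i'}[..\lambda\ell'], t_{i''}[..\lambda\ell'']$ cover all of $x^{(1..\lambda)}\cup y^{(1..\lambda)}\cup z^{(1..\lambda)}$, hence in particular $x^{(p)},y^{(q)},z^{(r)}$, and the total prefix length is $\lambda(\ell+\ell'+\ell'')\le\lambda\alpha$. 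If instead two or three of $x,y,z$ coincide, the situation only gets easier: e.g. if $x=y\ne z$, pick a single sequence $s_i$ and a prefix $s_i[..\ell]$ containing $x$ (take $\ell=\ell_{\min}(x)$) and a prefix $s_{i'}[..\ell']$ containing $z$; then inside $t_i$ the block for position $\ell$ contains all of $x^{(1)},\dots,x^{(\lambda)}$, so $t_i[..\lambda\ell]$ covers $x^{(p)}$ and $y^{(q)}=x^{(q)}$, while $t_{i'}[..\lambda\ell']$ covers $z^{(r)}$; one checks the length bound $\lambda\ell+\lambda\ell'\le\lambda\alpha$ using, for the singleton case, the singleton condition of the original design (and the all-equal case $x=y=z$ needs only one short prefix). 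The one mild subtlety here — the main thing to be careful about — is that a ``new'' symbol $x^{(p)}$ with $x$ occurring only once in the original design now still occurs only within one block of one $t_i$, so it genuinely is a singleton in the new design and imposes no constraint; and conversely if $x$ occurs more than once, then so does each $x^{(p)}$, and we must make sure the singleton condition transfers.

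For the singleton condition, fix $x\in[K]$ occurring more than once in $s_1,\dots,s_d$, and fix $p\in[\lambda]$. Let $\ell_{\min}=\ell_{\min}(x)$ and $\ell_{\max}=\ell_{\max}(x)$ in the original design, witnessed by occurrences $s_i[\ell_{\min}]=x$ and $s_j[\ell_{\max}]=x$. In the blown-up design, the occurrence of $x^{(p)}$ coming from the position $\ell_{\min}$ in $s_i$ sits at level $\lambda(\ell_{\min}-1)+p$ in $t_i$, and the one coming from $\ell_{\max}$ in $s_j$ sits at level $\lambda(\ell_{\max}-1)+p$ in $t_j$; every occurrence of $x^{(p)}$ lies at level $\lambda(\ell-1)+p$ for some occurrence level $\ell$ of $x$, so in the new design $\ell_{\min}(x^{(p)})=\lambda(\ell_{\min}-1)+p$ and $\ell_{\max}(x^{(p)})=\lambda(\ell_{\max}-1)+p$. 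Hence
\[
\ell_{\min}(x^{(p)})+\ell_{\max}(x^{(p)}) = \lambda(\ell_{\min}+\ell_{\max}-2)+2p \le \lambda(\alpha-1) + 2\lambda = \lambda(\alpha+1) = (\lambda\alpha)+1 + (\lambda-1)\cdot 1,
\]
wait — that overshoots by $\lambda-1$, so I would instead argue more tightly: since $\ell_{\min}+\ell_{\max}\le\alpha+1$ and $p\le\lambda$ we get $\ell_{\min}(x^{(p)})+\ell_{\max}(x^{(p)}) \le \lambda(\alpha-1)+2\lambda = \lambda(\alpha+1)$, and we need $\le\lambda\alpha+1$; this is fine exactly when $\lambda(\alpha+1)\le\lambda\alpha+1$, i.e. only for $\lambda=1$, so the naive blow-up is slightly too lossy and the construction must be refined. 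The fix — and I expect this to be the one real obstacle — is to observe that within a single level-block the $\lambda$ copies of $x$ are consecutive, so one should not pay $\lambda$ separately for $\ell_{\min}$ and for $\ell_{\max}$: choosing in $t_i$ to place the copies in increasing order of superscript but in $t_j$ (the $\ell_{\max}$ witness) in \emph{decreasing} order, or equivalently pairing copy $p$ at the low end with copy $p$ at a position accounting for the block offset, gives $\ell_{\min}(x^{(p)}) = \lambda(\ell_{\min}-1)+p$ and $\ell_{\max}(x^{(p)}) = \lambda(\ell_{\max}-1) + (\lambda - p + 1)$, whence $\ell_{\min}(x^{(p)})+\ell_{\max}(x^{(p)}) = \lambda(\ell_{\min}+\ell_{\max}-2) + \lambda + 1 \le \lambda(\alpha-1)+\lambda+1 = \lambda\alpha - \lambda + \lambda + 1$? that gives $\lambda\alpha+1$ only if $\ell_{\min}+\ell_{\max}=\alpha+1$, and $\le$ it otherwise — so with this reversed-ordering trick the bound $\lambda\alpha+1$ holds. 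One then re-checks the triplet case with coinciding elements under this chosen ordering; since each level-block of $t_i$ still contains \emph{all} $\lambda$ copies of the corresponding original element, the covering arguments above are unaffected. I would present the final construction with this reversal built in from the start (defining $t_i$ to substitute $(s_i[\ell])^{(1)},\dots,(s_i[\ell])^{(\lambda)}$ and separately tracking, per copy $p$, the right level), state the three identities for $\ell_{\min}(x^{(p)}),\ell_{\max}(x^{(p)})$, and conclude the singleton and triplet conditions, yielding a valid $(d,\lambda K,\lambda\alpha)$ PCD.
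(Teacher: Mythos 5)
Your proposal is correct and takes essentially the same route as the paper: blow up each element $x$ into $\lambda$ consecutive copies, written in increasing order at the primary occurrence and in reversed order at the copies, which yields $\ell_{\min}(x^{(p)})+\ell_{\max}(x^{(p)})=\lambda\bigl(\ell_{\min}(x)+\ell_{\max}(x)-1\bigr)+1\le\lambda\alpha+1$, exactly as in the paper's proof. Two small points to tighten in the write-up: the reversal must be applied at \emph{every} non-primary occurrence (not only at one $\ell_{\max}$-witness), since an unreversed copy at the same maximal level would invalidate your identity for $\ell_{\max}(x^{(p)})$; and for triplets of new symbols with coinciding parents it is cleaner to do what the paper does, namely pad $\{x,z\}$ to a genuine triplet of $[K]$ and invoke the original triplet condition, rather than arguing via $\ell_{\min}(x)+\ell_{\min}(z)$, which the singleton condition alone does not bound by $\alpha$.
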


The proof of this fact is deferred to Section \ref{sec:limits-appendix} in the appendix.

\begin{proof}[Proof of Proposition \ref{prop:square_root_lower_bound}]
    For a fixed $(d, K, \alpha)$ PCD define $g \coloneqq \left\lceil \frac{K}{d} \right\rceil$.
    If $\alpha \ge 3g$ then $\frac{K}{\alpha} \le \frac{K}{3g} \le \frac{K}{3K/d}=\frac{d}{3}$ and the proposition statement holds. Otherwise define $a \coloneqq \left\lceil g - \frac{\alpha}{3} \right\rceil \ge 1$, i.e., $3(g - a) \le \alpha < 3(g - a + 1)$. We will prove that $a < \sqrt{\frac{2}{d}} + 2$. If $a = 1$, it is correct, so from now on we assume that $a \ge 2$.

	Define $B$ as the set of all elements $x$ that have $\ell_{\min}(x) > g - a$. We claim that $|B|\ge d(a-1)$: The number of (not necessarily distinct) elements in the first $(g-a)$ positions (over all $s_i$) is $d \cdot (g - a) = dg - da < d \cdot (\frac{K}{d} + 1) - da = K - d(a-1)$. Since there are $K$ distinct numbers in total, the claim follows.

	To prove the proposition, we will define a graph $G_B$ with vertex set $B$. We connect two elements $x,y\in B$ by an edge if and only if there is some sequence $s_i$ containing both $x$ and $y$. We obtain our result by proving both an upper and a lower bound on the number of edges.

For a lower bound on the number of edges, consider how triplets $\{a, b, c\} \in \binom{B}{3}$ are covered by prefixes: For any such triplet $\{a,b,c\}$, there are prefixes $s_i[.. \ell]$, $s_{i'}[.. \ell']$ and $s_{i''}[.. \ell'']$ which contain $a$, $b$ and $c$ and satisfy $\ell+ \ell'+\ell'' \le \alpha$.

\begin{claim}
    Without loss of generality, we may assume that at least one of $\ell$, $\ell'$ and $\ell''$ is zero.
\end{claim}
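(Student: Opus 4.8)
The plan is to argue that we can always ``trim'' a covering of a triplet so that one of the three prefixes is empty, without increasing the total length $\ell+\ell'+\ell''$. Intuitively, when all three prefixes are nonempty, at least two of the three elements $a,b,c$ are covered only by ``partial'' prefixes, and we can hope to re-cover that triplet using at most two prefixes by exploiting that all of $a,b,c$ lie in $B$, hence have $\ell_{\min} > g-a$, which is a fairly large value. Concretely, I would look at the three prefixes $s_i[..\ell], s_{i'}[..\ell'], s_{i''}[..\ell'']$ covering $\{a,b,c\}$ with $\ell+\ell'+\ell''\le\alpha$ and assume, toward trimming, that all are nonzero. Assume WLOG $\ell\le\ell'\le\ell''$; then $3\ell\le\alpha<3(g-a+1)$, so $\ell\le g-a$ (since $\ell$ is an integer and $\ell < g-a+1$). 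But every element of $B$ has $\ell_{\min}>g-a$, i.e., no element of $B$ appears in any prefix of length $\le g-a$. Hence the prefix $s_i[..\ell]$ contains \emph{none} of $a,b,c$ and can be dropped entirely, leaving $\{a,b,c\}$ covered by $s_{i'}[..\ell'], s_{i''}[..\ell'']$ with $\ell'+\ell''\le\alpha$ and the remaining ``third prefix'' taken to be of length $0$. This establishes the claim.

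The key steps, in order: (1) Fix a triplet $\{a,b,c\}\in\binom{B}{3}$ and its covering prefixes with lengths summing to at most $\alpha$; sort the lengths so $\ell$ is smallest. (2) Use $3\ell\le\ell+\ell'+\ell''\le\alpha<3(g-a+1)$ together with integrality to get $\ell\le g-a$. (3) Invoke the definition of $B$: every $x\in B$ satisfies $\ell_{\min}(x)>g-a$, so $x\notin s_j[..g-a]$ for all $j$; in particular none of $a,b,c$ is among the first $\ell\le g-a$ entries of $s_i$, so the prefix $s_i[..\ell]$ is useless and we may replace $\ell$ by $0$. (4) Conclude that WLOG the covering of any triplet in $\binom{B}{3}$ uses (at most) two nonempty prefixes, which is exactly the claim's assertion that one of $\ell,\ell',\ell''$ may be taken to be zero.

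I do not expect a serious obstacle here — the claim is essentially a bookkeeping observation that sets up the subsequent double-counting of edges of $G_B$. The only mild subtlety to get right is the integrality argument in step (2): we need $\ell<g-a+1$ with $\ell\in\mathbb N_0$ to deduce $\ell\le g-a$, which requires $g-a$ to be a nonnegative integer — and indeed $g=\lceil K/d\rceil\in\mathbb N$ and $a=\lceil g-\alpha/3\rceil$ was chosen precisely so that $3(g-a)\le\alpha$, with the earlier reduction to the case $a\ge 2$ (so in particular $g-a\ge g-\lceil g\rceil$ could be negative only if... ) — one should double-check that $g-a\ge 0$, or else note that if $g-a<0$ the condition $\ell_{\min}(x)>g-a$ is vacuous for all $x$ and the original set $B$ would be all of $[K]$, a case that can be handled separately or simply does not arise given $\alpha<3g$ forces $a\le g$. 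With that checked, the claim follows immediately, and it is then used to write each triplet's cover as two prefixes, enabling the edge-counting lower bound in $G_B$ that drives the rest of the proof of Proposition~\ref{prop:square_root_lower_bound}.
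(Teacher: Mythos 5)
Your proof is correct and follows essentially the same route as the paper: the paper phrases it as a case split (either all three lengths are at least $g-a+1$, contradicting $\ell+\ell'+\ell''\le\alpha<3(g-a+1)$, or some length is at most $g-a$ and that prefix contains no element of $B$ since every $x\in B$ has $\ell_{\min}(x)>g-a$), which is exactly your ``take the minimum length, bound it by $g-a$, and drop that prefix'' argument. Your side remark about $g-a\ge 0$ is easily settled since $g-a=g-\lceil g-\alpha/3\rceil=\lfloor\alpha/3\rfloor\ge 0$.
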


\begin{proof}
    If all of them are at least $g - a + 1$, then $\ell+\ell'+\ell'' \ge 3(g-a+1) > \alpha$, which yields a contradiction. Otherwise, if at least one of them is at most $g-a$, then this prefix cannot contain any of $a$, $b$ and $c$ as $\ellmin(a),\ellmin(b),\ellmin(c)> g-a$. We can set this prefix to the empty prefix without loss of generality.
\end{proof}

So indeed, we can imagine that triplets of elements in $B$ must be covered by using only two prefixes, not three. In particular, for every triplet of elements from $B$, at least two of them must occur in the same sequence, i.e., they must have an edge in $G_B$. Put differently, the complement graph of $G_B$ is triangle-free and thus contains at most $|B|^2/4$ edges by Mantel's Theorem~\cite{Mantel07} (a special case of Turan's Theorem). We conclude that $G_B$ has at least $\binom{|B|}{2} - \frac{|B|^2}{4} = \frac{|B|^2 - 2 |B|}{4} \ge \frac{(|B|-2)^2}{4}$ edges because $|B| \ge 2$.

	We now show that either the number of edges is at most $dg^2/2$ or $|B|\le 2g$.  We ask on which positions elements from $B$ can be located in the sequences. We know that $\ell_{\min}(x) \ge g - a + 1$ for any $x\in B$. At the same time, if some element from $B$ is located in position $\ge 2(g-a)+3$ (in some sequence $i$), then this must be its only occurrence since otherwise, it would violate the singleton condition. Furthermore, any covering of a triplet with such an element cannot contain elements from $B$ in other sequences because it would take a prefix of length at least $2(g-a)+3$ in sequence $i$ and a prefix of length at least $g - a + 1$ in some other sequence, which would violate the triplet condition. From this, we can conclude that if every triplet with this element and other elements in $B$ is covered, all elements from $B$ have to occur in sequence $i$. We can assume that all elements have indices at most $\alpha$ (otherwise, they are useless for coverings), so there are at most $\alpha - (g - a) \le 2g$ elements from $B$ in this sequence. This yields $|B| \le 2g$.
In the remaining case all $x\in B$ satisfy $\ell_{\max}(x) \le 2(g - a) + 2$ and $\ell_{\min}(x) > (g - a)$, so there are at most $g - a + 2 \le g$ elements from $B$ in each sequence. Thus, there are at most $d \cdot \binom{g}{2}$ pairs of elements from $B$ that occur in the same sequence.

    From the above lower and upper bounds on the number of edges, we derive that
\[\frac{(|B| - 2)^2}{4} \le d \cdot \binom{g}{2} < \frac{dg^2}{2}.\]
    Combining this with the fact that $d(a-1) \le |B|$, we deduce that $d(a - 1) - 2 \le |B| - 2 < \sqrt{2d} g$. (Note that in the case $|B| \le 2g$, the upper bound is trivially satsified since $d \ge 2$.) Consequently,
    \[a < \frac{\sqrt{2d}g + 2}{d} + 1 \le \sqrt{\frac{2}{d}}g + 2\]
    for $d \ge 2$. We plug this inequality into our initial inequality on  $\alpha$:
    \[\alpha \ge 3(g-a) > 3g\left(1 - \sqrt{\frac{2}{d}} - \frac{2}{g}\right) \ge \frac{3K}{d} \left(1 - \sqrt{\frac{2}{d}} - \frac{2}{g}\right).\]
    It follows that
    \[\frac{K}{\alpha} \le \frac{K}{\frac{3K}{d} \left(1 - \sqrt{\frac{2}{d}} - \frac{2}{g}\right)} = \frac{d}{3 \cdot \left(1 - \sqrt{\frac{2}{d}} - \frac{2}{g} \right)}.\]

    Due to Scaling Lemma \ref{lemma:scaling} we know that if there exists a $(d, K, \alpha)$ PCD then there also exists a $(d, K \cdot \lambda, \alpha \cdot \lambda)$ PCD for every positive integer $\lambda$. If we plug this covering design into the inequality above, we will get that
    \[\frac{K}{\alpha} = \frac{\lambda \cdot K}{\lambda \cdot \alpha} \le \frac{d}{3 \cdot \left(1 - \sqrt{\frac{2}{d}} - \frac{2}{g'} \right)}\]
    where $g' \coloneqq \left\lceil \frac{K \cdot \lambda}{d} \right\rceil$. If we take $\lambda \to +\infty$ then $\frac{2}{g'} \to 0$ and in the limit, we get the desired upper bound on $\frac{K}{\alpha}$:
    \[\frac{K}{\alpha} \le \frac{d}{3 \cdot \left(1 - \sqrt{\frac{2}{d}} \right)} = \frac{d}{3} \cdot \left(1 + \frac{\sqrt{\frac{2}{d}}}{1 - \sqrt{\frac{2}{d}}}\right) = \frac{d}{3} + \frac{\sqrt{2d}}{3\left(1 - \sqrt{\frac{2}{d}}\right)} = \frac{d}{3} + \sqrt{\frac{2}{9}}\cdot \sqrt{d} + o(\sqrt{d}).\qedhere\]
\end{proof}

\section{Conclusion and Outlook}
\label{sec:conclusion}

In this work, we make progress on obtaining tight conditional lower bounds for Klee's measure problem and related problems for $d\ge 4$. We give improved lower bounds that leave gaps of only $O(n^{0.09524})$, $O(n^{0.27778})$ and $O(n^{0.4286})$ for $d=4$, $d=5$ and $d=6$, respectively. On the negative side, we prove that the proof technique via prefix covering designs and Proposition~\ref{prop:basic_reductions}~--- despite yielding a tight lower bound for $d=3$~--- cannot give tight lower bounds for $d\ge 4$, so that a novel reduction approach is needed for this task. Of course, it remains a tantalizing possibility that the $n^{d/2 \pm o(1)}$ running time for Klee's measure problem for large dimensions $d\ge 4$ can be broken. 

We feel that the prefix covering designs formalized in this work are interesting in their own right. We establish a connection to the well-studied covering designs, by giving a framework that turns 2-covering designs into prefix covering designs. This connection leads to the asymptotic bound $\gamma_d=\frac{d}{3} + \Theta(\sqrt{d})$, leading to an $n^{d/3 + \Theta(\sqrt{d})}$ conditional lower bound for Klee's measure problem and related problems, improving over a previous bound of $n^{d/3 + 1/3 + \Omega(1/d)}$.

\bibliography{refs}

\appendix

\section{Reductions} \label{sec:reductions}

In this section, we give the reductions from the 3-uniform hyperclique problem to all considered problems.

We first state the 3-uniform $K$-hyperclique problem, using a $K$-partite formulation that is well-known to be equivalent to the general case:

\begin{problem}[3-uniform $K$-hyperclique]
Let $K\ge 4$. We are given a $K$-partite 3-uniform hypergraph $G=(V,E)$, i.e., $V$ is the disjoint union of vertex parts $V^{(1)},\dots, V^{(K)}$, containing $n$ nodes each, and $E\subseteq \binom{V}{3}$ such that every edge connects 3 vertices from different vertex parts. The task is to determine whether there are $v_1\in V^{(1)},\dots, v_K \in V^{(K)}$ forming a $K$-clique, i.e., $\{v_i,v_j,v_k\}\in E$ for all $\{i,j,k\}\in \binom{[K]}{3}$.
\end{problem}

While the $K$-clique problem in \emph{graphs} is well known to be solvable in time $O(n^{(\omega/3)K})$ (whenever $K$ is divisible by 3), no algorithm is known that beats brute-force running time $n^{K\pm o(1)}$ for the 3-uniform $K$-hyperclique problem by a polynomial factor. In fact, it is known that the algebraic approach for the fast $K$-clique algorithms cannot work for $K$-hyperclique, see~\cite{LincolnVWW18} for a thorough discussion. 
Correspondingly, the following hypothesis has been formulated:

\begin{hypo}[3-uniform $K$-hyperclique hypothesis]
	For no $K\ge 4$ and $\epsilon > 0$, there is an $O(n^{K-\epsilon})$-time algorithm for the 3-uniform $K$-hyperclique problem.
\end{hypo}
A refutation of the 3-uniform $K$-hyperclique hypothesis would give a novel $O(2^{(1-\epsilon)n})$-algorithm for Max3SAT~\cite{AbboudBDN18, LincolnVWW18}, as well as improved algorithms for Boolean CSPs parameterized by solution size~\cite{KunnemannM20} and model-checking certain first-order properties~\cite{BringmannFK19}. It has been used to give strong conditional lower bounds for various problems, e.g.~\cite{AbboudBDN18, LincolnVWW18, BringmannFK19, VWilliamsX20, CarmeliZBKS20, AnGIJKN22, BringmannKKNP22, DalirrooyfardVW22, DalirrooyfardJVWW22}.

We turn to the first set of problems: In the \emph{coverage problem}, we receive $n$ axis-parallel boxes $A_1,\dots, A_n$ and a bounding box $B$ as input, and ask whether the union of $A_1,\dots, A_n$ is equal to $B$. Trivially, this problem is a special case of \emph{Klee's measure problem}, which receives $A_1,\dots, A_n$ as input and asks to compute the volume of the union of $A_1,\dots, A_n$. Finally, the \emph{depth problem} for unweighted axis-parallel box, receives $A_1,\dots, A_n$ as input and asks to find a point that is contained in the \emph{maximum number of boxes}, which we call the \emph{depth} of $A_1,\dots, A_n$. It is not difficult to see (cf., e.g.~\cite{Chan10}) that the coverage problem reduces to the depth problem with $n2^d$ boxes: For each $A_i$, compute a set of $2^d$ boxes that cover the complement of $A_i$ in $B$. The depth of these boxes is (at least) $n$ if and only if the original coverage problem is a NO instance.

Thus, by giving an efficient reduction to the coverage problem, we obtain the same lower bounds for Klee's measure problem and the depth problem. The following proposition is implicit in the tight conditional lower bound for Klee's measure problem and the depth problem in 3D~\cite{Kunnemann22}: 

\begin{proposition}[Implicit in~\cite{Kunnemann22}]\label{prop:coverage}
Let $d, K, \alpha \in \NN$ such that there exists a $(d,K,\alpha)$ prefix covering design. Unless the 3-uniform Hyperclique Hypothesis fails, there is no $\epsilon > 0$ such that there exists an $O(n^{\frac{K}{\alpha} - \epsilon})$-time algorithm for:
	\begin{itemize}
        \item the coverage problem for axis-parallel boxes in $\mathbb{R}^d$.
		\item the depth problem for unweighted axis-parallel boxes in $\mathbb{R}^d$.
		\item Klee's measure problem for axis-parallel boxes in $\mathbb{R}^d$.
	\end{itemize}
\end{proposition}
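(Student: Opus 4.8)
\textbf{Proof plan for Proposition~\ref{prop:coverage}.}
By the reductions sketched above (coverage $\le$ Klee's measure problem trivially, and coverage $\le$ depth with a factor-$2^d$ blowup in the number of boxes), it suffices to exhibit a reduction from the 3-uniform $K$-hyperclique problem to the coverage problem in $\mathbb{R}^d$ that, given an instance with vertex parts of size $n$, produces $O(n^\alpha)$ boxes in $\mathbb{R}^d$ inside a bounding box $B = \{0,\dots,U-1\}^d$ with $U = n^L$, where $L = \max_i |s_i|$. The plan is to make the target box $Q$ a \emph{point} $q=(q_1,\dots,q_d)\in B$ (that is, we will actually build boxes $A_1,\dots,A_m$ and ask whether they cover $B$; the uncovered points of $B$ will correspond exactly to hypercliques). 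As in the proof sketch of Proposition~\ref{prop:basic_reductions}, we encode each coordinate $q_i$ in base $n$ as $(q_i[1],\dots,q_i[L])$ and interpret $q_i[\ell]$ as selecting the $(q_i[\ell]+1)$-st vertex of $V^{(s_i[\ell])}$; formally define $\mathrm{vtx}_x(q)$ for each $x\in[K]$ to be the vertex of $V^{(x)}$ read off from the primary position of $x$.

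First I would handle the \textbf{consistency boxes}. For every value $x\in[K]$ that occurs at two positions $(i,\ell)$ and $(i',\ell')$, and for every choice of the "earlier" coordinate prefix that would make the two occurrences disagree, we add a box forbidding that configuration. Concretely: fix $x$ occurring at positions with $\ell_{\min}(x)$ and $\ell_{\max}(x)$; for every pair of distinct vertices $u\ne u'$ in $V^{(x)}$ we add a box that forces $q_i[\,\ell_{\min}(x)\,]$-digit $= u$ and $q_{i'}[\,\ell_{\max}(x)\,]$-digit $= u'$ while leaving the digits \emph{after} position $\ell_{\min}(x)$ in $s_i$ and \emph{after} position $\ell_{\max}(x)$ in $s_{i'}$ unconstrained, and constraining the digits \emph{up to} those positions in the two sequences to the appropriate fixed values; by iterating over all primary-vs-copy pairs and all $(u,u')$ this costs $O(n^{\ell_{\min}(x)+\ell_{\max}(x)-1}\cdot n) = O(n^{\ell_{\min}(x)+\ell_{\max}(x)}) = O(n^{\alpha+1})$ — wait, the singleton condition gives $\ell_{\min}(x)+\ell_{\max}(x)\le\alpha+1$, so this is $O(n^{\alpha+1})$; the correct bookkeeping (which I would check carefully) is that fixing a length-$\ell_{\min}(x)$ prefix in $s_i$ and a length-$\ell_{\max}(x)$ prefix in $s_{i'}$ subject to \emph{disagreeing} on the digit reading $V^{(x)}$ pins down $\ell_{\min}(x)+\ell_{\max}(x)$ base-$n$ digits with one degree of freedom removed, i.e.\ $O(n^{\ell_{\min}(x)+\ell_{\max}(x)-1})=O(n^{\alpha})$ boxes per pair, and there are $O(d^2)$ pairs, giving $O(n^\alpha)$ total. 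A box that fixes a set $S$ of digit-positions to prescribed values and leaves the rest free is an axis-parallel box in $\mathbb{R}^d$ (each coordinate is an interval — a union-of-intervals would not be a box, so one must be careful that each \emph{coordinate} constrains a contiguous \emph{prefix} of its digit string; this is exactly why the prefix structure is needed).

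Next the \textbf{triplet (non-edge) boxes}: for every non-edge $\{u^{(a)},u^{(b)},u^{(c)}\}$ (with $u^{(a)}\in V^{(a)}$ etc.) and every triple $\{a,b,c\}\in\binom{[K]}{3}$, we must exclude all points $q$ with $\mathrm{vtx}_a(q)=u^{(a)},\mathrm{vtx}_b(q)=u^{(b)},\mathrm{vtx}_c(q)=u^{(c)}$. By the triplet condition there are prefixes $s_i[..\ell],s_{i'}[..\ell'],s_{i''}[..\ell'']$ with $\ell+\ell'+\ell''\le\alpha$ containing $a$, $b$, $c$ respectively; so for each non-edge, and for each choice of the values of the (at most) $\ell+\ell'+\ell''$ base-$n$ digits lying in these prefixes that is \emph{consistent} with the three chosen vertices, we add one box fixing exactly those digit-prefixes and leaving everything else free. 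Since the three vertices already pin the three specific digits reading $V^{(a)},V^{(b)},V^{(c)}$, the number of free digits among the $\le\alpha$ prefix digits is $\le\alpha-3$, so this is $O(n^{\alpha-3})$ boxes per (non-edge, triple) pair; summing over $O(n^3)$ non-edges and $O(K^3)=O(1)$ triples yields $O(n^\alpha)$ boxes. Again each such box is genuinely axis-parallel because each coordinate constrains a prefix of its digits.

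Finally I would \textbf{verify correctness}: a point $q\in B$ is left uncovered iff (i) no consistency box contains it, i.e.\ for every repeated $x$ all occurrences read the same vertex of $V^{(x)}$, so $\mathrm{vtx}_x(q)$ is well defined and every digit reading $V^{(x)}$ equals it (here one uses that for \emph{every} pair of occurrences the relevant disagreement box was added — a short induction along the positions of $x$ shows pairwise consistency with the primary occurrence suffices); and (ii) no triplet box contains it, i.e.\ for every $\{a,b,c\}$ the triple $\{\mathrm{vtx}_a(q),\mathrm{vtx}_b(q),\mathrm{vtx}_c(q)\}\in E$ — that is, $q$ encodes a $K$-hyperclique. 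Hence the constructed boxes cover $B$ iff $G$ has no $K$-hyperclique. The instance has $N=O(n^\alpha)$ boxes, so an $O(N^{K/\alpha-\epsilon})$ algorithm for coverage would solve $K$-hyperclique in $O(n^{K-\alpha\epsilon})$ time, contradicting the hypothesis; the same bound transfers to Klee's measure problem and (via the $2^d$ factor, absorbed into the constant) to the depth problem. \textbf{The main obstacle} I anticipate is the delicate digit-bookkeeping showing that each box really is axis-parallel — i.e.\ that fixing "a prefix of the digits of coordinate $i$" yields an interval in coordinate $i$ — and the exact counting that the three box families each have size $O(n^\alpha)$ rather than $O(n^{\alpha+1})$; the singleton condition's "$+1$" slack and the triplet condition's "$-3$" from fixing three vertices are what make these counts come out to $n^\alpha$, and getting these off-by-one factors right across all cases (including the case where two of $i,i',i''$ in the triplet condition coincide, or where $\ell,\ell',\ell''$ contain overlapping digit-positions) is where the care is needed.
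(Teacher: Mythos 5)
Your overall architecture is exactly the paper's: encode each coordinate as a base-$n$ number whose digits select vertices according to the prefix covering design, add consistency boxes for repeated values and non-edge boxes for triplets, count $O(n^\alpha)$ boxes, and transfer to the depth problem and Klee's measure problem via the standard reductions. The edge-checking boxes and their count are right. But there is a genuine gap in your consistency boxes, and you half-noticed it yourself. Enumerating one box per disagreeing pair $(u,u')$ of vertices in $V^{(x)}$ costs $n(n-1)=\Theta(n^2)$ choices for the two $x$-digits, on top of $n^{|X|-1}\le n^{\alpha-1}$ choices for the remaining distinct elements in the two prefixes, i.e.\ $\Theta(n^{\alpha+1})$ boxes in total. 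Your proposed fix --- that requiring the two digits to \emph{disagree} ``removes one degree of freedom'' and brings the count down to $n^{\ell_{\min}(x)+\ell_{\max}(x)-1}$ --- is backwards: it is \emph{agreement} that would cut the $n^2$ pairs down to $n$; disagreement leaves $n^2-n=\Theta(n^2)$ of them. So as written your construction does not achieve $O(n^\alpha)$.

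The missing idea is the paper's $I_<$/$I_>$ trick. For a fixed assignment $f$ of all distinct elements in the union of the two prefixes (only $n^{|X|}\le n^\alpha$ choices, since $x$ occurs in both prefixes and hence $|X|\le \ell_{\min}(x)+\ell_{\max}(x)-1\le\alpha$), one does \emph{not} enumerate the disagreeing value $u'$ at the copy position. Instead one adds exactly two boxes: in coordinate $i'$ they use the intervals
\[
I_<(v[1],\dots,v[\ell'-1],f(x)) \quad\text{and}\quad I_>(v[1],\dots,v[\ell'-1],f(x)),
\]
which cover ``copy digit $< f(x)$'' and ``copy digit $> f(x)$'' respectively; each is still a single interval in that coordinate because all values sharing the fixed length-$(\ell'-1)$ prefix form a contiguous range. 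This replaces the factor $n-1$ by a factor $2$ and yields the required $O(n^\alpha)$ bound. With that repair the rest of your argument (correctness via the minimal inconsistent position, the $O(n^{\alpha-3})$-per-non-edge count for the triplet boxes, and the final $O(n^{K-\alpha\epsilon})$ contradiction) goes through as in the paper.
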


Before we give the proof, for technical reasons, we start with a simple observation about prefix covering designs.

\begin{observation}\label{obs:equallength}
	Let $s_1,\dots, s_d$ be a $(d,K,\alpha)$ prefix covering design. Without loss of generality, we may assume that all $s_i$ have equal length $\alpha$.
\end{observation}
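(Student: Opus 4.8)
The plan is to prove the observation in two moves: first show that every sequence already has length at most $\alpha$, and then pad the short sequences up to length exactly $\alpha$ \emph{without} destroying either defining property and without increasing $K$.

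For the first move, I would argue that no position $(i,\ell)$ with $\ell>\alpha$ can be used. Fix any value $x=s_i[\ell]$. Since $K\ge 4$, there is a triple $T\ni x$; applying the triplet condition to $T$ gives prefixes $s_{i_1}[..\ell_1], s_{i_2}[..\ell_2], s_{i_3}[..\ell_3]$ of total length $\ell_1+\ell_2+\ell_3\le\alpha$ whose union contains $x$, so some $s_{i_t}[..\ell_t]$ contains an occurrence of $x$, whence $\ell_{\min}(x)\le\ell_t\le\alpha$. If $x$ occurs only once, then $\ell=\ell_{\min}(x)\le\alpha$ and we are done. If $x$ occurs more than once, the singleton condition gives $\ell\le\ell_{\max}(x)\le\alpha+1-\ell_{\min}(x)\le\alpha$ since $\ell_{\min}(x)\ge 1$. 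Either way $\ell\le\alpha$, so $|s_i|\le\alpha$ for all $i$.

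For the second move, pick any value $x^\star$ whose primary position lies at level $1$ (one exists: take $s_j[1]$ for the smallest $j$ with $s_j$ nonempty — not all $s_j$ can be empty, else no triple could be covered), so $\ell_{\min}(x^\star)=1$. For every $s_i$ with $|s_i|<\alpha$, append $\alpha-|s_i|$ copies of $x^\star$, obtaining sequences of length exactly $\alpha$ over the same alphabet $[K]$. The triplet condition is preserved for free: any covering of a triple that was valid before is still valid, because a prefix of length at most the original $|s_i|$ contains exactly the same elements as before, while longer prefixes only gained copies of $x^\star$ — appended copies can only enlarge prefixes. For the singleton condition, the only value whose occurrence multiset changed is $x^\star$: its primary is still $s_j[1]$, so $\ell_{\min}(x^\star)=1$; and all its occurrences, old and new, sit at levels $\le\alpha$ by the first move, so $\ell_{\max}(x^\star)\le\alpha$ and hence $\ell_{\min}(x^\star)+\ell_{\max}(x^\star)\le\alpha+1$. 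Every other value is untouched, so the singleton condition still holds and $K$ is unchanged, giving a $(d,K,\alpha)$ prefix covering design with all sequences of length $\alpha$.

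The point that requires care — and where a naive approach fails — is the choice of padding element. Padding with fresh elements would not only inflate $K$ but, worse, break the triplet condition: three fresh elements placed near level $\alpha$ in three different sequences could only be covered by prefixes of total length close to $3\alpha$. Padding with copies of a level-$1$ primary is exactly the fix, since the appended occurrences are harmless for triplet coverings by monotonicity and cost nothing against the singleton budget, as $\ell_{\min}(x^\star)=1$ leaves room for $\ell_{\max}(x^\star)$ all the way up to $\alpha$. Everything else is a routine check, so I do not anticipate further obstacles.
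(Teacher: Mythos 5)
Your proof is correct and matches the paper's approach: the paper likewise pads short sequences with the level-one element $s_1[1]$, whose $\ell_{\min}$ of $1$ leaves the full singleton budget for the appended copies. The only difference is that you additionally prove (via the triplet and singleton conditions) that no position can exceed level $\alpha$, whereas the paper simply truncates at level $\alpha$ and leaves the verification as ``straightforward''; your version makes that step explicit and shows the truncation is in fact vacuous.
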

\begin{claimproof}
	Let $s'_1,\dots, s'_d$ denote the prefix covering design obtained as follows: (1) we delete all elements $s_i[\ell]$ with $\ell > \alpha$ and (2) for any sequence $s'_i$ with $|s_i| <\alpha$, we set $s'_i[\ell] = s_1[1]$ for $|s_i|<\ell \le \alpha$. Note that $s'_1,\dots, s'_d$ are of equal length $\alpha$. It is straightforward to check that it remains a $(d,K,\alpha)$ prefix covering design.
\end{claimproof}

\begin{proof}[Proof of Proposition~\ref{prop:coverage}]
	Let $s_1,\dots, s_d$ be a $(d,K,\alpha)$ prefix covering design. We reduce the 3-uniform $K$-hyperclique problem to the coverage problem as follows: Let $G=(V, E)$ denote a $K$-partite 3-uniform hypergraph with vertex parts $V^{(1)},\dots, V^{(K)}$. 

    Let $L_1, \dots, L_d$ denote the lengths of the sequences $s_1, \dots, s_d$ and let $L\coloneqq \max_i L_i$. We will specify an instance of the coverage problem with bounding box $[0,U)^d$ where $U\coloneqq n^L$. For any tuple $v=(v[1],\dots,v[\ell])\in \{0,\dots, n\}^\ell$ with $\ell \le L$, we define
	\[\ind(v) = v[1] n^{L-1} + v[2] n^{L-2} + \cdots + v[\ell] n^{L-\ell}.\]
	Thus, the hypercubes $\prod_{i=1}^d [\ind(v_i), \ind(v_i)+1)$ for $v_1,\dots,v_d\in \{0,\dots, n-1\}^L$ partition the bounding box. For any subset $X\subseteq [K]$, we interpret $f:X\to\{0,\dots,n-1\}$ as choosing the $(f(x)+1)$-st vertex in $V^{(x)}$, written $v_{f(x)}^{(x)}$, for each $x\in X$. For any $v =(v[1],\dots, v[\ell]) \in \{0,\dots, n-1\}^\ell$ with $\ell \le L$, we define
	\begin{alignat*}{2}
		I_<(v) & = [\ind(v[1],\dots, v[\ell-1], 0), & &\ind(v[1],\dots,v[\ell-1],v[\ell])),\\
		I(v) & = [\ind(v[1],\dots, v[\ell-1], v[\ell]), & &  \ind(v[1],\dots,v[\ell-1],v[\ell]+1)),\\
		I_>(v) & = [\ind(v[1],\dots, v[\ell-1], v[\ell]+1), & &  \ind(v[1],\dots,v[\ell-1],n)).
	\end{alignat*}
	Note that for any $v'\in \{0,\dots, n-1\}^L$, the interval $[\ind(v'), \ind(v')+1)$ is contained in $I(v)$ if and only if $v[1],\dots, v[\ell]$ agrees with $v'[1],\dots, v'[\ell]$ (otherwise, they are disjoint). Similarly, $[\ind(v'), \ind(v')+1)$ is contained in $I_<(v)$ if and only if $v'$ and $v$ agree on the first $\ell-1$ positions \emph{and} $v'[\ell]< v[\ell]$. An analogous statement holds for $I_>(v)$. For the empty prefix $v=()$, we define $I(v) = [0,n^L)$.

	We will construct boxes of the following form: For any distinct dimensions $i_1,\dots, i_k \in [d]$ and intervals $I_1,\dots, I_k$ in $[0,U)$, we define the \emph{checking box} $B(i_1:I_1, \dots, i_k:I_k) = J_1\times \cdots\times J_d$  where $J_{i_1} = I_1,\dots, J_{i_k} = I_k$ and $J_i = [0,U)$ for all $i\in [d]\setminus \{i_1,\dots, i_k\}$.

We define two types of boxes. The first type are \emph{edge-checking boxes}. Specifically, for any $\{a,b,c\} \in \binom{[K]}{3}$, we will construct a box as follows: According to the triplet condition of the prefix covering design $s_1,\dots, s_d$, there are $i, i', i''\in [d]$ and $\ell,\ell',\ell''\in \mathbb{N}_0$ such that the set $X$ of elements occurring in $s_i[..\ell], s_{i'}[..\ell'], s_{i''}[..\ell'']$ satisfies $a,b,c\in X$ and $|X|\le \alpha$. For any function $f: X \to \{0,\dots, n-1\}$ with $\{v_{f(a)}^{(a)},v_{f(b)}^{(b)}, v_{f(c)}^{(c)}\}\notin E$, we define the edge-checking box
	\begin{align*}
		C_{a,b,c,f} = B(& i: I(f(s_i[1]),\dots, f(s_i[\ell])),\\
		  & i': I(f(s_{i'}[1]),\dots, f(s_{i'}[\ell'])),\\
		  & i'': I(f(s_{i''}[1]),\dots, f(s_{i''}[\ell'']))).
	\end{align*}

	The second type of boxes are \emph{consistency-checking boxes}. Specifically, for any $x\in [K]$ that occurs more than once in $s_1,\dots, s_d$ we do the following. Let $(i,\ellmin)$ be such that $\ellmin$ is minimal with $s_i[\ellmin] = x$. For every $(i', \ell')\ne (i,\ellmin)$ with $s_{i'}[\ell']=x$, we let $X_{i',\ell'}$ denote the elements occurring in $s_i[..\ellmin], s_{i'}[..\ell']$. By singleton condition, we have $|X_{i',\ell'}|\le \alpha$, since these prefix have total length $\ellmin+\ell' \le \ellmin(x) + \ellmax(x) \le \alpha+1$ and $x$ occurs in both of them. For each $f:X\to\{0,\dots, n-1\}$, we construct the consistency-checking boxes
	\begin{align*}
		C^<_{x, i', \ell', f} = B( & i: I(f(s_i[1]),\dots,f(s_i[\ellmin-1]), f(x)) \\
                & i': I_<(f(s_{i'}[1]),\dots, f(s_{i'}[\ell'-1]), f(x))),\\
		C^>_{x, i', \ell', f} = B( & i: I(f(s_i[1]),\dots,f(s_i[\ellmin-1]), f(x)) \\
                & i': I_>(f(s_{i'}[1]),\dots, f(s_{i'}[\ell'-1]), f(x))).
	\end{align*}
	This finishes the construction of the reduction.

    Recall that hypercubes $\prod_{i=1}^d [\ind(v_i), \ind(v_i)+1)$ for $v_1,\dots,v_d\in \{0,\dots, n-1\}^L$ partition the bounding box. We say that $v_1,\dots, v_d$ is a \emph{consistent encoding}, if for each $(i,\ell),(i',\ell')$ with $s_i[\ell] = s_{i'}[\ell']$, we have $v_i[\ell] = v_{i'}[\ell']$. Note that there is a bijection between consistent encodings $v_1,\dots, v_d$ and $v^{(1)}\in V^{(1)}, \dots,v^{(K)}\in V^{(K)}$ given by setting $v^{(x)}=v^{(x)}_{v_i[\ell]}$ for each $(i,\ell)$ with $s_i[\ell] = x$.
	
	We claim that any hypercube $\prod_{i=1}^d [\ind(v_i), \ind(v_i)+1)$ for $v_1,\dots,v_d\in \{0,\dots, n-1\}^L$ remains uncovered by the constructed boxes if and only if $v_1,\dots, v_d$ gives a consistent encoding of $K$-hyperclique in $G$: 
	\begin{enumerate}
        \item Any inconsistent encoding is covered by a corresponding consistency-checking box: Specifically, let $x\in [K], i',i''\in [d]$ and $\ell'\ge \ell''$ be such that $s_{i'}[\ell']=s_{i''}[\ell'']=x$, $v_{i'}[\ell'] \ne v_{i''}[\ell'']$ and $\ell'$ is minimal with this property, i.e., all positions $(i,\ell)$ with $\ell < \ell'$ are consistent. Then, let $(i,\ellmin)$ be as chosen in the definition of the consistency-checking boxes for $x$. Note that either $v_{i'}[\ell']\ne v_i[\ellmin]$ or $v_{i''}[\ell'']\ne v_i[\ellmin]$; without loss of generality, assume that $v_{i'}[\ell']\ne v_i[\ellmin]$. Recall that $X_{i',\ell'}$ denotes the set of elements occurring in $s_i[..\ellmin], s_{i'}[..\ell']$ and note that $v_i[..\ellmin], v_{i'}[..\ell'-1]$ is a consistent encoding of some $f:X_{i',\ell'}\to \{0,\dots, n-1\}$.

			Conversely, it is straightforward to see that the hypercubes of all consistent encodings are left uncovered by the consistency-checking boxes.

    \item The consistent encoding of any $v^{(1)}\in V^{(1)}, \dots, v^{(K)}\in V^{(K)}$ that does \emph{not} form a hyperclique is covered by a corresponding edge-checking box: Let $\{a,b,c\} \in \binom{[K]}{3}$ such that $\{v^{(a)},v^{(b)}, v^{(c)}\}\notin E$, witnessing that $v^{(1)}, \dots, v^{(K)}$ do not form a hyperclique. We claim that the corresponding consistent encoding $v_1,\dots, v_d$ is covered by the edge-checking box $C_{a,b,c,f}$ for an appropriately chosen $f$. Specifically, let $s_i[..\ell], s_{i'}[..\ell'], s_{i''}[..\ell'']$ denote the prefixes chosen according to the triplet condition, let $X$ denote the set of elements occurring in these prefixes and let $f:X\to\{0,\dots, n-1\}$ be defined by $f(s_j[t])=v_j[t]$ for all $(j,t) \in \{(i,1),\dots,(i,\ell),(i',1),\dots, (i',\ell'), (i'',1), \dots, (i'',\ell'')\}$ (this is well-defined for the consistent encoding $v_1,\dots, v_d$). Then the hypercube $\prod_{i=1}^d[\ind(v_i),\ind(v_i)+1)$ is covered by this edge-checking box $C_{a,b,c,f}$, which exists since $\{v^{(a)}, v^{(b)}, v^{(c)}\}\notin E$. 

			Conversely, it is straightforward to see that the only consistent encodings covered by some edge-checking boxes $C_{a,b,c,f}$ correspond to vertex choices $v^{(1)}, \dots, v^{(K)}$ such that $\{v^{(a)},v^{(b)},v^{(c)}\}\notin E$.  
	\end{enumerate}

Finally, we obtain our conditional lower bound: The number of constructed edge-checking boxes $C_{a,b,c,f}$ is at most $\binom{K}{3} n^\alpha = O(n^\alpha)$ as there are at most $n^\alpha$ functions of the form $f:X\to\{0,\dots, n-1\}$ for $|X|\le \alpha$. Similarly, there are at most $O(n^\alpha)$ consistency-checking boxes $C^<_{x,i',\ell', f}$ and $C^<_{x,i',\ell', f}$, as there are $O(1)$ choices for $x, i',\ell'$ and at most $n^\alpha$ functions of the form $f:X\to\{0,\dots, n-1\}$ for $|X|\le \alpha$.  These boxes can be constructed in time $O(n^\alpha)$. Thus, any $O(N^{\frac{K}{\alpha}-\epsilon})$-time algorithm for the coverage problem on $N=O(n^\alpha)$ boxes would decide the given 3-uniform $K$-hyperclique instance in time $O(n^{K-\alpha\epsilon})$, which would refute the 3-uniform hyperclique conjecture.

	By the known reductions from the coverage problem to Klee's measure problem and the depth problem, the proposition follows.
\end{proof}

Note that for Klee's Measure Problem, the above reduction in fact shows a $n^{K/\alpha -o(1)}$ lower bound already under the weaker assumption that \emph{counting} $K$-hypercliques in 3-uniform hypergraphs requires time $n^{K-o(1)}$.

\paragraph*{Empty Box problems}
We turn to problems of finding ``large'' axis-parallel boxes containing no points. Specifically, in the Largest-Volume Empty Box problem, we are given a set $P$ of $n$ points in $\mathbb{R}^d$ and a bounding box $B_0=[0,U]^d$. The task is to find the largest-volume box $B$ in $B_0$ that contains no point of $P$ in its interior. We study the simpler \emph{anchored} variant, in which the box must have the form $(0,b_1)\times (0,b_2)\times \cdots (0,b_d)$, i.e., it has the origin as one of its vertices.

Likewise, in the Maximum-Perimeter Empty Anchored Box problem, on the same input, the task is to find the box $B$ of the form $(0,b_1)\times (0,b_2)\times \cdots (0,b_d)\subseteq B_0$ containing no points of $P$ in its interior and maximizing $b_1+\cdots+b_d$.

We generalize the ideas from \cite{GiannopoulosKWW12} for similar problems to exploit prefix covering designs:

\begin{proposition}
	Let $d, K, \alpha \in \NN$ such that there exists a $(d,K,\alpha)$ prefix covering design. Unless the 3-uniform Hyperclique Hypothesis fails, there is no $\epsilon > 0$ such that there exists an $O(n^{\frac{K}{\alpha} - \epsilon})$-time algorithm for:
	\begin{itemize}
		\item the Maximum-Perimeter Empty Anchored Box problem in $\mathbb{R}^{2d}$. 
		\item the Largest-Volume Empty Anchored Box problem in $\mathbb{R}^{2d}$. 
	\end{itemize}
\end{proposition}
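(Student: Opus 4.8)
The plan is to reduce from the coverage problem in $\mathbb{R}^d$ (for which Proposition~\ref{prop:coverage} already gives an $n^{K/\alpha-o(1)}$ lower bound under the 3-uniform hyperclique hypothesis) to the Empty Anchored Box problems in $\mathbb{R}^{2d}$. The dimension doubling is the key structural device: an anchored box $(0,b_1)\times\cdots\times(0,b_{2d})$ in $\mathbb{R}^{2d}$ is naturally indexed by $d$ pairs of coordinates $(b_{2j-1}, b_{2j})$, and we want each such pair to encode a single ``choice'' $v_j \in \{0,\dots,U-1\}$ together with enough slack to translate ``the anchored box avoids point $p$'' into ``the coordinate is in a forbidden interval''. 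Concretely, I would place points so that in the $j$-th coordinate pair, forcing the box to be empty in a neighbourhood of a point at position $(\mathrm{ind}(v)+1, U - \mathrm{ind}(v))$-type location forbids exactly the boxes whose $j$-th encoded value has a given prefix — mirroring the interval gadgets $I(v), I_<(v), I_>(v)$ used in the proof of Proposition~\ref{prop:coverage}.

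The main steps, in order, would be: (1) Recall from Observation~\ref{obs:equallength} that we may assume all $s_i$ have equal length $\alpha$, so $L = \alpha$ and $U = n^\alpha$. (2) Set up the correspondence: for a coverage instance with boxes in $[0,U)^d$, build a point set $P$ in $[0,U']^{2d}$ (for a suitable $U' = \mathrm{poly}(U)$) such that an anchored box $(0,b_1)\times\cdots\times(0,b_{2d})$ of maximum possible perimeter/volume exists and is empty if and only if the corresponding hypercube in the coverage instance is uncovered. The idea, following \cite{GiannopoulosKWW12}, is to use the two coordinates of each pair $j$ as ``$b_{2j-1} = $ how far we extend in the positive direction'' and ``$b_{2j} = $ a mirrored copy''; making the box empty near a carefully placed point in coordinates $(2j-1, 2j)$ kills exactly one interval of admissible $v_j$-values, and the mirroring ensures that the only boxes that can achieve the target perimeter $U'\cdot d$ (resp. target volume $(U')^d$ up to the gadget scaling) are the ``canonical'' ones corresponding to hypercubes. (3) Translate each box of the coverage instance — both the edge-checking boxes $C_{a,b,c,f}$ and the consistency-checking boxes $C^<, C^>$ — into $O(1)$ points, so that the point set has size $O(N) = O(n^\alpha)$, where $N$ is the number of boxes in the coverage instance. (4) Verify the equivalence: the maximum-perimeter (resp. maximum-volume) empty anchored box attains the target value iff some hypercube is uncovered iff the original hyperclique instance is a YES-instance. (5) Conclude: an $O(n^{K/\alpha - \epsilon})$-time algorithm for the Empty Anchored Box problem on $O(n^\alpha)$ points would decide 3-uniform $K$-hyperclique in time $O(n^{K-\alpha\epsilon})$, contradicting the hypothesis.

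The step I expect to be the main obstacle is step~(2): designing the point gadgets in $\mathbb{R}^{2d}$ so that (i) each checking box of the coverage instance is faithfully simulated by a bounded number of points, (ii) the target perimeter/volume threshold is achieved \emph{only} by boxes that correspond to a single hypercube $\prod_j [\mathrm{ind}(v_j), \mathrm{ind}(v_j)+1)$ and not by some ``mixed'' box straddling several, and (iii) the perimeter and volume objectives can both be handled by (essentially) the same construction. The doubling of dimensions is exactly what buys us (ii): with one coordinate per dimension one cannot forbid an \emph{interval} of values cleanly for an anchored box, but with a mirrored pair $(b_{2j-1}, b_{2j})$ one can pin $b_{2j-1} + b_{2j}$ to a constant at the optimum, which forces $b_{2j-1}$ to lie in a discrete set of positions, each encoding one $v_j$; a point placed to block a prefix then removes precisely the corresponding sub-interval. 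Handling the maximum-perimeter version requires additionally checking that no box can ``trade'' length between coordinate pairs to compensate — this is ruled out by the same mirroring, which makes each pair contribute exactly $U'$ to the perimeter independently. Once these gadgets are in place, the rest (counting points, invoking Proposition~\ref{prop:coverage}'s analysis of which hypercubes remain uncovered, and the final contradiction with the hyperclique hypothesis) is routine.
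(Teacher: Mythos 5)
Your proposal is correct and follows essentially the same route as the paper: the mirrored coordinate-pair scaffold of \cite{GiannopoulosKWW12} pins each pair's contribution to the perimeter, turning every checking box of the Proposition~\ref{prop:coverage} construction into a single checking point, for $O(n^\alpha)$ points total. The only part you leave under-specified is the Largest-Volume adaptation, which the paper handles exactly as you anticipate, by re-parameterizing coordinates exponentially ($x \mapsto \mu^x$) so that the product objective behaves like the sum objective.
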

\begin{proof}
	We give a proof analogous to Proposition~\ref{prop:coverage}, using ideas of the reduction from $d$-clique to bichromatic rectangle given in~\cite{GiannopoulosKWW12}. We give the arguments for the Max-Perimeter Empty Anchored Problem first, and describe the changes for the Largest-Volume Empty Anchored Box Problem later.

	As in~\cite{GiannopoulosKWW12}, we view $\RR^{2d}$ as the product of the two-dimensional subspaces 
	\[\RR_i^2 = \{(x_1,y_1,\dots, x_d, y_d)\mid x_j = y_j = 0, j\ne i)\} \subseteq \RR^{2d}.\]
	For a point $p \in \RR^{2d}$ and $i\in [d]$, we let $p_i$ denote its projection to $\RR_i^{2}$.

    Let $s_1,\dots, s_d$ be a $(d,K,\alpha)$ prefix covering design. We reduce the 3-uniform $K$-hyperclique problem to the Maximum-Perimeter Empty Anchored Box problem as follows: Let $G=(V, E)$ denote a $K$-partite 3-uniform hypergraph with vertex parts $V^{(1)},\dots, V^{(K)}$. 

	By Observation~\ref{obs:equallength}, we may assume that the sequences $s_1, \dots, s_d$ have equal length $L\coloneqq \alpha$.  
	As in Proposition~\ref{prop:coverage}, for any tuple $v=(v[1],\dots,v[\ell])\in \{0,\dots, n\}^\ell$ with $\ell \le L$, we define
		\[\ind(v) = v[1] n^{L-1} + v[2] n^{L-2} + \cdots + v[\ell] n^{L-\ell}.\]
	Let $U \coloneqq n^L$. To represent admissible choices of $\{0, \dots, U-1\}$ in $\RR_i^2$, we use the scaffold construction of~\cite{GiannopoulosKWW12}, i.e., we define for every $x\in \{0, \dots, U\}$ the point $p(x;i)\in \RR_i^2$ by 
	\[ p_i(x;i) = (x,U-x). \] 

    For an anchored box $X$, let $b$ denote the vertex of $X$ opposite the origin. Note that if $X$ contains no point $p(x;i)$, then the contribution of $\RR^{2}_i$ to the perimeter is at most $U+1$, which is attained, by choosing $b$ such that $b_i = (x+1, U-x)$ for some $x\in \{0,\dots, U-1\}$. Furthermore, the contribution of $\RR^{2}_i$ to the perimeter is $U+1$ only if $(x, U-x-1)< b_i$ for some $x\in \{0,\dots, U-1\}$.

	Thus, any empty anchored box of perimeter $\ge d(2U+2)$ chooses, for each $i\in [d]$, some $x_i\in \{0,\dots, U-1\}$ with $(x_i,U-x_i-1) < b_i$. We now view $x_i$ as $x_i\in \{0,\dots, n-1\}^{L}$ and introduce additional points such that the corresponding choice gives an empty box if and only if $x_1,\dots,x_d$ encode a $K$-hyperclique in $G$ (according to the prefix covering design $s_1,\dots, s_d$).

    As a gadget, we show how to introduce points that exclude certain intervals $I\subseteq \{0,\dots, U-1\}$. Specifically, for any $I\subseteq \{0,\dots, U-1\}$
	and $i\in [d]$, we define $q(I;i)\in \RR^{2}_i$ by 
	\[ q_i(I;i) = (\min(I), U-\max(I)-1). \]

	For any distinct dimensions $i_1,\dots, i_k \in [d]$ and intervals $I_1, \dots, I_k$ in $\{0,\dots, U-1\}$, we can define $q(i_1:I_1, \dots, i_k:I_k)$ as \emph{checking point} analogous to the \emph{checking box} $B(i_1:I_1, \dots, i_d:I_d)$ as 
	\[ q(i_1:I_1, \dots, i_d:I_d) = q(I_1;i_1) + \cdots + q(I_d;i_d). \]
	Note that an empty box of perimeter $\ge d(U+1)$ that does not contain $q(i_1:I_1, \dots, i_d:I_d)$ must choose, for each $i\in [d]$, some $x_i\in \{0,\dots, U-1\}$, such that for all $j\in [k]$, $x_k\notin I_k$.
	This gives precisely the same semantics as the edge-checking boxes in Proposition~\ref{prop:coverage}. Thus, we can define the checking objects $C_{a,b,c,f}$, $C_{x,i',\ell',f}^<$, and $C_{x,i',\ell',f}^>$ as in Proposition~\ref{prop:coverage} (by replacing $B(\cdot)$ by $q(\cdot)$).

	Note that we obtain $O(n^{\alpha})$ checking points $q(\cdot)$ (as analyzed in Propostion~\ref{prop:coverage}), as well as an additional $O(n^\alpha)$ points $p(x;i)$ for $x\in \{0,\dots, U-1\}, i\in [d]$, since $U=n^L = n^\alpha$.
	We thus obtain that a $O(N^{\frac{K}{\alpha}-\epsilon})$-time algorithm for the Maximum-Perimeter Empty Anchored Box problem on $N=n^\alpha$ points would give a $O(n^{K-\alpha \epsilon})$-time algorithm for the 3-uniform $K$-hyperclique problem, refuting the 3-uniform hyperclique conjecture.

	The above reduction can be adapted to the Largest-Volume Empty Anchored Box problem in a straightforward way, analogously to the reduction to the Maximum Empty Star problem in~\cite{GiannopoulosKWW12}. Namely, we define $C=\mu^U$ for some $\mu$ to be determined later and replace the definition of each point $p(x;i)\in \RR_i^2$ by 
	\[ p_i(x;i) = (\mu^i,\frac{C}{\mu^i}). \] 
	This way, the largest-volume anchored box in $[0,C]^{2d}$ not containing any point $p(x;i)$ chooses a vertex $b_i\in \RR_i^2$ with $b_i = (\mu^{x_i+1}, C/\mu^{x_i})$, and has volume $(C\mu)^d$. Furthermore, we replace the points $q(I;i)\in \RR^{2}_i$ by 
	\[ q_i(I;i) = (\mu^{\min(I)}, \frac{C}{\mu^{\max(I)-1}}). \]
	These point definitions lead to the same semantics as for the Maximum-Perimeter Problem, where instead of an empty anchored box of perimeter $d(U+1)$, we now search for an empty anchored box of volume $(C\mu)^d$. We thus obtain a $n^{\frac{K}{\alpha}-o(1)}$ lower bound under the 3-uniform hyperclique conjecture analogously to before. 
\end{proof}

\section{Constructions: Omitted Proofs and Details} \label{sec:constructions-appendix}

\begin{figure}

\begin{center}
    \begin{tabular}{| l | p{2cm} | p{2cm} | p{2cm} | p{2cm} | p{2cm} |}
    \hline
    $d$ & Best known upper bound from \cite{Chan13} & Previously known lower bound from \cite{Kunnemann22} & SAT-solver lower bound from Theorem \ref{thm:small_constructions} & Covering designs lower bound from Theorem \ref{thm:transformation} & ($v$, $k$) of the covering design \\
    \hline
    $3$ & $1.5$ & $1.5$ &  & $1.5$ & ($3$, $2$)\\
    \hline
    $4$ & $2$ & $1.777$ & $1.9047$ & $1.8461$ & ($20$, $12$)\\
    \hline
    $5$ & $2.5$ & $2.0833$ & $2.2222$ & $2.1929$ & ($45$, $25$)\\
    \hline
    $6$ & $3$ & $2.4$ & & $2.5714$ & ($6$, $3$)\\
    \hline
    $7$ & $3.5$ & $2.7222$ & & $3$ & ($7$, $3$)\\
    \hline
    $8$ & $4$ & $3.0476$ & & $3.3333$ & ($24$, $10$)\\
    \hline
    $9$ & $4.5$ & $3.375$ & & $3.6818$ & ($90$, $36$)\\
    \hline
    $10$ & $5$ & $3.7037$ & & $4.0540$ & ($80$, $30$)\\
    \hline
    $11$ & $5.5$ & $4.0333$ & & $4.4160$ & ($308$, $110$)\\
    \hline
    $12$ & $6$ & $4.3636$ & & $4.8$ & ($36$, $12$)\\
    \hline
    $13$ & $6.5$ & $4.6944$ & & $5.2$ & ($13$, $4$)\\
    \hline
    $14$ & $7$ & $5.0256$ & & $5.5322$ & ($966$, $294$)\\
    \hline
    $15$ & $7.5$ & $5.3571$ & & $5.8823$ & ($405$, $120$)\\
    \hline
    $16$ & $8$ & $5.6888$ & & $6.2249$ & ($880$, $256$)\\
    \hline
    $17$ & $8.5$ & $6.0208$ & & $6.5796$ & ($782$, $221$)\\
    \hline
    $18$ & $9$ & $6.3529$ & & $6.9428$ & ($198$, $54$)\\
    \hline
    $19$ & $9.5$ & $6.6851$ & & $7.3076$ & ($19$, $5$)\\
    \hline
    $20$ & $10$ & $7.0175$ & & $7.6923$ & ($80$, $20$)\\
    \hline
    $21$ & $10.5$ & $7.35$ & & $8.0769$ & ($21$, $5$)\\
    \hline   
\end{tabular}

\end{center}

\caption{Table of the exponents of lower bounds for Klee's measure problem and the depth problem in $\RR^d$ for $d \le 21$ acquired via covering designs and its comparison to other lower and upper bounds. All the covering designs used and a verifier program can be found in \cite{ourRepo}.}

\label{fgr:covering-design-bounds-table}

\end{figure}

The constructions for $d=4$ and $d=5$ discussed below were obtained by assuming that $K = gd$ for some positive integer $g$, fitting primary occurrences of all elements in the first $g$ levels of sequences, i.e., $s_i[j] = (i - 1)g + j$ for all $1 \le i \le d$, $1 \le j \le g$, and then running a SAT-solver for fixed values of $d$, $g$ and $\alpha$ that tries to fill levels from $g+1$-st to determine whether such PCDs exists. There is no guarantee though that these constructions are optimal. For $d \ge 6$ the constructions obtained by this SAT-solver were not better than the results obtained by Theorem \ref{thm:transformation} for a general case, so we only discuss cases of $d=4$ and $d=5$ here.

\begin{proof}[Proof of Theorem \ref{thm:small_constructions}]

    We prove these two facts by presenting constructions of prefix covering designs with desired values of $\frac{K}{\alpha}$.

    For $d=4$ we present the following ($4$, $40$, $21$) PCD which gives us $\frac{K}{\alpha} = \frac{40}{21} > 1.90476$.

\begin{center}
\begin{tabular}{| c | c c c c c c c c c c c c c c c |}
\hline
$s_1$ & 1&	2&	3&	4&	5&	6&	7&	8&	9&	10&	40&	19&	28&	37&	26\\
\hline
$s_2$ & 11&	12&	13&	14&	15&	16&	17&	18&	19&	20&	30&	9&	38&	27&	36\\
\hline
$s_3$ & 21&	22&	23&	24&	25&	26&	27&	28&	29&	30&	20&	39&	8&	7&	37\\
\hline
$s_4$ & 31&	32&	33&	34&	35&	36&	37&	38&	39&	40&	10&	29&	18&	17&	27\\
\hline
\end{tabular}
\end{center}

It is easy to see that this PCD has $d=4$ and $K=40$. Now we prove that $\alpha=21$.

First, we check the singleton condition for all elements. Imagine we want to check the singleton condition for element $x=10y+z$ where $0 \le y \le 3$ and $1 \le z \le 10$. Then $\ell_{\min}(x)=z$. There are $15$ elements in every sequence in total, so $\ell_{\max}(x) \le 15$. It means that for $z \le 7$ it is true that $\ell_{\min}(x)+\ell_{\max}(x) \le 7 + 15 = 22 \le 21 + 1$ and the singleton condition holds. For $z \ge 8$ it is easy to see that every element occurs exactly twice and $\ell_{\max}(x)=21-z$, so $\ell_{\min}(x)+\ell_{\max}(x)=21 \le 21 + 1$. So the singleton condition also holds.

The hard part is to prove the triplet condition. It is not hard to see that most triplets can be covered by covering their primary positions. For the remaining triplets, some observations may be done that reduce the search space, but in the end, we still would need to go through many cases. Instead of that, we provide a computer program that manually checks the singleton condition for all elements and the triplet condition for all triplets according to the definition. It can be found in \cite{ourRepo}.

\ 

For $d=5$ we present the following ($5$, $40$, $18$) PCD which gives us $\frac{K}{\alpha} = \frac{40}{18} > 2.22222$.

\begin{center}
\begin{tabular}{|c| c c c c c c c c c c c c c |}
\hline
$s_1$ & 1&	2&	3&	4&	5&	6&	7&	8&	24&	31&	38&	30&	14\\
\hline
$s_2$ & 9&	10&	11&	12&	13&	14&	15&	16&	32&	40&	6&	31&	22\\
\hline
$s_3$ & 17&	18&	19&	20&	21&	22&	23&	24&	8&	7&	39&	15&	30\\
\hline
$s_4$ & 25&	26&	27&	28&	29&	30&	31&	32&	40&	16&	23&	39&	6\\
\hline
$s_5$ & 33&	34&	35&	36&	37&	38&	39&	40&	16&	32&	15&	23& \\
\hline
\end{tabular}
\end{center}

It is again easy to see that this PCD has $d=5$ and $K=40$. Now we prove that $\alpha=18$.

First, we check the singleton condition for all elements. Imagine we want to check the singleton condition for element $x=8y+z$ where $0 \le y \le 4$ and $1 \le z \le 8$. Then $\ell_{\min}(x)=z$. There are at most $13$ elements in every sequence in total, so $\ell_{\max}(x) \le 13$. It means that for $z \le 6$ it is true that $\ell_{\min}(x)+\ell_{\max}(x) \le 6 + 13 = 19 \le 18 + 1$ and the singleton condition holds.
Elements with $z=7$ are not located in the level $13$, so for them $\ell_{\max}(x) \le 12$ and $\ell_{\min}(x) + \ell_{\max}(x) \le 7 + 12 = 19 \le 18 + 1$. Elements with $z=8$ are not located in the levels $12$ and $13$, so for them $\ell_{\max}(x) \le 11$ and $\ell_{\min}(x) + \ell_{\max}(x) \le 8 + 11 = 19 \le 18 + 1$.

The hard part again is to prove the triplet condition. We again skip this part and refer to the computer program that does it provided in \cite{ourRepo}.

\end{proof}

We try to answer a question raised in Section \ref{sec:constructions}:

\paragraph*{Do good covering designs always admit a multi-matching?}

Phrasing it more formally, we get the following conjecture:

\begin{conjecture}
    For every integer $d \ge 3$, there exists a bound $\beta_d > 0$ such that there exists at least one covering design with $d$ blocks with $\mathrm{freq} \le \beta_d$ and for all $(v, k, 2)$ covering designs with $d$ blocks and $\mathrm{freq}  \le \beta_d$ their scaled version $(vd, kd, 2)$ admits multi-matching.
\end{conjecture}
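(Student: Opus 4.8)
To attack this conjecture, the plan is to first translate the multi-matching property into a Hall-type feasibility condition and then to establish that condition whenever the frequency is below a suitable $\beta_d$. For a $(v,k,2)$ covering design with blocks $B_1,\dots,B_d$, write $S_x=\{i\in[d]:x\in B_i\}$ for $x\in[v]$. In the scaled $(vd,kd,2)$ design each $x$ becomes $d$ identical copies, each living exactly in the blocks indexed by $S_x$, and each scaled block contributes a set $U_i$ of size $\frac{vd}{d}=v$ with $U_1,\dots,U_d$ partitioning the $vd$ copies. Hence a multi-matching is exactly a feasible integral flow in the transportation problem with supply $d$ at every element, demand $v$ at every block, and bipartite support given by incidence; by a Hall-type (max-flow--min-cut) argument such a flow exists if and only if
\[
  |W_T| \le \tfrac{v}{d}\,|T| \quad \text{for every } T\subseteq[d], \qquad \text{where } W_T := \{\, x\in[v] : S_x\subseteq T \,\}.
\]
So the conjecture reduces to proving this inequality for all $T$ once $\mathrm{freq}=\tfrac{kd}{v}$ is small enough.

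\textbf{Step 2: the easy ranges of $T$.} I would first dispose of small and large $T$ by double counting. If $W_T\neq\emptyset$ and $k<v$ (the case $k\ge v$ being degenerate), fix $x_0\in W_T$; every pair $\{x_0,y\}$ is covered by a block of $S_{x_0}\subseteq T$, so $\bigcup_{i\in T}B_i=[v]$, whence $k|T|=\sum_x|S_x\cap T|\ge 2|W_T|+(v-|W_T|)$ using $|S_x|\ge 2$ for $x\in W_T$ (the Lemma). This yields $|W_T|\le k|T|-v$, which already implies $|W_T|\le\tfrac vd|T|$ whenever $|T|\le\tfrac{d}{\mathrm{freq}-1}$ (and when $|T|<\tfrac d{\mathrm{freq}}$ one even has $W_T=\emptyset$). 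For large $T$, with $s:=|T^{c}|$, the crude bound $|W_T|=v-|\bigcup_{i\notin T}B_i|\le v-k$ gives the inequality whenever $s\le\mathrm{freq}$. These two regimes already settle several small dimensions (e.g.\ $d\le 7$ with the designs of Table~\ref{fgr:covering-design-bounds-table}), where the uncovered window $\tfrac{d}{\mathrm{freq}-1}<|T|<d-\mathrm{freq}$ happens to be empty.

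\textbf{Step 3 (the crux): medium $T$.} The remaining window $\tfrac{d}{\mathrm{freq}-1}<|T|<d-\mathrm{freq}$ is non-empty exactly when $d$ is large compared to $\mathrm{freq}$ --- precisely the regime of the good designs, where $\mathrm{freq}=\Theta(\sqrt d)$. The condition $|W_T|\le\tfrac vd|T|$ says that any $|T^{c}|$ blocks cover at least a $\tfrac{|T^{c}|}{d}$-fraction of $[v]$, i.e.\ the blocks are \emph{spread out}; this is a pseudorandomness/expansion property that cannot follow from the frequency alone. The plan here would be: (a) for general $d$, restrict to the constructions actually used to lower-bound $\gamma_d$ --- the projective-plane designs of Theorem~\ref{thm:general_lower_bound} and resolvable/group-divisible designs, or quasirandom designs produced by the R{\"o}dl nibble --- for which pairwise block intersections are $O(1)$ and the expansion property is a direct count; (b) for small $d$, rely on the computer verification already performed for Table~\ref{fgr:covering-design-bounds-table}. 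One then defines $\beta_d$ as the largest frequency for which expansion can be certified, and checks that $\beta_d$ still exceeds the minimum frequency $f_d^{*}=\Theta(\sqrt d)$ of a covering design with $d$ blocks, so that the existence half of the conjecture holds as well.

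\textbf{Main obstacle.} The hard part is precisely Step 3: there is no local (frequency-only) reason for medium-sized unions of blocks to expand, and, as noted in the text, some low-frequency designs do fail to admit a multi-matching, so $\beta_d$ cannot be taken too generously. A proof must therefore bring in genuine structural information about (near-)optimal covering designs --- a delicate area --- or else take $\beta_d$ so close to $f_d^{*}$ that one is forced to classify the extremal designs. Balancing ``$\beta_d$ large enough that some design of frequency $\le\beta_d$ exists'' against ``$\beta_d$ small enough that expansion is forced'', uniformly in $d$, is where I expect the argument to resist a short proof.
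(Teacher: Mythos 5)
You should first note that the statement you were asked to prove is labelled a \emph{conjecture} in the paper: the authors explicitly write that they do not know whether it is true or false, and they offer no proof (only the workaround of adding $v$ fresh elements per block to force a multi-matching at the cost of a slightly worse bound). So there is no proof in the paper to compare against, and your submission is, by your own account, a proof \emph{plan} rather than a proof. Within that plan, Steps 1 and 2 are correct and worthwhile: the reduction of the multi-matching property of the scaled design to the Gale--Hoffman/Hall feasibility condition $|W_T|\le\frac{v}{d}|T|$ for the transportation problem (supply $d$ per element, demand $v$ per block) is right, and your double-counting bounds $|W_T|\le k|T|-v$ (via the paper's lemma that every element lies in at least two blocks when $k<v$) and $|W_T|\le v-k$ do dispose of the ranges $|T|\le\frac{d}{\mathrm{freq}-1}$ and $|T|\ge d-\mathrm{freq}$.

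The genuine gap is Step 3, and it is not merely a technical loose end. For the designs that matter (frequency $\Theta(\sqrt d)$), the uncovered window of $|T|$ has length $d-\Theta(\sqrt d)$, i.e.\ almost all of $[d]$, so the ``easy ranges'' handle only a vanishing fraction of the cuts. More importantly, your proposed remedies do not prove the statement as quantified: the conjecture asserts the multi-matching property for \emph{every} $(v,k,2)$ covering design with $d$ blocks of frequency at most $\beta_d$, whereas item (a) of Step 3 only verifies expansion for particular constructions (projective planes, resolvable designs, nibble-type designs), and item (b) is finite computation for small $d$. Verifying the good constructions establishes the \emph{existence} half of the conjecture but not the universally quantified half, unless one can additionally classify all low-frequency designs --- which, as you note, is itself open. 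Since the paper also records that some low-frequency designs genuinely fail to admit a multi-matching, any proof must locate $\beta_d$ in a narrow band, and nothing in the proposal pins that band down. The conjecture therefore remains open after your argument.
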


In other words, if this conjecture would be true, it would mean that $\gamma_d \ge \frac{d}{3 - 2 / \mathrm{freq}_d}$ where $\mathrm{freq}_d$ is the infimum of frequencies of all covering designs with $d$ blocks. 
We do not know whether this conjecture is true or false, but there are some signs indicating that it may be correct.

But what if this conjecture is false? Do we get any reasonable bounds? It turns out the answer is ``yes''. We can simply create covering designs that are more redundant. Imagine there is a $(vd, kd, 2)$ covering design with $d$ blocks that does not admit multi-matching. Then we can create a $(vd, kd+v, 2)$ covering design (where blocks are multisets instead of sets but it does not bother us because we can leave only the first occurrence of every element in every sequence of our prefix covering design) that admits multi-matching. This can be done by adding unique $v$ values to every block (adding all $vd$ values in total). For example, add $(i-1)v + 1$, $(i-1)v+2$, $\ldots$, $iv$ to the $i$-th block. In this way, these new elements may be chosen as unique elements in each block for multi-matching. If we plug this covering design into our formula, we get a $\frac{d}{3 - 2 \frac{v}{kd + v}}$ lower bound. In Theorem \ref{thm:general_lower_bound} will see that for ``good enough'' covering designs $v = o(kd)$, so it does not change the asymptotic lower bound in terms of $d$ but worsens lower bounds for specific values of $d$.

\begin{proof}[Proof of Theorem \ref{thm:general_lower_bound}]

    Right now, we have some specific lower bounds for exact values of $d$ and some generic construction for arbitrary $d$, which depends on covering design's ``quality''. We would like to get lower bounds that depend only on $d$. To do this, we look at specific covering designs admitting multi-matching. Specifically, finite projective planes. They are $(m^2 + m + 1, m + 1, 2)$ covering designs with $d = m^2 + m + 1$ blocks. They are known to exist for $m=p^n$ where $p$ is a prime number and $n$ is a positive integer. Actually, projective planes are even more powerful than covering designs, they are so-called \textit{balanced incomplete block designs}, which means that every element is located in the same number of blocks, and every pair of elements is located in the same number of blocks (in our case, exactly once). In this situation, every block has size $m + 1$, and every element is located in $m + 1$ blocks. It means that if we look at blocks and elements as a bipartite graph, every vertex in both parts has degree $m + 1$, so this is an $(m+1)$-regular bipartite graph. Due to Hall's theorem, it admits matching, so in every block, there exists a unique element (in this case the number of blocks is equal to the number of elements, so multi-matching is actually just a matching). So finite projective planes satisfy the conditions of Theorem \ref{thm:transformation} and can be used to construct a $\frac{d}{3 - 2 \frac{v}{kd}} = \frac{d}{3 - \frac{2}{k}} = \frac{d}{3} + \frac{d}{3} \cdot \frac{2}{3k-2}$ lower bounds for $d=m^2 + m + 1$ and $m$ being a prime power. This construction gives us a $\frac{d}{3} + \frac{d}{3} \cdot \frac{2}{3k-2} = \frac{d}{3} + \frac{d}{3} \cdot \frac{2}{3m + 1} = \frac{d}{3} + \frac{2}{9} \cdot \frac{d}{m+\frac{1}{3}} = \frac{d}{3} + \frac{2}{9} \sqrt{d} + o\left(\sqrt{d}\right)$ lower bound for these specific values of $d$ (because $\sqrt{d} - 1 < m + \frac{1}{3} < \sqrt{d}$).

Now we generalize this construction for all values of $d$. Remember how we created scaled constructions for covering designs. We had $K=(nk+v')d$ and $\alpha \le 3nk-(2n-3)v'$ where $v'=\frac{v}{d}$ and all elements had primary positions in the first $nk+v'$ levels. To generalize this construction for $d' > d$, we will add $d'-d$ new dimensions, and every one of them will contain $nk-(2n-1)v'$ unique elements. We claim that the value of $\alpha$ is still $\le 3nk-(2n-3)v'$ because if a triplet contains at least one element from these new dimensions, it can be covered by covering primary positions of all three elements: $(nk-(2n-1)v') + (nk+v') + (nk+v') = 3nk-(2n-3)v'$. And $K=(nk+v') \cdot d + (nk-(2n-1)v') \cdot (d'-d)$. For our specific case it means that $\alpha \le 3nk-2n+3$ and $K=(nk+1)d+(nk-2n+1)(d'-d)=(nk+1)d' - 2n(d'-d)$ because $v'=1$. Then $\frac{K}{\alpha} \ge \frac{(nk+1)d' - 2n(d'-d)}{3nk-2n+3} = \frac{(k+\frac{1}{n})d' - 2(d'-d)}{3k-2+\frac{3}{n}}$. It approaches $\frac{kd'-2(d'-d)}{3k-2}$ when $n$ approaches infinity. It is equal to $\frac{d'}{3} \cdot \frac{3k-6 \frac{d'-d}{d}}{3k-2}=\frac{d'}{3}(1 + \frac{2 - 6 \frac{d'-d}{d'}}{3k-2}) = \frac{d'}{3} + \frac{d'}{3} \cdot \frac{2 - 6\frac{d'-d}{d'}}{3k-2}$.
For the fixed value of $d'$ we will take $d \coloneqq p_i^2 + p_i + 1$ where $p_i$ is the largest prime such that $p_i^2 + p_i + 1 \le d'$. The closer $d$, the better bound we will get. On average, the distance between neighboring primes is logarithmic, so $d'-d \le (p_{i+1}^2 + p_{i+1} + 1) - (p_i^2 + p_i + 1) = O(p_{i+1}^2 - p_i^2) = O(p_i \cdot (p_{i+1} - p_i)) = O(\sqrt{d'} \cdot \log d')$ on average but it may not hold in the worst case. Legendre's conjecture states that there is a prime between every two consecutive squares, so that would mean that $p_{i+1}-p_i = O(\sqrt{p_i})$ which means that $d'-d = O\left(d'^{\frac{3}{4}}\right)$.
But Legendre's conjecture, as its name hints, is a conjecture, so we will use a weaker result by Ingham~\cite{Ingham37} that states that $p_{i+1}-p_i = O\left(p_i^{\frac{2}{3}}\right)$. For us it means that $d' - d = O\left(d'^{\frac{5}{6}}\right)$. At the same time we know that $k=m+1=\sqrt{d} + o\left(\sqrt{d}\right) = \sqrt{d'} \pm o\left(\sqrt{d'}\right)$, so we can rewrite the lower bound in terms of $d'$:

\[\frac{d'}{3} + \frac{d'}{3} \cdot \frac{2 - 6\frac{d'-d}{d'}}{3k-2} = \frac{d'}{3} + \frac{d'}{3} \cdot \frac{2 - 6 \frac{O\left(d'^{\frac{5}{6}}\right)}{d'}}{3 \sqrt{d'} \pm o(\sqrt{d'})} = \]
\[= \frac{d'}{3} + \frac{d'}{3} \cdot \frac{2 - o(1)}{(3 \pm o(1)) \sqrt{d'}} = \frac{d'}{3} + \frac{2}{9}\sqrt{d'} - o\left(\sqrt{d'}\right).\]

Which matches the bound we would get if $d'$ admitted finite projective planes in the first two summands.

\end{proof}

\section{Limits~--- Omitted Proofs} \label{sec:limits-appendix}

\begin{proof}[Proof of Scaling Lemma \ref{lemma:scaling}] Let $s_1,\dots, s_d$ be a $(d, K, \alpha)$ PCD and call $\lambda$ the \emph{scaling factor}. We construct a PCD $s'_1,\dots, s'_d$ with $K' \coloneqq \lambda\cdot K$ elements and $\alpha' \coloneqq \lambda\cdot\alpha$ by replacing every element of $s_1,\dots, s_d$ with $\lambda$ elements. Specifically, for every primary position $(i, \ell)$ with value $x$, we replace $x$ by the elements $\lambda x-\lambda+1, \lambda x-\lambda+2, \ldots, \lambda x-1, \lambda x$.
If $(i,\ell)$ is a copy of $x$, we replace $x$ by the elements $\lambda x,  \lambda x-1, \ldots, \lambda x-\lambda+2, \lambda x-\lambda+1$.

We need to check that the triplet and singleton conditions for $\alpha' = \lambda \cdot \alpha$ are satisfied. For every triplet $\{a, b, c\} \in \binom{[K']}{3}$ of elements, consider the triplet $\left\{\left\lceil \frac{a}{\lambda} \right\rceil, \left\lceil \frac{b}{\lambda} \right\rceil, \left\lceil \frac{c}{\lambda} \right\rceil\right\} \subseteq [K]$. \footnote{In the definition of prefix covering designs, the triplet condition should be satisfied for all $\{a, b, c\} \in \binom{[K]}{3}$. We may assume that the triplet condition should be satisfied for all $\{a, b, c\} \subseteq K$ because if some of these three values are equal, we may add arbitrary values to make it a triplet (it is possible because $K \ge 3$) and a coverage for such three distinct elements will cover $\{a, b, c\}$.} Since $s_1,\dots, s_d$ is a prefix covering design, there must exist prefixes $s_i[.. \ell]$, $s_{i'}[.. \ell']$ and $s_{i''}[.. \ell'']$ that cover this triplet and satisfy $\ell + \ell' + \ell'' \le \alpha$. It is easy to see that prefixes $s'_i[.. \lambda\ell]$, $s'_{i'}[.. \lambda\ell']$ and $s'_{i''}[.. \lambda\ell'']$ cover the elements $\{a, b, c\}$ and have a total length of $\lambda\ell + \lambda\ell' + \lambda\ell'' \le \lambda \alpha = \alpha'$.

	The singleton condition follows similarly: Let $a\in [K']$ be an element occurring at least twice in $s'_1,\dots, s'_d$. Define $x \coloneqq \left\lceil \frac{a}{\lambda} \right\rceil$ and $y \coloneqq \lambda x - a$. Consider the primary position $(i, \ellmin(x))$ of the element $x$ in $s_1,\dots, s_d$. Then $s'_i[\lambda \ellmin(x) - y] = a$ by construction of $s'_i$. Thus, $\ell'_{\min}(a) \le \lambda \ellmin(x) - y$.
	On the other hand, for any other position $(i, \ell)$ of $a$ in $s'_1,\dots, s'_d$, we have with $z \coloneqq \left\lceil \frac{\ell}{\lambda} \right\rceil$ that  $s_{i}[z] = x$ and thus $\ell=\lambda (z-1) + y + 1$ by construction of $s'_1,\dots, s'_d$. Since $z \le \ell_{\max}(x)$, we conclude that
	\begin{align*}
		\ellmin'(a) + \ellmax'(a) & \le (\lambda \ellmin(x) - y)+ (\lambda (\ellmax(x)-1) + y + 1) \\
					  & = \lambda(\ellmin(x)+\ellmax(x)-1) + 1 \le \lambda \alpha + 1= \alpha'+1,
	\end{align*} where in the last line we used that $\ell_{\min}(x) + \ell_{\max}(x) \le \alpha + 1$ (since $s_1,\dots, s_d$ is a $(d,K,\alpha)$ PCD). Thus, $s'_1,\dots, s'_d$ is a $(d,\lambda K,\lambda \alpha)$ PCD, as desired.
\end{proof}

\begin{proof}[Proof of Theorem \ref{thrm:no_tight_bound}]

    Assume for contradiction that $\gamma_4 \ge 2$. Then there exists a sequence of $(4, K_i, \alpha_i)$ prefix covering designs such that $\liminf_{i \to \infty} \frac{K_i}{\alpha_i} \ge 2$. By the Scaling Lemma \ref{lemma:scaling} we may scale every prefix covering design in the sequence to have larger value of $K_i$ than the previous one, so we may assume that $K_i < K_{i+1}$ for all $i \ge 1$. We prove a series of properties that hold for all PCDs in this sequence for sufficiently large $i$. For clarity of presentation (by a slight abuse of notation), we will use asymptotic statements (over $i$) and write $\alpha$ instead of $\alpha_i$ and $K$ instead of $K_i$. For example, we can already write that $K \ge 2 \alpha - o(\alpha)$ and $\alpha \le \frac{K}{2} + o(K)$. For any considered PCD $s_1, s_2, \ldots, s_d$ we may also assume that no elements are located in the same sequence $s_i$ twice because otherwise we could simply delete the second occurance, and it could only improve the construction.

\begin{lemma}\label{lemma:small_primary_positions}
    If $\alpha \le \frac{K}{2} + o(K)$, the largest level of a primary position is at most $\frac{\alpha}{2} + o(\alpha)$.
\end{lemma}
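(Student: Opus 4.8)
The plan is a counting argument resting on the two defining conditions. Write $\ell_0$ for the largest level of any primary position; after relabelling the sequences I may assume it occurs at position $(1,\ell_0)$, and I set $x := s_1[\ell_0]$ and $t := \alpha - \ell_0$. If $\ell_0 \le \frac{\alpha+1}{2}$ there is nothing to prove, so assume $\ell_0 > \frac{\alpha+1}{2}$; then the singleton condition already forces $x$ to occur only once, since a second occurrence would give $\ellmin(x)+\ellmax(x) \ge 2\ell_0 > \alpha+1$. So $x$ sits exactly at the position $(1,\ell_0)$.

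Next I would establish a ``reachability'' fact from the triplet condition: every element $y$ not occurring in $s_1$ must occur at level $\le t$ in one of $s_2,s_3,s_4$. Indeed, pick any third element $z$ and a covering of $\{x,y,z\}$ by prefixes of total length $\le\alpha$; since $x$ occurs only at $(1,\ell_0)$, the prefix covering $x$ must be $s_1[..\ell]$ with $\ell\ge\ell_0$, and as $y\notin s_1$ it is covered by a prefix of some $s_j$, $j\in\{2,3,4\}$, of length at most $\alpha-\ell\le t$ (using that the non-empty prefixes of a triplet covering may be taken from distinct sequences). In particular the primary position of every such $y$ lies at level $\le t$ in one of $s_2,s_3,s_4$.

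The second --- and, I expect, less obvious --- ingredient is that the singleton condition also pins down the ``deep'' elements of $s_1$. Since $\ell_0$ is the largest primary level, every $s_1[\ell]$ with $\ell_0<\ell\le|s_1|$ is a copy; it occurs at level $\ell$ in $s_1$, so $\ellmax(s_1[\ell])\ge\ell$ and hence $\ellmin(s_1[\ell])\le\alpha+1-\ell\le t$. As a sequence repeats no element, the primary occurrence of $s_1[\ell]$ is not in $s_1$, so it too lies at level $\le t$ in one of $s_2,s_3,s_4$.

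Finally I would count the primary positions at levels $\le t$ in $s_2,s_3,s_4$, of which there are at most $3t$. The $K-|s_1|$ elements outside $s_1$ each contribute one (reachability step), the $|s_1|-\ell_0$ copies $s_1[\ell]$ with $\ell>\ell_0$ each contribute one (deep-copy step), and these two sets of elements are disjoint, so all these primary positions are pairwise distinct. Thus $(K-|s_1|)+(|s_1|-\ell_0)=K-\ell_0\le 3t=3(\alpha-\ell_0)$, i.e. $\ell_0\le\frac{3\alpha-K}{2}$; since $K=\Theta(\alpha)$, the hypothesis $\alpha\le\frac K2+o(K)$ reads $K\ge 2\alpha-o(\alpha)$, so $\ell_0\le\frac{\alpha}{2}+o(\alpha)$, as claimed. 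The hard part is recognizing that one needs \emph{both} sources of forced-shallow primaries: the reachability bound alone gives only $\ell_0\le\frac{2\alpha}{3}+o(\alpha)$, and it is precisely the deep copies in $s_1$ --- which the singleton condition forbids from having deep primaries --- that consume the remaining $3t$-budget and bring the exponent down to $\tfrac12$. I would also double-check the minor ``without loss of generality'' points (distinct sequences in a triplet covering; $|s_i|\le\alpha$; passing to a subsequence on which $K/\alpha$ converges), but these are routine.
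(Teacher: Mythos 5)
Your argument is correct and is essentially the paper's own proof in a different guise: the paper covers all $K$ elements by the prefix $s_{i_m}[..m]$ together with the three prefixes $s_j[..\,\alpha-m]$, $j\ne i_m$, using exactly your two ingredients (the triplet condition with the uncopyable top element $x$ forces elements outside $s_{i_m}$ to have primary level $\le \alpha-m$; the singleton condition forces the deep copies in $s_{i_m}$ to have primary level $\le \alpha-m$ elsewhere), yielding the same inequality $K\le \ell_0+3(\alpha-\ell_0)$. Your version is merely phrased as a direct count of primary positions rather than a covering-by-prefixes contradiction, so no substantive difference.
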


\begin{proof}[Proof of Lemma \ref{lemma:small_primary_positions}]
    
    Call $(i_m, m)$ the largest primary position (with largest $m$) of this PCD. We will prove that $m \le \frac{\alpha}{2} + o(\alpha)$. Assume for contradiction that $m = \frac{\alpha}{2} + \varepsilon \alpha$ for some $\varepsilon > \varepsilon_0 > 0$ where $\varepsilon_0$ is a constant independent of $K$ and $\alpha$. It is easy to see that $s_{i_m}[m]$ cannot have copies because otherwise, we would have $m + m' \ge m + m \ge \alpha + 2 \varepsilon \alpha > \alpha$ where $m'$ is the level of any copy, contradicting the singleton condition for $s_{i_m}[m]$. We prove that prefix $s_{i_m}[.. m]$ and prefixes $s_i[.. \frac{\alpha}{2} - \varepsilon \alpha]$ for all $i \neq i_m$ cover all $K$ elements. Assume for contradiction that it is not true. Distinguish two cases of where else elements can be located.

\begin{enumerate}
    \item If some element is located at a position $(i_m, y)$ for $y \ge m + 1 \ge \frac{\alpha}{2} + \varepsilon \alpha + 1$ then it cannot be its primary position because $m$ was the largest primary position, so this element should have a primary position in some other sequence $i \neq i_m$ at level at most $\frac{\alpha}{2} - \varepsilon \alpha$ for the singleton condition to hold. But all such elements are covered by prefixes $s_{i}[.. \frac{\alpha}{2} - \varepsilon \alpha]$.

    \item If an element is not located in sequence $i_m$, define its primary position as $(i_y, y)$. But then we need prefixes of total size at least $m + y$ to cover any triplet of form $\{s_{i_m}[m], s_{i_y}[y], c\}$ with $c \in [K]$. So $m + y \le \alpha$ which yields $y \le \frac{\alpha}{2} - \varepsilon \alpha$, and this element is covered by prefix $s_{i_y}[.. \frac{\alpha}{2} - \varepsilon \alpha]$.
\end{enumerate}

    So indeed these prefixes cover all elements. Thus, they must contain at least $K$ elements. But their total size is $m + 3 \cdot \left(\frac{\alpha}{2} - \varepsilon \alpha\right) = \frac{\alpha}{2} + \varepsilon \alpha + 3 \cdot \left(\frac{\alpha}{2} - \varepsilon \alpha\right) = (2 - 2 \varepsilon) \alpha < (2 - 2 \varepsilon_0) \alpha < K$ for big enough $K$. So we get a contradiction.

\end{proof}

Consequently, all primary positions are located on the first $m = \frac{\alpha}{2} \pm o(\alpha)$ levels (it is obvious that $m \ge \frac{\alpha}{2} - o(\alpha)$ as there are $K \ge 2 \alpha - o(\alpha)$ elements that should have primary positions in $4$ sequences). But these levels contain $K + o(K)$ elements in total, so almost all\footnote{Here and later, \emph{almost all} is understood as all but $o(K)$.} elements at these levels are primary. 

\begin{lemma}\label{lemma:existance_of_copies}
    Let $m$ denote the largest level of a primary position. There exists $\varepsilon > 0$ independent of $K$ such that every element with primary position level $p$ with $(1 - \varepsilon)m \le p \le m$ has at least one copy.
\end{lemma}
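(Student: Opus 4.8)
The plan is to argue by contradiction: suppose that for every $\varepsilon > 0$ there are infinitely many PCDs in the sequence in which a positive fraction of the elements with primary level in the top band $[(1-\varepsilon)m, m]$ have \emph{no} copy. Fix a small constant $\varepsilon$ to be determined, and let $N$ be the set of such copy-free elements on levels between $(1-\varepsilon)m$ and $m$; by assumption $|N| = \Omega(K)$ along a subsequence. The key observation is that a copy-free element $x$ with primary position $(i,p)$ is ``expensive'' to use in any triplet covering: since it occurs only at $(i,p)$, any prefix witnessing the triplet condition for a set containing $x$ must be a prefix of $s_i$ of length at least $p \ge (1-\varepsilon)m$. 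I would then restrict attention to triplets $\{a,b,c\}\subseteq N$ and run the same counting/graph argument used in the proof of Proposition~\ref{prop:square_root_lower_bound}: build the graph $G_N$ on vertex set $N$ with an edge between two elements iff they share a sequence, and observe that for any triplet in $\binom{N}{3}$ at least two of its three witnessing prefixes must lie in the \emph{same} sequence (since three distinct sequences would force total prefix length $\ge 3(1-\varepsilon)m \approx \frac{3}{2}(1-\varepsilon)\alpha > \alpha$ for small $\varepsilon$). Hence the complement of $G_N$ is triangle-free, so $G_N$ has at least $\binom{|N|}{2} - |N|^2/4 = \Omega(K^2)$ edges by Mantel's theorem.

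On the other hand I would upper-bound the number of edges of $G_N$ by counting, for each sequence $s_i$, how many elements of $N$ it can contain. An element of $N$ appearing in $s_i$ at level $\ell$ either is its primary position (so $\ell \le m$, contributing at most the top band's worth of elements) or — but there are no copies of $N$-elements, so in fact every occurrence of an $N$-element is primary. Thus each sequence contains at most $m - (1-\varepsilon)m + 1 = \varepsilon m + O(1)$ elements of $N$ in the relevant level range, and in total at most $m \le \frac{\alpha}{2}+o(\alpha)$ elements overall; combined with the fact that almost all elements on the first $m$ levels are primary (noted right after Lemma~\ref{lemma:small_primary_positions}), one gets that each $s_i$ meets $N$ in only $O(\varepsilon m) = O(\varepsilon \alpha)$ positions for the narrow-band part, hence $G_N$ has at most $d \cdot \binom{O(\varepsilon\alpha)}{2} = O(\varepsilon^2 \alpha^2) = O(\varepsilon^2 K^2)$ edges. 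Choosing $\varepsilon$ small enough that $O(\varepsilon^2 K^2) < \Omega(K^2)$ (the hidden constant in the lower bound being absolute) yields the contradiction, so for this particular $\varepsilon$ all but possibly a vanishing fraction — and with a slightly more careful accounting, \emph{all} — elements with primary level in $[(1-\varepsilon)m, m]$ must have a copy.

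I expect the main obstacle to be making the last quantifier exact: the counting argument naturally delivers ``almost all such elements have a copy,'' whereas the lemma asserts ``every'' such element has a copy. To close this gap I would argue that a single copy-free element $x$ at level $p \in [(1-\varepsilon)m, m]$ already forces, via the triplet condition applied to $\{x, b, c\}$ for the many other $N$-elements $b,c$ it must be co-sequential with, that a full $\Omega(\varepsilon m)$-length suffix of some sequence is ``clogged'' with $N$-elements — and then reuse the rigidity argument from the proof of Proposition~\ref{prop:square_root_lower_bound} (the part showing that an element placed at level $\ge 2(g-a)+3$ forces all of $B$ into one sequence) to derive that this cannot happen for more than $O(1)$ elements, which are then handled by absorbing them into the $o(K)$ slack. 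An alternative, cleaner route — which I would try first — is to prove the contrapositive only in the averaged form and then observe that the subsequent lemmas in Section~\ref{sec:limits-appendix} only ever need the copy structure for ``almost all'' top-band elements, so restating Lemma~\ref{lemma:existance_of_copies} with ``almost all'' in place of ``every'' suffices; but since the paper states it with ``every,'' I would commit to the rigidity argument above to get the stronger form.
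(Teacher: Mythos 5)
Your approach has a genuine quantitative gap that no choice of $\varepsilon$ can repair. The set $N$ of copy-free elements with primary level in $[(1-\varepsilon)m,\,m]$ lives inside a band of only $4(\varepsilon m+1)=O(\varepsilon K)$ positions (recall $m=\frac{\alpha}{2}\pm o(\alpha)$ and $\alpha\le \frac{K}{2}+o(K)$, so $m=\Theta(K)$), so your standing assumption $|N|=\Omega(K)$ is vacuous for small $\varepsilon$; the honest hypothesis is $|N|\ge c\,\varepsilon K$ for some constant $c>0$. With that correction, Mantel gives a lower bound of order $c^2\varepsilon^2K^2$ on the edges of $G_N$, while your upper bound $d\cdot\binom{\varepsilon m+O(1)}{2}$ is also of order $\varepsilon^2K^2$. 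Both sides scale identically in $\varepsilon$, so shrinking $\varepsilon$ produces no contradiction; the argument only rules out the copy-free fraction of the band exceeding an absolute constant (roughly $1/\sqrt{2}$), which is far from ``almost all elements have a copy'' and nowhere near the lemma's ``every''. Your fallback of weakening the statement to ``almost all'' also fails twice over: your counting does not even deliver a $1-o(1)$ fraction, and Lemma~\ref{lemma:uniqueness_of_copies} genuinely needs the near-exact match between the $4\varepsilon m - o(m)$ elements requiring copies and the $4\varepsilon m+o(m)$ available copy slots --- losing a constant fraction of the elements would leave room for many elements with two copies and break that lemma's conclusion.

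The ``every'' quantifier is the entire content here, and the paper's proof attacks it directly rather than via a density argument: assume a \emph{single} copy-free element with primary position, say, $(1,m-x)$, $x\le\varepsilon m$. Its existence forces every triplet containing it to be covered by at most two prefixes, one of which is the long prefix $s_1[..\,m-x]$; this in turn forces the copies of the $\Theta(\varepsilon m)$ primary elements on nearby levels of the \emph{other} sequences into a narrow window of levels just above $m$ (bounded both by the singleton condition and by the triplet condition relative to the copy-free element). A two-case analysis --- depending on whether a second copy-free element exists among sequences $2,3,4$ on levels $[m-2\varepsilon m, m]$ --- then shows the available copy positions ($3\varepsilon m+o(m)$, resp.\ $5\varepsilon m+o(m)$) cannot accommodate the elements that need them ($4\varepsilon m-o(m)$, resp.\ $6\varepsilon m-o(m)$). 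Your sketched ``rigidity'' patch gestures at something of this flavor but is not an argument; if you want to salvage your write-up, you should replace the Mantel step entirely with this single-element pigeonhole analysis.
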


\begin{proof}[Proof of Lemma \ref{lemma:existance_of_copies}]
    Assume for contradiction that there exists an element with a primary position level between $m - \varepsilon m$ and $m$ that does not have any copies. Define its primary position, without loss of generality, as $(1, m - x)$ where $0 \le x \le \varepsilon m$. We distinguish two cases whether there exist any element with a primary position in the levels $[m - 2 \varepsilon m, m]$ from sequences $2$, $3$ or $4$ that does not have a copy or not.

\begin{enumerate}
    \item If it does exist, define its primary position without loss of generality as $(2, m - y)$ where $0 \le y \le 2 \varepsilon m$. Consider how to cover triplets of the form $\{s_1[m - x], s_2[m - y], s_i[z]\}$ for $i \in \{3, 4\}$ and $z \in [m- 2 \varepsilon m, m]$ such that $(i, z)$ is the primary position of $s_i[z]$ (it exists as there are $4m = K + o(K)$ positions on the first $m$ levels, so almost all of them are primary). We cannot cover them with three sequences because it would take $\ge (3 - 5 \varepsilon) m \ge \left(\frac{3-5 \varepsilon}{2}-o(1)\right) \alpha > \alpha$ for small enough $\varepsilon$. Thus, we must cover such triplets with at most two sequences. But $s_1[m-x]$ and $s_2[m - y]$ do not have copies, so $s_i[z]$ must have some copy in sequence $1$ or $2$. If we copy it to sequence $1$, its position must be $\le m + 2 \varepsilon m + o(m)$ to comply with the triplet condition, and if we copy it to sequence $2$, its position must be $\le m + \varepsilon m + o(m)$ to comply with the triplet condition. As we have $o(m)$ copy positions in the first $m$ levels, it means that there are $3 \varepsilon m + o(m)$ suitable copy positions in the first two sequences in total. But there are $4 \varepsilon m - o(m)$ primary elements from layers $[m - 2 \varepsilon m, m]$ in sequences $3$ and $4$ that should be copied. This yields a contradiction.

    \item If there does not exist such an element $s_2[m - y]$, then all elements with primary positions on levels $[m - 2 \varepsilon m, m]$ in sequences $2$, $3$ and $4$ have copies. If some of them have copies in sequences $2$, $3$ or $4$ on positions $\ge m + \varepsilon m + \delta m$ for some constant $\delta > 0$ independent of $K$, these positions cannot be used to cover triplets containing $s_1[m - x]$ due to the triplet condition, so this situation is equivalent to the situation where such an element does not have a copy at all, and we understood that it is impossible. So only levels $\le m + \varepsilon m + o(m)$ are available in the sequences $2$, $3$ and $4$ for copies. At the same time, these elements have primary positions $\ge m - 2 \varepsilon m$, so due to the singleton condition their copies in sequence $1$ cannot be higher than level $m + 2 \varepsilon m + o(m)$. As we know that there are $o(m)$ copy positions on the first $m$ levels of all sequences, there are $2 \varepsilon m + \varepsilon m + \varepsilon m + \varepsilon m + o(m) = 5 \varepsilon m + o(m)$ copy positions for these elements. But there are $6 \varepsilon m - o(m)$ such primary elements from sequences $2$, $3$ and $4$ in total. Consequently, some of them cannot have copies, yielding a contradiction.
\end{enumerate}

\end{proof}

\begin{lemma}\label{lemma:uniqueness_of_copies}
    Let $m$ denote the largest level of a primary position. There exists $\varepsilon > 0$ independent of $K$ such that \emph{almost} every element with primary position level $p$ with $(1 - \varepsilon)m \le p \le m$ has \emph{exactly} one copy and this copy is located on level $2m-p \pm o(m)$.
\end{lemma}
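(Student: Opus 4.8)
The plan is to derive both halves of the lemma from a single global counting argument, using two facts established just before Lemma~\ref{lemma:existance_of_copies}: that $\alpha = 2m \pm o(m)$, and that the first $m$ levels hold $K + o(K) = 4m + o(m)$ positions, all but $o(m)$ of which are primary (so only $o(m)$ \emph{copies} lie on levels $\le m$, and levels $\le m$ are ``full'' with $4$ positions up to $o(m)$ exceptions). Fix the constant $\varepsilon>0$ supplied by Lemma~\ref{lemma:existance_of_copies}, and call an element \emph{late} if its primary position level $p$ satisfies $(1-\varepsilon)m \le p \le m$; there are $4\varepsilon m \pm o(m)$ late elements, and each has at least one copy. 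First I would localize these copies: by the singleton condition a copy of a late element with primary level $p$ sits at level at most $\alpha+1-p \le 2m-p+o(m) \le (1+\varepsilon)m+o(m)$, and since only $o(m)$ copies lie on levels $\le m$, for all but $o(m)$ late elements \emph{every} copy lies in the band $B$ consisting of the levels $m+1,\ldots,(1+\varepsilon)m+o(m)$, which holds $4\varepsilon m+o(m)$ positions. Write $S$ for this set of late elements; so $|S|=4\varepsilon m-o(m)$.

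Next comes the \emph{uniqueness} claim, by pigeonhole inside $B$. The copies of the elements of $S$ occupy distinct positions of $B$, hence at least $|S|=4\varepsilon m-o(m)$ and at most $|B|=4\varepsilon m+o(m)$ of them; therefore the number of elements of $S$ owning two or more copies is $o(m)$, and at most $o(m)$ positions of $B$ carry copies of non-late elements. Combined with Lemma~\ref{lemma:existance_of_copies}, this shows that all but $o(m)$ late elements have \emph{exactly} one copy and that copy lies in $B$. Let $X$ be the set of such elements, so $|X|=4\varepsilon m\pm o(m)$, and let $\phi$ send $x\in X$ to the position of its unique copy; then $\phi$ is injective and its image covers all but $o(m)$ of the $4\varepsilon m+o(m)$ positions of $B$.

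For the \emph{location} claim, I would compare $\sum_{x\in X}p(x)$ and $\sum_{x\in X}c(x)$, where $p(x),c(x)$ are the primary and copy levels of $x$. The primary positions of the elements of $X$ are all but $o(m)$ of the positions on levels $(1-\varepsilon)m,\ldots,m$, so $\sum_{x\in X}p(x)=4\sum_{\ell=(1-\varepsilon)m}^{m}\ell+o(m^2)$; and since $\phi$ injects $X$ onto all but $o(m)$ of $B$, $\sum_{x\in X}c(x)=4\sum_{\ell=m+1}^{(1+\varepsilon)m}\ell+o(m^2)$. (Here one checks that levels of $B$ are, up to $o(m)$ exceptions, ``full'' with $4$ positions, which follows since $|B|$ almost equals $|X|$; alternatively the common factor cancels below.) Adding and simplifying, $\sum_{x\in X}(p(x)+c(x))=4\sum_{\ell=(1-\varepsilon)m}^{(1+\varepsilon)m}\ell+o(m^2)=8\varepsilon m^2+o(m^2)$, so the average of $p(x)+c(x)$ over the $|X|=4\varepsilon m\pm o(m)$ elements is $2m+o(m)$. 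On the other hand the singleton condition gives $p(x)+c(x)=\ellmin(x)+\ellmax(x)\le\alpha+1=2m+o(m)$ for \emph{every} $x\in X$, using $\ellmax(x)=c(x)>m\ge p(x)=\ellmin(x)$. Hence the nonnegative ``deficit'' $\sum_{x\in X}\bigl((\alpha+1)-(p(x)+c(x))\bigr)$ equals $|X|\cdot o(m)=o(m^2)$, so for every fixed $\delta>0$ all but $o(m)$ of the $x\in X$ satisfy $(\alpha+1)-(p(x)+c(x))\le\delta m$, i.e. $c(x)\ge 2m-p(x)-\delta m-o(m)$; together with $c(x)\le\alpha+1-p(x)\le 2m-p(x)+o(m)$ this gives $c(x)=2m-p(x)\pm\delta m\pm o(m)$. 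Since $\delta$ was arbitrary, $c(x)=2m-p(x)\pm o(m)$ for all but $o(m)$ late elements, which together with the uniqueness claim proves the lemma.

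I expect the main obstacle to be the lower bound $c(x)\gtrsim 2m-p(x)$: the matching upper bound is just the singleton condition, but excluding the possibility that many late elements have their copy far \emph{below} level $2m-p(x)$ genuinely requires the global accounting above --- that the $\approx 4\varepsilon m$ copies of late elements must tile essentially \emph{all} of the band $B$, which forces the primary-to-copy correspondence to behave on average like the anti-diagonal map $p\mapsto 2m-p$. The rest is careful bookkeeping of the nested $o(\cdot)$ error terms (all asymptotic in the sequence index $i\to\infty$) and of the sequence lengths near level $m$; these are routine but must be carried out consistently.
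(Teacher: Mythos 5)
Your proof is correct and follows essentially the same route as the paper's: uniqueness comes from the same pigeonhole count of late elements against the $4\varepsilon m + o(m)$ copy slots in the band above level $m$, and the location claim comes from the same double counting of copy levels (singleton condition as the per-element upper bound, the fact that the copies must tile the band as the lower bound). The only difference is presentational — you package the location step as an exact band-sum plus a Markov/deficit argument, where the paper phrases it as a direct contradiction using a sorted-positions packing bound — and the remaining $\delta$-diagonalization and $o(\cdot)$ bookkeeping you defer is routine and consistent with the paper's own level of asymptotic informality.
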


\begin{proof}[Proof of Lemma \ref{lemma:uniqueness_of_copies}]
    From Lemma \ref{lemma:existance_of_copies}, we know that all of these elements with primary positions from levels $[m - \varepsilon m, m]$ have at least one copy. At the same time, due to the singleton condition, their copies should be on levels $\le m + \varepsilon m + o(m)$, and as we know, there are $o(m)$ copy positions in the first $m$ levels, so there are $4 \varepsilon m + o(m)$ copy positions for them in total. But there are $4 \varepsilon m - o(m)$ such elements. Thus, almost all elements with primary positions from levels $[m - \varepsilon m, m]$ have \emph{exactly} one copy. For an element with primary position $(k, m - x)$, such a copy should be on the level $\le m + x + o(m)$ due to the singleton condition. At the same time, its level can not be much smaller. Assume for contradiction that there are $\ge \delta m$ elements with primary positions $(k_i, m - x_i)$ for some constant $\delta > 0$ independent of $K$, and their copies are located at positions $y_i \le m + x_i - \gamma m$ for some constant $\gamma > 0$ independent of $K$. Define $S$ as the sum of all copy positions of all elements with primary positions from levels $[m - \varepsilon m, m]$ that have exactly one copy. Every such position is at most $m + x + o(m)$ for an element with a primary position on level $m-x$, but there are $\ge \delta m$ elements that have copy positions $\le m + x - \gamma m$. It means that $S \le 4 \left(\sum_{i=0}^{\varepsilon m} m+i \right) - (\gamma m) \cdot (\delta m) = (4 \varepsilon + 2 \varepsilon^2 + o(1) - \gamma \cdot \delta) m^2$. On the other hand, we can find a lower bound on $S$. Only $o(m)$ elements have copy positions smaller than $m$, so there are $4 \varepsilon m - o(m)$ elements that should have copies on levels starting from the $m$-th. If we order these copy positions in the increasing order, the first four positions are $\ge m$, the second four positions are $\ge m + 1$, and so on: positions with indices $4i+1$, $4i+2$, $4i+3$ and $4i+4$ are $\ge m + i$ (because otherwise, we would need to pack at least $4i+1$ elements on $i$ levels of four sequences). It means that their sum $S$ is at least $4 \varepsilon m^2 + 2 \varepsilon^2 m^2 - o(m^2) = (4 \varepsilon + 2 \varepsilon^2 - o(1)) m^2$. Which contradicts the upper bound we acquired.

\end{proof}

So almost all of these elements with primary positions from levels $[m - \varepsilon m, m]$ have exactly one copy, and if the primary position of such an element is $m - x$, then its copy is located on level $m + x \pm o(m)$ for almost all elements. It also means that for constants $0 \le \varepsilon_2 < \varepsilon_1 \le \varepsilon$ independent of $K$ almost all elements from $[m - \varepsilon_1 m, m - \varepsilon_2 m)$ are primary, and almost all of them have exactly one copy each, and for almost all of them, these copies are located on levels $[m + \varepsilon_2 m - o(m), m + \varepsilon_1 m + o(m))$. But there are $o(m)$ positions on levels $[m + \varepsilon_2 m - o(m), m + \varepsilon_1 m + o(m)) \setminus [m + \varepsilon_2 m, m + \varepsilon_1 m)$, so we can say that almost all of these elements have their copies on levels $[m + \varepsilon_2 m, m + \varepsilon_1 m)$.

    Now all the preparatory work is done, and we are ready to prove that such constructions are not possible. We say that an element with primary position $(i, m - x)$ ``is uniquely copied'' to sequence $j$ if it has exactly one copy and it is located on level $m + x \pm o(m)$ in the sequence $j$. We fix some sufficiently small constant $\gamma > 0$ independent of $K$. We say that elements from levels $[m - \varepsilon_1 m, m - \varepsilon_2 m)$ ($0 \le \varepsilon_2 < \varepsilon_1 \le \varepsilon$) from sequence $i$ ``are uniquely copied'' to sequence $j$ if there exist $\ge \gamma m$ elements with primary positions on these levels in sequence $i$ that have their one and only copy in sequence $j$ on levels $[m + \varepsilon_2 m, m + \varepsilon_1 m)$.

    Call $P_{[\ell, r)}$ the set of all pairs $(i, j)$ such that there are $\ge \gamma m$ elements from levels $[\ell, r)$ from sequence $i$ that are uniquely copied to levels $[2m-r, 2m-\ell)$ in sequence $j$.

\begin{lemma} \label{lm:five-division}
        There exists $\delta > 0$ independent of $K$ such that there exist $0 < \varepsilon_2 < \varepsilon_1 \le \varepsilon$ such that $B \coloneqq \frac{\varepsilon_1m - \varepsilon_2m}{5} \ge \delta m$ and $P_{[m - \varepsilon_1 m, m - \varepsilon_2 m)} = P_{[m - \varepsilon_1 m + iB, m - \varepsilon_1 m + (i+1)B)}$ for all $0 \le i \le 4$.
\end{lemma}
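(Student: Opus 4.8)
The plan is to exploit that $P_{[\ell,r)}$ is determined by which of the at most $16$ pairs $(i,j)$ with $i,j\in[4]$ it contains (in fact $i\ne j$ always, since no element appears twice in one sequence), and that $P$ behaves monotonically under interval inclusion; a bounded‑potential iteration then produces the required stable interval.

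First I would establish the monotonicity statement: if $[\ell',r')\subseteq[\ell,r)$, then $P_{[\ell',r')}\subseteq P_{[\ell,r)}$. Indeed, any element witnessing a pair $(i,j)\in P_{[\ell',r')}$ — an element whose primary position lies in $s_i$ on a level of $[\ell',r')$ and whose unique copy lies in $s_j$ on a level of $[2m-r',2m-\ell')$ — also witnesses $(i,j)\in P_{[\ell,r)}$, because shrinking $[\ell',r')\subseteq[\ell,r)$ shrinks the mirrored target window $[2m-r',2m-\ell')\subseteq[2m-r,2m-\ell)$ in tandem; hence the number of witnesses, and thus membership, can only be lost when passing to a subinterval.

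Next I would run an iterative refinement. Start with $I^{(0)}\coloneqq[m-\varepsilon m,\,m-\tfrac{\varepsilon}{2}m)$, which has the required shape (with $0<\varepsilon_2<\varepsilon_1\le\varepsilon$) and width $\tfrac{\varepsilon}{2}m$. Given the current interval $I=[m-\varepsilon_1 m,\,m-\varepsilon_2 m)$, split it into five consecutive subintervals of equal width $B=\tfrac{(\varepsilon_1-\varepsilon_2)m}{5}$, namely $[m-\varepsilon_1 m+iB,\,m-\varepsilon_1 m+(i+1)B)$ for $0\le i\le 4$; by monotonicity each of their $P$‑sets is contained in $P_I$. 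If all five equal $P_I$, stop and output $I$ — this is exactly the conclusion. Otherwise some subinterval has $P$‑set a proper subset of $P_I$, hence of strictly smaller cardinality, and I recurse on that subinterval, which again has the form $[m-\varepsilon_1'm,\,m-\varepsilon_2'm)$ with $0<\varepsilon_2'<\varepsilon_1'\le\varepsilon$ (since $\varepsilon_2'\ge\varepsilon_2^{(0)}>0$ and $\varepsilon_1'\le\varepsilon_1^{(0)}\le\varepsilon$). Because $|P_I|$ is a nonnegative integer at most $16$ that strictly decreases whenever we do not stop, the process halts within $16$ rounds; each round divides the width by $5$, so the output interval still has width $\ge\tfrac{\varepsilon}{2}m/5^{16}$, and hence $B\ge\tfrac{\varepsilon}{2\cdot 5^{17}}m$. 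Defining $\delta\coloneqq\tfrac{\varepsilon}{2\cdot 5^{17}}$, a constant independent of $K$, gives $B\ge\delta m$, and the output $\varepsilon_1,\varepsilon_2$ satisfy all required constraints. Integrality of levels only perturbs widths by $O(1)$, which is absorbed into the $o(m)$ slack of the statement.

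The only point requiring genuine care is phrasing the monotonicity correctly — that the source window $[\ell,r)$ and the mirrored target window $[2m-r,2m-\ell)$ expand together — so that passing to subintervals can never create new pairs in $P$. Once that is in place, the rest is a routine "a potential bounded by a constant cannot drop more than constantly often" argument, and I do not expect a substantial obstacle.
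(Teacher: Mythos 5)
Your proposal is correct and follows essentially the same route as the paper: the monotonicity $P_{[\ell',r')}\subseteq P_{[\ell,r)}$ for subintervals (which you spell out explicitly, and which the paper uses implicitly), followed by the five-way split and a bounded-potential iteration terminating in a constant number of rounds. The only cosmetic differences are your bound of $16$ instead of the paper's $12$ on the number of possible pairs (yielding a slightly smaller but still valid $\delta$) and your choice of starting interval $[m-\varepsilon m, m-\tfrac{\varepsilon}{2}m)$, which cleanly guarantees $\varepsilon_2>0$.
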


\begin{proof}
    We prove this statement by a construction. $\delta$ will be equal to $\frac{\varepsilon}{5^{12}}$ (and we should take $\gamma$ much smaller than $\delta$). We provide an iterative process with at most $12$ steps that will lead us to desired values of $\varepsilon_1$ and $\varepsilon_2$.

    We start with a segment $[m - \varepsilon m, m)$. On every iteration we divide our current segment $[\ell, r)$ into five almost equal parts and consider $P$-sets for these six segments. Define them as $P$, $P_1$, $P_2$, $P_3$, $P_4$ and $P_5$. If all of them are equal to each other, the lemma holds and we use the current segment $[\ell, r)$ as the answer. Otherwise there exists $1 \le i \le 5$ such that $P \neq P_i$. As all five subsegments of $[\ell, r)$ lie strictly inside $[\ell, r)$, this implies that $P_i \subsetneq P$ and thus $|P_i| \le |P| - 1$. We replace our current segment $[\ell, r)$ with $[\ell + (i-1)(r-\ell) / 5, \ell + i(r - \ell) / 5)$ and continue to the next stage. Note that $|P_{[m - \varepsilon m, m)}| \le 4 \cdot 3 = 12$ and on every next iteration the size of the segment decreases by a factor of $5$ and the size of $P$-set decreases by at least one. Consequently, after at most $12$ iterations, we will get a suitable segment of size at least $\frac{\varepsilon}{5^{12}}$.

\end{proof}

Due to Lemma \ref{lm:five-division} we kow that there exists a segment $[A, A + 5B)$ such that $A \ge m - \varepsilon m$, $A + 5B \le m$, $B = \Theta(m)$, and this segment satisfies the lemma statement. Now we answer a question: which pairs $(i, j)$ can be in the $P$-set of the segment $[A, A + 5B)$?

\begin{lemma}\label{lm:first-forbidden-edge-triple}
        If $1 \le i, j, k, \ell \le 4$ are four distinct integers, at least one of pairs $(i, j)$, $(j, i)$ and $(k, \ell)$ is not present in the $P$-set of $[A, A + 5B)$.
\end{lemma}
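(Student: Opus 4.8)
The plan is to argue by contradiction, assuming that all three pairs $(i,j)$, $(j,i)$, $(k,\ell)$ belong to $P_{[A,A+5B)}$, and then produce a triplet of elements whose coverage violates the triplet condition. By Lemma \ref{lm:five-division}, the $P$-set is the same on each of the five equal subsegments $[A+tB, A+(t+1)B)$ for $t = 0,\dots,4$; this is the crucial leverage, since it lets me find many elements with the ``uniquely-copied'' structure at several disjoint scales simultaneously. First I would unpack what membership of $(i,j)$ in the $P$-set means: there are $\ge \gamma m$ primary elements on the relevant subsegment in sequence $s_i$ whose unique copy lands in sequence $s_j$ at the mirrored level $2m - (\text{primary level}) \pm o(m)$. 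I would do this on one subsegment for $(i,j)$, on a \emph{different} subsegment for $(j,i)$, and on a \emph{third} subsegment for $(k,\ell)$, so that the primary levels (and hence, up to $o(m)$, the copy levels) of the three families are separated by $\Theta(m)$ gaps.

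The core of the argument is then a counting/covering contradiction. Pick an element $a$ from the $(i,j)$-family with primary position $(i, A + t_1 B - x)$ and copy at $(j, \approx 2m - A - t_1 B + x)$; an element $b$ from the $(j,i)$-family with primary position $(j, A + t_2 B - y)$ and copy at $(i, \approx 2m - A - t_2 B + y)$; and an element $c$ from the $(k,\ell)$-family with primary position $(k, A + t_3 B - z)$. The triplet $\{a,b,c\}$ must be covered by three prefixes of total length $\le \alpha \le \tfrac{K}{2} + o(K) = m + o(m)$ (using $K = 4m \pm o(m)$ from the preceding discussion, hence $\alpha \le 2m + o(m)$; I will need to track the exact slack carefully here). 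The point is that $a$ lives only in sequences $s_i$ and $s_j$, $b$ lives only in sequences $s_j$ and $s_i$, and $c$ lives only in sequences $s_k$ and (some copy sequence, controlled by the $(k,\ell)$ membership, namely $s_\ell$). I would enumerate the few ways to cover $\{a,b,c\}$ with at most three prefixes from the four sequences, and show each either needs a prefix in $s_i$ or $s_j$ long enough to reach both the deep occurrence of $a$/$b$ \emph{and} forces an overlap that pushes the total length past $\alpha$, or needs to reach $c$ in $s_k$ or $s_\ell$ while also paying for $a$ and $b$. Because all of $a$, $b$, $c$ have occurrences only at levels that are $\Theta(m)$ deep (being on $[A, A+5B) \subseteq [m-\varepsilon m, m]$ and their mirrored copies on $[m, m+\varepsilon m]$), any covering that touches all three must pay roughly $m + m + m = 3m > \alpha$ or create a conflict on a shared sequence where two of the three demand incompatible prefix lengths.

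The main obstacle I anticipate is bookkeeping the $o(m)$ error terms and the constant $\gamma$ against $\varepsilon, \varepsilon_1, \varepsilon_2, \delta$ so that the ``$\Theta(m)$ separation'' of the three families is genuinely usable: I need to choose the subsegments $t_1, t_2, t_3$ and the representatives $a,b,c$ so that their primary levels are far enough apart that the forced prefix lengths provably exceed $\alpha$, while still having $\ge \gamma m$ candidates in each family (so such representatives exist and can even be chosen to avoid the $o(m)$ exceptional elements that don't obey the clean mirror structure). A secondary subtlety is handling the degenerate overlaps in the case analysis — e.g. when two of the three elements could in principle share a single covering prefix in one sequence — which is exactly where the requirement that $(i,j)$, $(j,i)$ \emph{and} $(k,\ell)$ all appear (forcing copies in three of the four sequences, with $i,j,k,\ell$ distinct) rules out the ``cheap'' covers and drives the contradiction. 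Once the case analysis is laid out, each case reduces to a one-line inequality of the form $(\text{const})\cdot m > \alpha$, so the real work is organizing the cases, not computing within them.
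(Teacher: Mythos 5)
Your proposal follows essentially the same route as the paper: contradiction, using Lemma~\ref{lm:five-division} to place the three uniquely-copied representatives on separated subsegments, observing that the $(k,\ell)$-element must be covered by its own prefix while the other two share sequences $i$ and $j$ so that one prefix must reach a deep copy, and summing to exceed $\alpha=2m+o(m)$. The "bookkeeping" you flag resolves exactly as you anticipate (the paper takes the $(k,\ell)$-representative from the last subsegment, primary level $A+4B+z$, and the $(i,j)$-, $(j,i)$-representatives from the first, so the $A$'s cancel and the total is at least $2m+3B=2m+\Theta(m)$), so your sketch is correct.
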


\begin{proof}
	Assume for contradiction that all these pairs are present in the $P$-set of $[A, A + 5B)$. Then all these pairs are contained in $P$-sets of all $5$ subsegments due to Lemma \ref{lm:five-division}. In particular, it means that there is an element with primary position $(i, A + x)$ where $0 \le x < B$ that has its only copy in sequence $j$ on level $2m - A - x \pm o(m)$. Also, there exists an element with primary position $(j, A + y)$ where $0 \le y < B$ that has its only copy in sequence $i$ on level $2m - A - y \pm o(m)$, and an element with primary position $(k, A + 4B + z)$ where $0 \le z < B$ that has its only copy in sequence $\ell$ on level $2m - A - 4B - z \pm o(m)$. We claim that it is impossible to cover these three elements with prefixes of total size at most $\alpha = 2m + o(m)$. If we use three different prefixes, the sum of their lengths is at least $3 \cdot (1 - \varepsilon)m > \alpha$ because $A \ge m - \varepsilon m$. Element $s_{k}[A + 4B + z]$ does not lie in the same sequences as two other elements, so it must be covered alone. It has $\ell_{\min}(s_{k}[A + 4B + z]) = A + 4B + z$, so such prefix must have size at least $A + 4B + z$. The two other elements are both located in sequences $i$ and $j$, so we should take a prefix in one of them. Without loss of generality, say we take sequence $j$. The coverage will be of size $(A + 4B + z) + (2m - A - x \pm o(m)) = 2m + 4B + z - x \pm o(m) \ge 2m + 4B - x \pm o(m) \ge 2m + 3B \pm o(m) = (2 + \Theta(1))m > \alpha$. So we cannot have all three pairs $(i, j)$, $(j, i)$ and $(k, \ell)$ in our $P$-set at the same time.
\end{proof}

    We prove the same statement about other three pairs.

\begin{lemma}\label{lm:second-forbidden-edge-triple}
        If $1 \le i, j, k, \ell \le 4$ are four distinct integers, at least one of pairs $(i, j)$, $(j, k)$ and $(k, \ell)$ is not present in the $P$-set of $[A, A + 5B)$.
\end{lemma}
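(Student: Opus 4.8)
The argument parallels that of Lemma~\ref{lm:first-forbidden-edge-triple}, now exploiting the \emph{path} structure $i\to j\to k\to\ell$ instead of the back-and-forth structure there. Assume for contradiction that $(i,j)$, $(j,k)$, $(k,\ell)$ all lie in the $P$-set of $[A,A+5B)$. By Lemma~\ref{lm:five-division} they then lie in the $P$-set of each of the five subsegments $[A+tB,A+(t+1)B)$, $t\in\{0,\dots,4\}$; recall also $A\ge m-\varepsilon m$, $A+5B\le m$, and $B=\Theta(m)$. From these memberships I would extract three concrete elements lying in \emph{well-separated} subsegments: an element $b$ with primary position $(j,p_b)$, $p_b\in[A,A+B)$, whose unique copy is at level $2m-p_b\pm o(m)$ in sequence $k$; an element $a$ with primary position $(i,p_a)$, $p_a\in[A+2B,A+3B)$, whose unique copy is at level $2m-p_a\pm o(m)$ in sequence $j$; and an element $c$ with primary position $(k,p_c)$, $p_c\in[A+4B,A+5B)$, whose unique copy is at level $2m-p_c\pm o(m)$ in sequence $\ell$. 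Choosing subsegments $0$, $2$, $4$ (rather than merely the two extremes, as in Lemma~\ref{lm:first-forbidden-edge-triple}) guarantees the two gaps $p_a-p_b>B$ and $p_c-p_a>B$, both of which the path structure will make us use.

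Then I would show that $\{a,b,c\}$ has no covering of total length at most $\alpha=2m+o(m)$, contradicting the triplet condition. Because $a$ occurs only in sequences $\{i,j\}$ and $c$ only in $\{k,\ell\}$ --- disjoint sets --- a covering by three non-empty prefixes must dedicate a separate prefix to each element, of length at least $\min(p_x,2m-p_x)=p_x\ge A\ge m-\varepsilon m$, hence of total length $\ge 3(m-\varepsilon m)>\alpha$ for $\varepsilon<1/3$; hence at most two prefixes are used (an empty third prefix covers nothing and reduces to this case). With at most two prefixes, one lies in $\{i,j\}$ (to cover $a$) and the other in $\{k,\ell\}$ (to cover $c$), and $b\in\{j,k\}$ must be covered by one of them. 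If $b$ is covered by the $\{i,j\}$-prefix, that prefix sits in sequence $j$ and reaches $a$'s copy, so it has length $\ge 2m-p_a\pm o(m)$, while the $\{k,\ell\}$-prefix still has to reach $c$ (cheapest occurrence: its primary of level $p_c$), for a total $\ge 2m+(p_c-p_a)\pm o(m)\ge 2m+B-o(m)>\alpha$. Symmetrically, if $b$ is covered by the $\{k,\ell\}$-prefix, that prefix sits in sequence $k$ and reaches $b$'s copy, length $\ge 2m-p_b\pm o(m)$, while the $\{i,j\}$-prefix must reach $a$'s primary of level $p_a$, for a total $\ge 2m+(p_a-p_b)\pm o(m)\ge 2m+B-o(m)>\alpha$. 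If $b$ is covered by neither, it is uncovered. Every case is a contradiction, so one of $(i,j)$, $(j,k)$, $(k,\ell)$ is absent from the $P$-set of $[A,A+5B)$.

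The bookkeeping is routine: the inherited inequalities $A\ge m-\varepsilon m$, $A+5B\le m$, $B=\Theta(m)$ are exactly what renders the $o(m)$ slack in the copy levels harmless and makes the displayed strict inequalities hold for large $K$; one also verifies that in each two-prefix case the shared-sequence prefix length is governed by the \emph{copy} level and not the primary level (e.g.\ $2m-p_a\ge p_b$ since $p_a+p_b<2A+4B\le 2m$). The real obstacle --- and the reason the path case is not a cosmetic variant of Lemma~\ref{lm:first-forbidden-edge-triple} --- is that we must rule out a cheap two-prefix covering in \emph{both} orientations simultaneously: $b$ may join the $\{i,j\}$-prefix in sequence $j$ or the $\{k,\ell\}$-prefix in sequence $k$. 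Consequently no single element can be driven to an extreme subsegment the way the ``hard'' element was in the previous lemma; instead the three primary levels must satisfy $p_b<p_a<p_c$ with $\Omega(m)$ gaps, which is exactly what the five-fold equi-division of Lemma~\ref{lm:five-division} supplies.
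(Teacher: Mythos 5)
Your proposal is correct and follows essentially the same route as the paper's proof: the same three witnesses are drawn from subsegments $0$, $2$, $4$ for the pairs $(j,k)$, $(i,j)$, $(k,\ell)$ respectively, the three-prefix case is dismissed by the same $3(1-\varepsilon)m>\alpha$ bound, and your two sub-cases (whether $b$ is absorbed by the sequence-$j$ prefix or the sequence-$k$ prefix) correspond exactly to the paper's choice between taking the second prefix in sequence $j$ or in sequence $i$, yielding the same $2m+B-o(m)>\alpha$ contradictions. Your added remark that the shared-sequence prefix length is governed by the copy level (since $p_a+p_b<2m$) is a detail the paper leaves implicit.
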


\begin{proof}
   Assume for a contradiction that all these pairs are present in the $P$-set of $[A, A + 5B)$. Then all these pairs are contained in $P$-sets of all $5$ subsegments due to Lemma \ref{lm:five-division}. In particular, there is an element with primary position $(i, A + 2B + x)$ where $0 \le x < B$ that has its only copy in sequence $j$ on level $2m - A - 2B - x \pm o(m)$. Also there exists an element with primary position $(j, A + y)$ where $0 \le y < B$ that has its only copy in sequence $k$ on level $2m - A - y \pm o(m)$ and an element with primary position $(k, A + 4B + z)$ where $0 \le z < B$ that has its only copy in sequence $l$ on level $2m - A - 4B - z \pm o(m)$. We claim that it is impossible to cover these three elements with prefixes of total size at most $\alpha = 2m + o(m)$. If we use three different prefixes, the sum of their lengths is at least $3 \cdot (1 - \varepsilon)m > 2m + o(m)$ because $A \ge m - \varepsilon m$. Now imagine that we use two sequences (obviously one sequence is impossible because there is no one sequence where all three elements are located). It is easy to see that we should take a prefix in sequence $k$ because otherwise, element $s_{k}[A + 4B + z]$ can be covered only using sequence $\ell$, and no other elements from these three are located in that sequence and level on which $s_{k}[A + 4B + z]$ is located in that sequence is not smaller than in sequence $k$ for this element (because in sequence $k$ it has its primary position), so it would be not worse to take prefix in sequence $k$ instead. So we should take sequence $k$ and either sequence $i$ or $j$. If we take sequence $i$, the coverage will be of size $(A + 2B + x) + (2m - A - y \pm o(m)) = 2m + 2B + x - y \pm o(m) \ge 2m + 2B - y \pm o(m) \ge 2m + B \pm o(m) = (2 + \Theta(1))m > \alpha$ because $B = \Theta(m)$. If we take sequence $j$, the coverage will be of size $(A + 4B + z) + (2m - A - 2B - x \pm o(m)) = 2m + 2B + z - x \pm o(m) \ge 2m + 2B - x \pm o(m) \ge 2m + B \pm o(m) = (2 + \Theta(1))m > \alpha$. So we cannot have all three pairs $(i, j)$, $(j, k)$ and $(k, \ell)$ in our $P$-set at the same time.
\end{proof}

These two limitations are sufficient for us to prove that such construction cannot exist. Consider a directed graph $G = (V, E)$ where $V = [4]$ and $E = P$. The weight of an edge $i \to j$ is equal to the number of elements going from sequence $i$ to sequence $j$ (notice that due to the definition of $P$-set all weights are $\ge \gamma m$). In every sequence there are $5B \pm o(m)$ elements that should be uniquely copied somewhere, and there are $5B$ places (on levels $[2m-A-5B, 2m-A)$) where elements from other sequences can be uniquely copied to in this sequence.
It means that for every vertex the sum of all outgoing edges is $5B \pm o(m)$ and the sum of all ingoing edges is $5B \pm o(m)$.

    Distinguish two cases whether there exist pairs $(i, j)$ and $(j, i)$ for some $i \neq j$ in the $P$-set or not.

\begin{enumerate}
    \item First, consider the case where such two pairs do not exist. So if there exists a pair $(i, j)$, the reversed pair $(j, i)$ cannot be present. Without loss of generality, say that there exists an edge $1 \to 2$. It means that there cannot exist an edge $2 \to 1$. Without loss of generality, say that there exists an edge $2 \to 3$. It means that there cannot exist edge $3 \to 2$. Elements from sequence $3$ cannot be uniquely copied to sequence $4$, because otherwise we get a construction where we have pairs $(i, j)$, $(j, k)$ and $(k, \ell)$ where $i=1$, $j=2$, $k=3$ and $\ell=4$ which would violate Lemma \ref{lm:second-forbidden-edge-triple}. So the only place where elements from sequence $3$ can be uniquely copied to is sequence $1$. Now we have a triangle $1 \to 2 \to 3 \to 1$. In the same, way we can see that there cannot exist edges $1 \to 4$ and $2 \to 4$ because otherwise, we would also violate Lemma \ref{lm:second-forbidden-edge-triple}. So it means that there are no edges into vertex $4$. It means that there can exist only $o(m)$ elements that are uniquely copied there. But it should have at least $5B \pm o(m) = \Theta(m)$ ingoing elements. It is a contradiction. See Figure \ref{fig:impossible-graph-1} for more understanding.

\begin{figure}
\begin{center}        
\begin{tikzpicture}[main/.style = {draw, circle}, scale=2]

    \node[main] at (0, 0) (3) {$3$};
    \node[main] at (-0.5,0.866) (1) {$1$};
    \node[main] at (0.5,0.866) (2) {$2$};
    \node[main] at (0, -1) (4) {$4$};

    \draw[line width=0.5mm, ->, color=green!40!gray] (1) to [out=20,in=160] (2);
\draw[line width=0.5mm, ->, color=green!40!gray] (2) to [out=260,in=40] (3);
\draw[line width=0.5mm, ->, color=green!40!gray] (3) to [out=140,in=280] (1);
\draw[line width=0.5mm, dashed, ->, color=red] (2) to [out=200,in=340] (1);
\draw[line width=0.5mm, dashed, ->, color=red] (3) to [out=80,in=220] (2);
\draw[line width=0.5mm, dashed, ->, color=red] (1) to [out=320,in=100] (3);
\draw[line width=0.5mm, dashed, ->, color=red] (3) to [out=270,in=90] (4);
\draw[line width=0.5mm, dashed, ->, color=red] (1) to [out=250,in=150] (4);
\draw[line width=0.5mm, dashed, ->, color=red] (2) to [out=290,in=30] (4);

\end{tikzpicture}
\end{center}
\caption{Solid edges are present, dashed edges cannot be present. We get a contradiction because no ingoing edges to vertex $4$ are allowed.}
\label{fig:impossible-graph-1}
\end{figure}
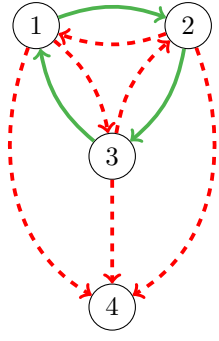

    \item Second, consider the case where there exist both pairs $(i, j)$ and $(j, i)$ for some $i \neq j$. Define $k, \ell$ such that $\{k, \ell\} = \{1, 2, 3, 4\} \setminus \{i, j\}$. There cannot exist edges $(k, \ell)$ or $(\ell, k)$ because it would violate Lemma \ref{lm:first-forbidden-edge-triple}. So then it means that almost all elements from sequences $k$ and $\ell$ are uniquely copied to sequences $i$ and $j$. There are $10B \pm o(m)$ such elements, so they take $10B \pm o(m)$ places. But there are $10B \pm o(m)$ places for copy in sequences $i$ and $j$ in total. Then edges $(i, j)$ and $(j, i)$ should have weights $o(m)$. But we cannot have edges with such small values because otherwise, we would not even add such pairs to $P$. This is a contradiction. See Figure \ref{fig:impossible-graph-2} for more understanding.

    \begin{figure}
\begin{center}        
\begin{tikzpicture}[main/.style = {draw, circle}, scale=2]

    \node[main] at (0,0) (i) {$i$};
    \node[main] at (1,0) (j) {$j$};
    \node[main] at (0,-1) (k) {$k$};
    \node[main] at (1, -1) (l) {$\ell$};

\draw[line width=0.5mm, ->, color=green!40!gray] (i) to [out=20,in=160] (j);
\draw[line width=0.5mm, ->, color=green!40!gray] (j) to [out=200,in=340] (i);
\draw[line width=0.5mm, ->, decorate, decoration={snake, amplitude=.4mm}, color=yellow] (k) to [out=90,in=270] (i);
\draw[line width=0.5mm, ->, decorate, decoration={snake, amplitude=.4mm}, color=yellow] (l) to [out=90,in=270] (j);
\draw[line width=0.5mm, ->, decorate, decoration={snake, amplitude=.4mm}, color=yellow] (k) to [out=45,in=225] (j);
\draw[line width=0.5mm, ->, decorate, decoration={snake, amplitude=.4mm}, color=yellow] (l) to [out=135,in=315] (i);
\draw[line width=0.5mm, dashed, ->, color=red] (k) to [out=20,in=160] (l);
\draw[line width=0.5mm, dashed, ->, color=red] (l) to [out=200,in=340] (k);

\end{tikzpicture}
\end{center}
\caption{Solid edges are present, snake edges may be present, dashed edges cannot be present. We get a contradiction because too many elements are uniquely copied to sequences $i$ and $j$.}
\label{fig:impossible-graph-2}
\end{figure}
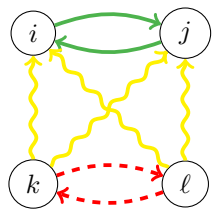

\end{enumerate}

    It concludes the proof of the theorem. It means that for any $(4, K, \alpha)$ PCD, it is true that $\frac{K}{\alpha} \le 2 - \varepsilon'$ for some fixed $\varepsilon' > 0$ independent of $K$ and, for example, no matching lower bound for $4$-dimensional Klee's measure problem is possible using this approach.
\end{proof}

\end{document}